\newtheorem{theorem}{Theorem}[section]
\newtheorem{lemma}[theorem]{Lemma}
\newtheorem{definition}[theorem]{Definition}
\newtheorem{corollary}[theorem]{Corollary}
\newtheorem{claim}[theorem]{Claim}
\newtheorem{remark}[theorem]{Remark}
\newcounter{prob}
\newtheorem{problem}[prob]{Problem}
\newcommand{\eq}[1]{\hyperref[eq:#1]{(\ref*{eq:#1})}}
\renewcommand{\sec}[1]{\hyperref[sec:#1]{Section~\ref*{sec:#1}}}
\newcommand{\thm}[1]{\hyperref[thm:#1]{Theorem~\ref*{thm:#1}}}
\newcommand{\lem}[1]{\hyperref[lem:#1]{Lemma~\ref*{lem:#1}}}
\newcommand{\cor}[1]{\hyperref[cor:#1]{Corollary~\ref*{cor:#1}}}
\newcommand{\itm}[1]{\hyperref[itm:#1]{\ref*{itm:#1}}}
\newcommand{\app}[1]{\hyperref[app:#1]{Appendix~\ref*{app:#1}}}
\newcommand{\dfn}[1]{\hyperref[dfn:#1]{Definition~\ref*{dfn:#1}}}
\newcommand{\fig}[1]{\hyperref[fig:#1]{Figure~\ref*{fig:#1}}}
\newcommand{\clm}[1]{\hyperref[clm:#1]{Claim~\ref*{clm:#1}}}
\newcommand{\stp}[1]{\hyperref[stp:#1]{Step~\ref*{stp:#1}}}
\newcommand{\prot}[1]{\hyperref[prot:#1]{Protocol~\ref*{prot:#1}}}
\newcommand{\prob}[1]{\hyperref[prob:#1]{Problem~\ref*{prob:#1}}}
\newcommand{\rmk}[1]{\hyperref[rmk:#1]{Remark~\ref*{rmk:#1}}}
\newcommand{\tabl}[1]{\hyperref[tabl:#1]{Table~\ref*{tabl:#1}}}
\newcommand{\bra}[1]{\langle #1 \vert}
\newcommand{\ket}[1]{\vert #1 \rangle}
\newcommand{\proj}[1]{\vert #1\rangle\!\langle #1\vert}
\newcommand{\tr}[0]{\mathrm{tr}}
\let\originalleft\left
\let\originalright\right
\renewcommand{\left}{\mathopen{}\mathclose\bgroup\originalleft}
\renewcommand{\right}{\aftergroup\egroup\originalright}
\newcommand{\A}[0]{\mathcal{A}}
\newcommand{\B}[0]{\mathcal{B}}
\newcommand{\C}[0]{\mathcal{C}}
\newcommand{\D}[0]{\mathcal{D}}
\newcommand{\E}[0]{\mathcal{E}}
\newcommand{\F}[0]{\mathcal{F}}
\newcommand{\I}[0]{\mathcal{I}}
\newcommand{\K}[0]{\mathcal{K}}
\renewcommand{\O}[0]{\mathcal{O}}
\renewcommand{\P}[0]{\mathcal{P}}
\newcommand{\R}[0]{\mathcal{R}}
\newcommand{\U}[0]{\mathcal{U}}
\newcommand{\X}[0]{\mathcal{X}}
\newcommand{\Y}[0]{\mathcal{Y}}
\DeclareMathOperator*{\Exp}{\mathbb{E}}
\DeclareMathOperator{\poly}{poly}
\DeclareMathOperator{\negl}{negl}
\newcommand{\class}[1]{\mathsf{#1}}
\newcommand{\prover}[0]{P}
\newcommand{\verifier}[0]{V}
\newcommand{\bit}{\{0,1\}}
\newcommand{\Id}[0]{\mathbbm{1}}
\newcommand{\BQP}[0]{\mathsf{BQP}}
\newcommand{\BPP}[0]{\mathsf{BPP}}
\newcommand{\BQNC}[0]{\mathsf{BQNC}}
\newcommand{\QNC}[0]{\mathsf{QNC}}
\newcommand{\dSSP}[0]{d\text{-}\mathsf{SSP}}
\newcommand{\xtest}[0]{\mathsf{X}\text{-}\mathsf{TEST}}
\newcommand{\ztest}[0]{\mathsf{Z}\text{-}\mathsf{TEST}}
\newcommand{\rigid}[0]{\mathsf{RIGID}}
\newcommand{\comp}[0]{\mathsf{COMP}}
\newcommand{\query}[0]{\mathsf{QUERY}}
\newcommand{\Gen}[0]{\textsc{Gen}}
\newcommand{\Chk}[0]{\textsc{Chk}}
\newcommand{\Inv}[0]{\textsc{Inv}}
\newcommand{\Samp}[0]{\textsc{Samp}}
\newcommand{\Supp}[0]{\textsc{Supp}}
\newcommand{\CVQD}[0]{\mathsf{CVQD}}
\newcommand{\gatew}[1]{*+<.6em>{#1} \POS ="i","i"+UR;"i"+UL **\dir{-};"i"+DL **\dir{-};"i"+DR **\dir{-};"i"+UR **\dir{-},"i" }
\newmdtheoremenv[backgroundcolor=gray!5,
                 linewidth=0pt,
                 innerleftmargin=4pt,
                 innerrightmargin=4pt,
                 innertopmargin=-2pt,
                 innerbottommargin=4pt,
                 splitbottomskip=4pt]{protocol}[prot]{Protocol}
\let\oldprotocol\protocol
\renewcommand{\protocol}{\oldprotocol\normalfont}
\newlist{proto}{description}{1}
\setlist[proto]{align=right,labelindent=1em,labelwidth=1.5cm,leftmargin=!,itemsep=0pt}
\begin{document}
\title{Classical verification of quantum depth}
\author[1]{Nai-Hui Chia}
\author[2]{Shih-Han~Hung}
\date{}

\affil[1]{Department of Computer Science, Indiana University Bloomington}
\affil[2]{Department of Computer Science, University of Texas at Austin}

\maketitle
\begin{abstract}

    We present two protocols for classical verification of quantum depth. Our protocols allow a purely classical verifier to distinguish devices with different quantum circuit depths even in the presence of classical computation. We show that a device with quantum circuit depth at most $d$ will be rejected by the verifier even if the prover applies additional polynomial-time classical computation to cheat. On the other hand, the verifier accepts a device which has quantum circuit depth $d'$ for some $d'>d$.
    In our first protocol, we introduce an additional untrusted quantum machine which shares entanglements with the target machine. 
    Applying a robust self-test, our first protocol certifies the depth of the target machine with  information theoretic security and nearly optimal separation. 
    The protocol relies on the oracle separation problem for quantum depth by Chia, Chung and Lai [STOC 2020] and a transformation from an oracle separation problem to a two-player non-local game.
    Our second protocol certifies the quantum depth of a single device based on quantum hardness of learning with errors. 
    The protocol relies on the noisy trapdoor claw-free function family and the idea of pointer chasing to force the prover to keep quantum coherence until all preceding message exchanges are completed.  
    To our knowledge, we give the first constructions for distinguishing hybrid quantum-classical computers with different circuit depths in unrelativized models.

\end{abstract}

\section{Introduction}
Quantum circuit depth is an essential consideration when evaluating the power of near-term quantum devices.
Quantum computers with many qubits have been recently implemented~\cite{IBM,GOOGLE,IonQ,Rigetti}; however, these computers have limited quantum circuit depth due to the noisy gates and short coherence time. Hence, how to leverage the power of these small-depth quantum devices becomes a practical challenge as well as a fascinating question in quantum complexity theory.

Indeed, Aaronson and Chen showed that small-depth quantum computers can demonstrate so-called ``Quantum Supremacy''~\cite{AC17} on the random circuit sampling problem, which means that quantum computers can efficiently solve the problem that is intractable for classical machines. Arute et al.~\cite{Arute2019} reported the results of experiments on demonstrating quantum supremacy by using superconducting quantum computers of Google.\footnote{New classical algorithms are found for solving the problem in few days (by estimation)~\cite{IBM_supremacy}, which implies that random circuit sampling of the size in the experiment in~\cite{Arute2019} might not be classically intractable. However, even these new classical algorithms are slower than the quantum one (that solves the problem in $200$ seconds); therefore, the experiments showed quantum advantages on the problem.} 
In the near term, the coherence time seriously limits the usable lifespan of quantum states. Thus information processing with a small and noisy quantum device has become a central topic in field of quantum computing.

Among the computational models that use small quantum devices, hybrid quantum-classical computing that interleaves classical computers with quantum devices is a natural approach to use the power of small-depth quantum circuits. This hybrid approach has been gaining much attention recently and might be able to surpass the capabilities of classical machines on some real-world problems, such as molecular simulation~\cite{MRBG16}, optimization problems~\cite{FGG14}, etc. 
Notably, Cleve and Watrous~\cite{CW00} proved that the quantum Fourier transform can be implemented in logarithmic quantum depth in this model. %
This implies that quantum algorithms for Abelian hidden subgroup problems, such as Shor's factoring algorithm, can be implemented in logarithmic quantum circuit depth. 

The results above indicate that quantum computers with circuit depth beyond certain thresholds are able to demonstrate quantum advantages. Seeing the possible applications of small-depth quantum devices, one might start wondering: %
\begin{itemize}
    \item[] \textsl{Can we certify if a computer has sufficient quantum depth for quantum advantages? } 
\end{itemize}

An answer to the question is to find some problem, give an efficient algorithm that only requires small-depth quantum circuits, and prove that no algorithm using strictly smaller quantum depth achieves the same time complexity. For instance, the aforementioned results~\cite{AC17,CW00} showed separations between small-depth quantum circuits and classical computers under plausible computational assumptions. 
That is, based on the assumption that a problem is classically hard, a device which can solve some problem in a reasonable time frame must exhibit quantum power.

Another possible approach is designing cryptographic protocols that demonstrate the \emph{quantumness} of a quantum device%
~\cite{Brakerski18,brakerski2020simpler,hirahara2021test,liu2021depth}. %
In these protocols, the classical verifier sends the description of a cryptogrpahic hash function $f$ and random coins to challenge the prover to answer information about $f$. 
It is guaranteed that only a prover which performs quantum computation will successfully answer these challenges with high probability. 
While these protocols seem to be satisfying proposals for demonstrating quantumness, there is a caveat: for a quantum prover to succeed, it is required to evaluate $f$ coherently, and thus the implementation of $f$ with quantum gates sets a lower bound on the resource requirement.
To address the issue, in subsquent works, Hirahara and Le Gall~\cite{hirahara2021test} and Liu and Gheorghiu~\cite{liu2021depth} independently showed that these protocols only requires a hybrid computation that only uses a constant-depth quantum circuit using different approaches.

It is worth noting that these two approaches, in state of the art, mainly focus on distinguishing quantum computers from classical ones. 
They do not directly translate into ones that separate quantum computers with different quantum resources. It is unclear if we can show that these protocols or problems cannot be tackled using smaller quantum depth. %

In this work, we give ``fine-grained'' solutions to the question in the following scenario: Suppose Bob claims that he has a computer with quantum circuit depth larger than $d$. 
Can Alice, who only has a classical machine, catch Bob is cheating if Bob only has quantum circuit depth at most $d$? 
Of course, Alice shall assume that Bob might use a powerful classical machine to cheat. Here, we are actually asking for protocols that allow a classical verifier to verify if a prover has a quantum circuit with depth at least $d$ in the presence of the prover's powerful classical machine. We call such protocols \emph{Classical Verification of Quantum Depth} (CVQD). %
\vspace{-3mm}
\paragraph{The problem for separating quantum depth.} Chia, Chung, and Lai~\cite{CCL19} introduced the $d$-Shuffling Simon's Problem Problem ($d$-SSP) that separates $d$- from $(2d+1)$-depth quantum circuits in the presence of polynomial-time classical computation. One straightforward approach is to use this problem to certify quantum depth by asking the computer to solve $d$-SSP. However, this approach does not lead to a solution to classical verification of quantum depth since $d$-SSP is an oracle problem that requires quantum access to the oracle for efficient quantum algorithms. Therefore, we need new ideas for our purpose.

\vspace{-3mm}
\subsection{Main results}

In this work, we give an affirmative answer to the question by showing two $\CVQD$ protocols that can distinguish quantum circuits with different depth in the presence of polynomial-time classical computation. We first give definitions of the two $\CVQD$ protocols that we consider in this work. 

We consider the setting where a single quantum machine is being tested by a classical verifier.
The verifier should reject if the quantum depth no more than $d$, and accept if the quantum depth is at least $d'>d$.
The machines are promised to be in one of the cases.
\begin{definition}[$\CVQD(d,d')$, informal]
\label{dfn:cvqd_1p}
Let $d,d'\in \mathbb{N}$ and $d'>d$. Let $\prover_A$ be a bounded-depth quantum circuit with classical polynomial-time computation. %
Let $V$ be a classical verifier. A $\CVQD(d,d')$ protocol that separates quantum circuit depth $d$ from $d'$ satisfies the following properties:
\begin{itemize}
    \item \textup{\bf Completeness:} If $P_A$ has quantum circuit depth at least $d'$, then $\langle V,P_A\rangle$ accepts with probability at least $2/3$. 
    \item \textup{\bf Soundness:} If $P_A$ has quantum circuit depth at most $d$, then for any polynomial-time $P_A$, $\langle V,P_A\rangle$ accepts with probability at most $1/3$. 
\end{itemize}
\end{definition}

We also consider protocols that consist of two provers which are not allowed to communicate with each other, but may share entanglements. 
In this setting, one prover $\prover_A$ is the target machine being tested.
We add another prover $\prover_O$ to help certify the quantum depth, but neither of the provers is trusted by the classical verifier.

\begin{definition}[$\CVQD_2(d,d')$, informal]
\label{dfn:cvqd_2p}
Let $d,d'\in \mathbb{N}$ and $d'>d$. Let $P_A$ be a bounded-depth quantum circuit with classical polynomial-time computation. %
Let $P_O$ be an unbounded quantum prover and $V$ be a classical verifier. A $\CVQD_2(d,d')$ protocol that separates quantum circuit depth $d$ from $d'$ satisfies the following properties: %
\begin{itemize}
    \item \textup{\bf Non-locality:} $P_O$ and $P_A$ share arbitrarily many EPR pairs and are not allowed to communicate with each other once the protocol starts. 
    \item \textup{\bf Completeness:} If $P_A$ has quantum circuit depth at least $d'$, then there exists $P_O$ and $P_A$ such that $\langle V,P_O,P_A\rangle$ accepts with probability at least $2/3$.
    \item \textup{\bf Soundness:} If $P_A$ has quantum circuit depth at most $d$, then for any $P_O$ and polynomial-time $P_A$, $\langle V,P_O,P_A\rangle$ accepts with probability at most $1/3$. 
\end{itemize}
\end{definition}

In both definitions, the verifier accepts if $\prover_A$ has a quantum circuit with depth at least $d'$,
and rejects any dishonest prover which might interleave its small-depth quantum circuit (depth at most $d$) with a polynomial-time classical algorithm. %
As defined by Chia, Chung and Lai \cite{CCL19}, there are two schemes, where the quantum process interleaves a quantum machine with a classical one, called $d$-CQ and $d$-QC schemes.
Briefly, the $d$-CQ scheme allows a classical algorithm to query a $d$-depth quantum circuit polynomially many times, and the $d$-QC scheme allows a $d$-depth quantum circuit to access polynomial-time classical algorithms after each layer of $1$-depth circuit. We aim to design protocols to rule out cheating provers using both schemes. 

We show there exist constructions of $\CVQD$ and $\CVQD_2$. In particular, we show the following result. 
\begin{theorem}[Informal]
\label{thm:informal_1}
Let $d\in \mathbb{N}$.
\begin{enumerate}
    \item There exists a two-prover $\CVQD_2(d,d+3)$ protocol $\langle V,P_A,P_O\rangle$ that is unconditionally secure with inefficient honest $\prover_O$ and $V$. 
      Moreover, honest $\prover_O$ and $V$ can be efficient assuming the existence of quantum-secure pseudorandom permutation (qPRP). 
    \item For polynomially bounded function $d$ and constant $d_f$, there exists a $\CVQD(d,d+d_f)$ protocol under the \textsl{QLWE} assumption.
\end{enumerate}
\end{theorem}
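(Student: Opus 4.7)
The plan is to establish the two parts independently, using separate techniques: a reduction to a non-local game built from the oracle separation of \cite{CCL19} for the two-prover protocol, and a pointer-chasing protocol over noisy trapdoor claw-free functions for the single-prover construction.

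For the two-prover part, I would start from the $d$-SSP problem, whose solution requires a $(2d+1)$-depth quantum algorithm with quantum oracle access, while any $d$-depth hybrid algorithm fails. The barrier to direct use is that a classical $V$ cannot supply a quantum oracle. My plan is to have $P_O$ \emph{host} the oracle via shared EPR pairs: $P_A$ issues queries by acting on its half of a query register, while $P_O$, prompted by $V$'s classical challenges, measures in bases that effectively teleport the correct oracle responses into $P_A$'s workspace. The verifier interleaves the transcript with a robust self-test (for example a CHSH-style or Mayers--Yao test) on a random subset of rounds, so that any winning strategy forces $P_O$'s measurements to match the ideal ones up to a local isometry. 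Completeness is witnessed by running honest $d$-SSP inside this entanglement-mediated oracle simulation, which uses at most $d+3$ layers once the query teleportation is added. Soundness reduces a winning $d$-depth $P_A$ to a $d$-depth oracle solver for $d$-SSP, contradicting the CCL19 lower bound. Efficient honest $P_O$ and $V$ under qPRP follow by replacing the uniformly random permutation in $d$-SSP with a pseudorandom one that both parties can evaluate.

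For the single-prover part, I would build a multi-round protocol from a noisy trapdoor claw-free function family (NTCF) in the style of \cite{Brakerski18,brakerski2020simpler}. Each round begins with $V$ sending an NTCF key whose trapdoor depends on the preceding round's response, so that $P_A$ must commit to an NTCF state before the next key is revealed. An honest $(d+d_f)$-depth prover prepares the claw superposition, measures the image register to obtain a standard-form response, and proceeds to the next round, using only $d_f$ extra layers beyond the base witness computation. A cheating $d$-depth prover, by contrast, cannot both measure the current round to respond and keep coherent amplitudes for the next round, because measurement collapses the superposition and reconstructing it requires another $d+1$ layers, which pointer chasing repeatedly demands. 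Soundness is argued by a hybrid argument: in the $i$-th hybrid, replace the $i$-th round's coherent state by a classical commitment and invoke the adaptive hardcore bit property of NTCF to conclude the cheater fails that round.

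The main obstacle in the two-prover construction will be quantitative robustness: the self-test only certifies $P_O$'s behaviour up to $\varepsilon$ in trace distance, yet the CCL19 lower bound is a computational statement about exact quantum depth, so I must show that an $\varepsilon$-approximate strategy induces a genuine $d$-depth oracle solver for a slightly perturbed $d$-SSP instance and that the lower bound survives this perturbation. In the single-prover construction, the hard part is giving a clean reduction that simultaneously respects depth bounds on the quantum subroutine and allows unbounded classical computation between layers; in particular, I must argue that the LWE-based adaptive hardcore bit continues to hold conditioned on a transcript that encodes prior quantum measurement outcomes, so that classical post-processing alone cannot bridge the missing depth across rounds.
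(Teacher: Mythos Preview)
Your overall architecture is in the right neighborhood, but there are two concrete gaps, one in each part.

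\textbf{Two-prover part.} The claimed $d$ versus $d+3$ separation does not follow from ordinary $\dSSP$. With standard oracle access, simulating one query to $f$ costs $d+1$ sequential queries to compute $f_d\circ\cdots\circ f_0$ \emph{plus} $d$ more to uncompute the intermediate values, so the honest algorithm needs depth $2d+3$ (including the two Hadamard layers). Your protocol as described would only yield $\CVQD_2(d,2d+3)$. The paper closes the gap by introducing a new problem, \emph{in-place} $\dSSP$, where the oracle acts as $\ket{x}\mapsto\ket{f_i(x)}$ so that intermediate registers are erased automatically and no uncomputation is needed; this is what brings completeness down to $d+3$. You would also need to re-prove the depth-$d$ lower bound for the in-place oracle model (the paper adapts the one-way-to-hiding argument with a modified shadow). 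Separately, your description of the self-test is structurally inverted relative to what makes the security go through: in the paper $P_O$ actually \emph{applies} the oracle circuit (verified via the Broadbent $X$/$Z$-tests), while the rigidity test certifies $P_A$'s single-qubit measurements, because $P_A$ simultaneously plays the Leash-protocol role of PV. A crucial subtlety you do not address is that testing every query leaks the round type across rounds; the paper resolves this by testing only one uniformly chosen query and trusting the rest, at the cost of a $1/\poly(q)$ completeness--soundness gap that is later amplified.

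\textbf{Single-prover part.} Your protocol has the verifier send each NTCF key only after seeing the previous response. The paper instead sends all $d+1$ keys upfront; the sequential dependence is carried entirely by the challenge bits $c_i$ in Step~3. More importantly, your soundness sketch (a hybrid that replaces the $i$-th coherent state by a classical commitment and invokes the adaptive hardcore bit) does not obviously work: the prover's state may be entangled across coordinates, and you have not explained why conditioning on a quantum transcript preserves the hardcore-bit guarantee. The paper's argument is different and cleaner: first show that each round individually must consume at least one unit of quantum depth (otherwise rewind on the classical pre-challenge state to extract both a preimage and an equation for that round's function); then, since there are $d+1$ rounds but only $d$ depth, by pigeonhole there is a round $j$ after which the prover holds a purely classical state $\sigma_j$; finally rewind on $\sigma_j$ with both values of $c_{d+1}$ to break the adaptive hardcore bit for $f_{d+1}$. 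This pigeonhole-plus-rewinding structure is the missing idea in your proposal.
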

Here, the \textsl{QLWE} assumption assumes that the Learning With Error (LWE) problems in hard for any quantum polynomial-time algorithms.\footnote{In fact, it is sufficient to assume that QLWE is hard for a $d$-depth hybrid machines.}
The constant $d_f$ is the quantum circuit depth for implementing a particular function.

\paragraph{Comparing the two results in \thm{informal_1}.} The two results in \thm{informal_1} are incomparable. The second result achieves single-prover $\CVQD(d,d+d_f)$ under QLWE. It is worth noting that we only know that $\prover_A$ that implements $(d+d_f)$-QC schemes can be accepted in this protocol; in contrast, $d'$-CQ schemes might require $d'$ to be larger than $d+d_f$ to be accepted. On the other hand, although the first result (a construction of $\CVQD_2(d,d+3)$) requires an additional (and untrusted) quantum prover, it has the following advantages: 
(1) its separation is nearly optimal ($d$ versus $d+3$), 
(2) it achieves information theoretic security, 
(3) $\prover_O$  and the verification can be made efficient by only assuming the existence of qPRPs in a query model, and 
(4) $\prover_A$ that implements either $(d+3)$-CQ or $(d+3)$-QC schemes can be accepted. 

To prove the first result in \thm{informal_1}, we provide a framework that transforms a quantum oracle separation into a two-prover protocol. 
\begin{theorem}[Informal]
\label{thm:informal_framework}
Let $\mathcal{C}$ and $\mathcal{C}'$ be two complexity classes. Let $L^{O}$ be an oracle problem such that $L^{O}\in \mathcal{C}^O$ and $L^{O}\notin \mathcal{C}'^O$. Then, there exists a two-prover protocol $\langle V,P_A,P_O\rangle$ two real numbers $c,s\in[0,1]$ satisfying $c-s=1/\poly(n)$ for size $n$ of input such that the following conditions hold.
\begin{itemize}
    \item \textup{\bf Completeness:} If $P_A$ can solve problems in $\mathcal{C}$, then there exists $P_O$ such that $\langle V,P_A,P_O\rangle$ accepts with probability at least $c$.
    \item \textup{\bf Soundness:} If $P_A$ can only solve problems in $\mathcal{C}'$, then for any $P_O$, $\langle V,P_A,P_O\rangle$ accepts with probability at most $s$. 
    \item \textup{\bf Classical verification:} $V$ is classical, and the runtimes of $V$ and the honest $P_O$ depend on the number of queries for solving $L^{O}$ and the complexity for implementing $O$.
\end{itemize}
\end{theorem}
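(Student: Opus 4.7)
The plan is to build the two-prover protocol directly on top of the $\mathcal{C}$-algorithm $\mathcal{A}^O$ that witnesses $L^O \in \mathcal{C}^O$. The target prover $P_A$ will execute $\mathcal{A}$'s non-oracle computation within its $\mathcal{C}$ budget, while the helper prover $P_O$, allowed to be unbounded, plays the role of the oracle $O$. Shared EPR pairs together with the no-communication constraint serve to glue these two pieces together so that (i) an honest pair can faithfully replay $\mathcal{A}^O$ on the verifier's chosen input, and (ii) no cheating pair can boost $P_A$'s effective computational power beyond $\mathcal{C}'^O$, which cannot decide $L^O$ by hypothesis.

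Concretely, I would have the verifier sample $x$ from a balanced distribution over YES and NO instances of $L^O$ and forward it to $P_A$. $P_A$ simulates $\mathcal{A}^O(x)$ step by step, replacing each oracle invocation by a query-delegation subroutine in which $P_A$ prepares the query on a block of its shared EPR pairs, the verifier routes it to $P_O$, and $P_O$ applies $O$ and returns the answer through the verifier. Since the verifier mediates every handoff, the two provers never communicate directly, and the verifier remains classical provided the routed messages are classical strings of polynomial length. Completeness is then inherited from $L^O \in \mathcal{C}^O$: the honest execution succeeds with essentially the same probability as $\mathcal{A}^O$, so one can fix $c$ to match that probability and take $s = c - 1/\poly(n)$ as the target gap.

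For soundness my plan is a simulation argument. Given any cheating pair $(P_A^*, P_O^*)$ with $P_A^* \in \mathcal{C}'$ whose acceptance probability exceeds $s$, I would extract an algorithm $B \in \mathcal{C}'^O$ that decides $L^O$ with non-negligible advantage, contradicting $L^O \notin \mathcal{C}'^O$ and thereby pinning down $s$. The reduction $B$ runs $P_A^*$ internally, simulates the verifier's mediation, and each time $P_A^*$ would delegate a query to $P_O^*$, $B$ instead feeds the query to its own oracle $O$ and returns the result. The crux is to show that this substitution changes the acceptance probability only by a small amount, which reduces to showing that the no-communication constraint forces $P_O^*$'s behaviour to look, from $P_A^*$'s viewpoint, like honest oracle access on the queries $P_A^*$ actually makes.

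The main obstacle, and where most of the technical work will live, is formalising the claim that $P_O^*$'s role reduces without loss to an oracle. A dishonest $P_O^*$ may entangle its responses with its internal state, return history-dependent or non-unitary answers, or attempt to smuggle extra computation into $P_A^*$'s run, none of which are ruled out by non-locality alone. I would address this by either (a) invoking a self-testing / rigidity theorem to pin $P_O^*$ down to an honest-like strategy on a high-probability subset of queries, or (b) interleaving a small number of test rounds in which the verifier demands $O(y)$ for a freshly sampled random $y$ and checks consistency, using an efficient checker built from qPRPs when available as in \thm{informal_1}, and then transferring consistency on test rounds to near-correctness on actual queries through a hybrid argument. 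Ensuring that this test-and-use sandwich neither inflates $P_A$'s depth nor creates any post-hoc information flow between the provers is the delicate step, and it will determine the precise polynomial in the $1/\poly(n)$ gap.
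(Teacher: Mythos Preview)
Your high-level architecture---$P_A$ runs the algorithm, $P_O$ plays the oracle via teleportation, the classical verifier interleaves tests---matches the paper, and you correctly locate the crux in showing $P_O^*$'s channel is close to the true oracle so that your simulation reduction goes through. But both of your proposed test mechanisms have real gaps. Option (b), spot-checking $O(y)$ on a freshly sampled classical $y$, cannot certify $P_O$'s behavior on \emph{superposition} inputs: a cheating $P_O$ can answer every classical probe perfectly while applying an arbitrary channel to the teleported quantum queries used in the actual computation. (The qPRP in \thm{informal_1} is used only to make $V$ and honest $P_O$ efficient; it gives the verifier no way to check quantum outputs and plays no role in soundness.) Option (a) points in the right direction but is not a plan: there is no off-the-shelf rigidity statement that directly certifies that one party applies a prescribed multi-qubit unitary $O$.

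The paper's actual mechanism is the Broadbent $X$/$Z$-test embedded in the Leash protocol. The verifier secretly substitutes a one-time-padded $\ket{0^n}$ or $\ket{+^n}$ for the real query; $P_O$ cannot distinguish these from an encrypted teleported state, and honest application of $O$ acts as identity on such dummies up to a classically trackable key update, so the verifier can check the decrypted output bitwise. \thm{broadbent} then forces $P_O$'s channel to be diamond-close to $O$, which is exactly the closeness your simulation argument needs. This exposes a second gap in your plan: the $T$-gadgets in $P_O$'s computation of $O$ require single-qubit measurements that the classical verifier cannot perform, so the paper has \emph{$P_A$} perform them (playing the PV role from the Leash protocol) and adds a separate $\rigid$ test (\thm{cliff}) to certify $P_A$'s measurements. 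Your proposal verifies only $P_O$; without also certifying $P_A$, the two provers can collude so that $P_A$ feeds $P_O$ measurement outcomes that make the $X$/$Z$ tests pass while $P_O$ applies a garbage channel. Finally, the paper tests only one uniformly random query and runs pure computation with probability $\Theta(1/q)$, because testing every query would leak the round type to $P_A$ across iterations; this single-random-test structure is what produces the $1/\poly(q)$ completeness--soundness gap.
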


We can transform $\dSSP$ (an quantum oracle problem by Chia, Chung and Lai~\cite{CCL19} for separating quantum depth) into a two-prover $\CVQD$ protocol under the framework in \thm{informal_framework}. However, the separation is not as good as in \thm{informal_1}. We further show that by modifying the original $\dSSP$, the separation can be improved to $d$ versus $d+3$. We call the new problem ``in-place $\dSSP$.'' %
Then, we transform in-place $\dSSP$ into a two-prover $\CVQD(d,d+3)$ protocol following \thm{informal_framework}.

\subsection{Technical overview}

In this section, we give a brief overview for proving the main theorems. 

\subsubsection{A two-prover protocol with optimal depth separation}
\paragraph{The problem $d$-SSP for separating quantum depth.} 
Our protocol can be seen as a two-player instantiation of the algorithms for solving $d$-SSP in~\cite{CCL19}, an oracle problem that distinguishes $d$- from $(2d+1)$-depth quantum circuits. 

The problem is a shuffled version of the Simon's problem.
Recall that for the ``plain'' Simon's problem, a constant-depth algorithm is sufficient to output the hidden shift.
To turn the problem into one that certifies large quantum query depth, Chia, Chung and Lai proposed the $d$-Shuffling Simon's problem ($\dSSP$) \cite{CCL19}.
The algorithm is given oracle access to $d+1$ functions $f_0,\ldots,f_d$, where $f_1,\dots,f_{d-1}$ are random permutations on an exponentially larger set, and the last function $f_d$ is a $2$-to-$1$ function such that $f_d\circ\cdots\circ f_2\circ f_1(x) = f(x)$ for a Simon's function $f$.
We call the functions $f_0,\ldots,f_d$ to be a $d$-shuffling of a Simon's function $f$.
The task is to find the hidden shift. 

It is obvious that $\dSSP$ remains easy for a $(2d+1)$-depth quantum algorithm which simulates a query to $f$ using two queries to each function in $\{f_0,\ldots,f_{d-1}\}$ and one query to $f_d$: 
first query $f_0,\ldots,f_d$ in sequence to get 
\begin{align}
    \ket{x,y}\mapsto\ket{x,f_0(x),f_1\circ f_0(x),\ldots,f_{d-1}\circ \cdots\circ f_0(x), f(x)\oplus y},
\end{align}
and then query the first $d$ functions in the reverse order to reset the intermediate registers to back zero states. On the other hand, any polynomial-time algorithm with quantum depth at most $d$ cannot solve $\dSSP$. This follows from the intuitions that one needs to make $(d+1)$-sequential quantum queries to $f_0,\dots,f_d$ in order for evaluating $f$ on a uniform superposition, %
and only $f_d$ in an exponentially small random subset of the domain has information about $f$. Thus, any polynomial-time algorithm without sufficient quantum depth cannot even evaluate $f$ in superposition.

To turn the problem into a protocol that certifies quantum depth, an idea is to have the verifier $V$ play the role of the oracle, and checks if prover $P_A$ outputs the hidden shift.
The resulting protocol is quite straightforward: the prover $P_A$ is allowed to perform arbitrary quantum computation (subject to its quantum resources) between message exchanges with the verifier.
In the intermediate rounds, $V$ computes the quantum circuits of the oracles on the state given by $P_A$, and sends the resulting quantum state back.
At the end, the verifier accepts if $P_A$ outputs the hidden shift.
The analysis of the protocol is also straightforward.
As long as the verifier implements the quantum-accessible oracles $f_0,\ldots,f_d$ reliably between the computation performed by the prover, the completeness and soundness directly follows from the result of Chia, Chung and Lai \cite{CCL19}.

However, this approach has two drawbacks: First, the verifier needs to reliably implement a large QRAM to support quantum access to the oracle.
This requires a reliable large-scale quantum computer that can solve problems in quantum polynomial time.
Moreover, it requires reliable quantum communication between the prover and the verifier.
None of the requirements seems to be within the reach in the near future.

In this paper, we give constructions that allows a purely classical veriifer to certify quantum depth. 
Our first protocol is to rely on the technique of self-testing to certify the untrusted quantum servers sharing entanglements.
In particular, we apply a sequence of transformations from the aforementioned straightforward approach into one that has a weak requirement on the verifier, i.e., it runs in probabilistic polynomial time.
We briefly introduce the techniques as follows.

\vspace{-3mm}
\paragraph{Delegating the oracle to another quantum prover.} To achieve purely classical verification, we introduce another untrusted prover, denoted $P_O$, which may share entanglements with $P_A$ but they are not allowed to communicate with each other once the protocol starts. 
The verifier delegates the oracle computations to $P_O$, and checks if $P_A$ outputs the hidden shift in the end.
To make ``queries,'' $P_A$ forwards a quantum state by quantum teleportation.

To ensure that $P_O$ behaves honestly, we modify the EPR protocol by Broadbent~\cite{Bro18} to verify the computation of $P_O$.
To understand how this works, let us recall some idea of the protocol.
The original Broadbent protocol allows a weakly quantum verifier to delegate a quantum computation to the prover. To show that the prover has to be honest, the computation is made indistinguishable from two tests ($X$-test and $Z$-test). 
These tests are used to check if the prover's attack is trivial on the single bit the verifier aims to learn from the prover.

However, to apply the protocol to our problem, there are two caveats that remain to solve. 
First, the Broadbent protocol is designed for verifying a $\BQP$-complete language. 
An instance in the language is a classical description of a unitary $U$ with the promise that sampling the first qubit of $U\ket{0}$ by performing a standard basis measurement yields a 0 with probability at least 2/3, or at most 1/3.
In our setting, we do not have such a promise.
Secondly, the protocol only guarantees that the output $b$ is an encryption of the random variable close to sampling $U\ket{0}$ by performing a standard basis measurement on the first qubit, provided the prover passes the tests with high probability.
For our purposes, we would need to show if $P_O$'s output state $\rho_i$ on each query $\ket{\psi_i}$ is close to $O\ket{\psi_i}$ for each query $i\in\{0,1,\ldots,d\}$ in a reasonable metric.

We show that with a modification, our variant of the Broadbent protocol is rigid in the sense that every prover that is accepted in our variant with probability $1-\epsilon$ must output a state $\rho_i$ which is $O(\epsilon)$-close the ideal state $O\ket{\psi_i}$ in trace distance.
The modification requires a quantum channel which allows a transmission of $\poly(n)$ qubits between $P_O$ and $V$, but the requirement is not necessary when we turn the protocol into a purely classical verification.

\vspace{-3mm}
\paragraph{Dequantizing the verification.} 
We further dequantize the quantum verification and communication by applying the Verifier-on-a-Leash protocol (also called the Leash protocol) by Coladangelo, Grilo, Jeffery and Vidick~\cite{CGJV19}.  %
In a high level, the idea is to add another prover to perform the measurements by the quantum verifier in the Broadbent protocol, and check if the added prover behaves honestly.

To transform an oracle separation into purely classical verification, one possible approach is to add a third prover $P_V$ to help certify that $P_O$ behaves as intended. 
More concretely, the classical verifier asks $P_A$ to perform the computation between queries, and $P_O$ to apply the quantum circuit of the oracle. 
To check $P_O$ behaves as intended, a third player $P_V$ is added to perform the measurements in the bases determined from the rules of the Broadbent protocol. In our settings, none of the provers are assumed to be trusted. Thus it is necessary to verify $P_V$ performs the measurements in the correct bases. Thus the verifier challenges $P_V$ and $P_O$ to run either the protocol for verifying $P_O$ or a rigidity test to certify $P_V$, and the two choices are made indistinguishable to $P_V$'s viewpoint. 

However, this approach does not work directly. 
More specifically, for the security to hold, it is crucial that $P_O$ does not distinguish the computational round and the test rounds. %
In this approach, $P_O$ interacts with $P_A$ via quantum teleportation to implement the original query algorithms in the computation round, whereas to certify $P_O$'s behavior, the classical verifier must ask $P_O$ to interact with $P_V$ in the test round. 
Hence, $P_O$ can determine the round type and cheat. 
Moreover, another drawback with this approach is that it requires three provers.

We can fix the issues about the aforementioned three-prover protocol by \emph{asking $P_A$ to play the role of $P_V$ simultaneously}. 
To explain how this works, we consider the following protocol for a single-query algorithm:
Initially, the verifier chooses to run the computation, X-test, Z-test or rigidity test.
The prover $P_A$ prepares an (arbitrary) initial $n$-bit quantum state $\ket{\psi_0}$ and teleports three states $\ket{\psi_0},\ket{0^n},\ket{+^n}$ to disjoint random subsets of $P_O$'s halves of EPR pairs (the other halves are held by $P_A$).
Note that the subsets are chosen by the verifier, but it does not reveal the underlying states.
If any of the first three tests is chosen, the prover $P_O$ performs the computation $O$ on one of the subsets specified by the verifier. 
Note that since these three states are encrypted with quantum one-time pad, $P_O$ cannot distinguish them and thus the round type. 
To perform the computation $O$, a set $S$ of EPR pairs shared by $P_O$ and $P_A$ is used to implement gadgets for computing $O$. To be more specific, the verifier asks $P_A$ to perform measurements on $S$ in random bases, and chooses a subset (of $S$) on which $P_A$ is specified to perform measurements in desirable bases determined by the rules of the Broadbent protocol according to the round type. The verifier then tells $P_O$ to use the subset in $S$ to compute $O$. The random-basis measurement on $S$ is to certify the behavior of $P_A$ in the rigidity test. Roughly speaking, when the rigidity test is chosen, the verifier can check (with help of $P_O$) if $P_A$ performs the measurements on the EPR pairs in the random bases chosen by the verifier. Since $P_A$'s behavior for all four tests are measurements in random bases, whether a rigidity test is executed is unknown to $P_A$. Note that although $P_O$ can learn if a rigidity test is executed, it does not affect the security since $P_O$ has no chance to reveal this to $P_A$. 

However, there are a couple of issues that remain to address when considering multiple rounds of interaction between $P_O$ and $P_A$.
First, some tests running on more than one query can potentially reveal the type of the test.
More specifically, if the verifier chooses to run the rigidity test, then $P_O$ would certainly learn an application of the oracle unitary $O$ is not necessary for this round.
The prover $P_A$ can possibly detect the choice of the test by observing the input state and the resulting state using, say, a swap test. 
Furthermore, to reflect the actual performance of the query algorithm, it is crucial that with sufficiently large probability, no test has been applied throughout the protocol. 
This is because when the computation is not performed in this round, computation applied in the following rounds will not yield a useful result (e.g., outputting the hidden shift for $\dSSP$), even when the provers opt to follow the protocol honestly.
If the tests are nevertheless executed with very low probability, the provers may deviate from the protocol seriously.

We show that it suffices that the verifier randomly selects to certify one random query and trusts all the other queries, and with probability $\Theta(1/q)$, no test is executed for a $q$-query protocol. 
For the selected query, the verifier either asks $P_O$ to certify $P_A$'s measurements, or $P_A$ to certify $P_O$ performs the oracle unitary $O$ by running the test phases in our variant of the Broadbent protocol.
If the provers pass the test, the verifier accepts and terminates the protocol.
Since the knowledge of the round type for certifying a query can only lead to an attack on the following queries, verification on a random query can prevent these issues from breaking the soundness of the protocol.

\vspace{-3mm}
\paragraph{Putting things together.}
We then combine our aforementioned tests to turn an oracle separation problem into a two-player protocol.
In particular, we show that with a suitable choice of the weights of entering each test, the completeness-soundness gap shrinks by at most an inverse polynomial multiplicative factor in the number of queries.
More formally, we prove the implication by reduction. Suppose that in the protocol, there are provers $P_A,P_O$ such that 
$P_A$ is subject to its quantum resources and they break the soundness.
Then we construct a query algorithm which succeeds with sufficiently large probability to break the soundness guarantee in the associated relativized world.
Since the oracle separation problem distinguishes the quantum complexity classes,
the resulting protocol yields a completeness-soundness gap $1/\poly(q)$ for a $q$-query algorithm. 

Given that $\dSSP$ is a oracle separation problem between a hybrid $d$-depth and a hybrid $(2d+1)$-depth computation, we conclude that our transformation yields a construction of $\CVQD_2(d,2d+1)$ with gap $1/\poly(d)$.
We apply a sequential repetition to amplify the gap to constant.
The repetition itself does not require an increase of the quantum depth of $P_A$ since the same hybrid computation can be reused. %

\vspace{-3mm}
\paragraph{Efficient instantiation.}
We have shown that an oracle separation problem implies a two-player protocol that distinguishes hybrid quantum computation with different quantum depth. 
However, to succeed in the protocol honestly, $V$ must sample an oracle from a distribution $\D$ which is not known to be efficiently samplable, and $P_O$ must perform a quantum circuit that implements $O$.
In the problem $\dSSP$, the oracles consist of random permutations.
By a counting argument, most of the permutations does not have an efficient implementation.

To address the issue, we leverage oracle indistinguishability in the associated relativized world.
More concretely, suppose that for the distribution $\D$ of random $d$-shuffling of a random Simon's function, there is an efficiently samplable distribution $\D'$ which is indistinguishable from $\D$. Then in the two-player protocol, when the efficient verifier samples the oracle according to $\D'$, the soundness error is increased negligibly.
The idea for showing this directly follows from our proof for showing an oracle separation implies a two-player protocol.
For every query algorithm $\A$ that has small quantum depth, it succeeds with probability at most $p$ when the oracle is sampled from $\D$. Then replacing $\D$ with $\D'$, by the oracle indistinguishability, $\A$ succeeds with probability at most negligibly close to $p$.
Applying the transformation with $\D'$ yields a sound two-player protocol with efficient $P_O$ and $V$.

We show how to give a distribution $\D'$ using quantum-secure pseudorandom permutations (qPRP) against adversaries making queries to the permutation and its inverse.
In particular, to sample a pseudorandom $d$-shuffling of a Simon's function, first sample $d$ independent keys $k_0,\ldots,k_{d-1}$ from the key space of the pseudorandom permutation $P$ and let $f_i=P(k_i,\cdot)$ for $i\in\{0,1,\ldots,d-1\}$.
For the last function, again by a counting argument, not every Simon's function has an efficient implementation.
We observe %
that every Simon's function can be computed by composing a permutation and an efficiently computable function that is constant on every one dimensional affine subspace of the form $\{x,x\oplus s\}$.
This implies that sampling a random Simon's function can be done by sampling a random shift and a random permutation.
Replacing the random permutation with a pseudorandom permutation yields a pseudorandom Simon's function.

\vspace{-3mm}
\paragraph{A nearly optimal separation.}
As mentioned previously, the problem $\dSSP$ provides an oracle separation between $d$- and $(2d+1)$-depth quantum circuits in the presence of polynomial-time classical computation. 
Next we further improve the separation to distinguish $d$- from $(d+3)$-depth hybrid quantum computation. 
In particular, we modify the problem to allow a $(d+3)$-depth algorithm to succeed with high probability, while at the same time, it remains hard for a $d$-depth prover to learn the hidden shift.

First we recall that a $(2d+1)$-depth quantum algorithm is needed because to simulate a query to $f$, the algorithm queries $f_0,f_1,\ldots,f_d$ followed by queries to $f_{d-1},\ldots,f_0$ to uncompute the intermediate values. 
To avoid the need of extra depth for uncomputation, our idea is to replace the standard access to $f_0,\ldots,f_{d}$ with ``in-place oracles.'' 
In this model, the algorithm is given access to $\ket{x} \xmapsto{f_i} \ket{f_i(x)}$, and thus the intermediate queries have been erased automatically. 
While in-place oracle access to an arbitrary is not a unitary in general, 
in our case, perhaps fortunately, the functions $f_0,\dots,f_d$ are either permutations or 2-to-1 functions. It is clear that the in-place oracle access for permutations is a unitary. 
Furthermore, we modify the last function $f_d$ such that $f_d$ is bijective, but a depth $(d+1)$-depth algorithm can simulate a query to the underlying Simon's function $f$ with constant probability.
We call the same problem with in-place oracle access the \emph{in-place} $d$-Shuffling Simons Problem (in-place $\dSSP$, see \dfn{ipdssp}).
Finally, we show that that the in-place $\dSSP$ cannot be solved by any hybrid quantum-classical computers with quantum circuit depth at most $d$.

\subsubsection{Single-prover protocol from LWE}

The second protocol relies on the assumption that the Learning-with-Errors (LWE) problem is hard for quantum computers (also called the QLWE assumption). 
In a breakthrough~\cite{Brakerski18}, Brakerski, Christiano, Mahadev, Vazirani and Vidick showed that the QLWE assumption implies the existence of a noisy trapdoor claw-free function (NTCF). %
A function $f$ is trapdoor claw-free if it is 2-to-1, and given a pair $(x,y)$ such that $f(x)=y$, it is computationally intractable to find the other preimage of $y$.
Furthermore, the function $f$ is also equipped with a strong property called the adaptive hardcore bit property.
In a nutshell, the property states that no quantum adversary given access to a description of $f$ can output $(y,x,e)$ such that $x$ is a preimage of $y$ and $e\cdot (x_0+x_1)=0$ with probability non-negligibly better than 1/2, where $x_0,x_1$ are the preimages of $y$.
In contrast, there exist quantum processes which allows an efficient quantum device to output either $(y,x)$ or $(y,e)$.

This observation leads to a proof-of-quantumness protocol: 
the verifier on receiving $y$ requests the prover to present a preimage $x$ or an equation $e$. An efficient quantum prover can succeeds with nearly perfect probability. For proving classical hardness, the idea is that one can rewind a classical prover which succeeds with probability non-negligibly more than 1/2 to extract both $x$ and $e$ with non-negligibly probability: For every prover $\A$, let the state before receiving the challenge be a random variable $\sigma_y$.
The adversary challenges $\A$ to use the same state $\sigma_y$ to output both a preimage and an equation. 
Any prover $\A$ that wins the test with non-negligible advantage would imply that the adversary breaks the property.

In subsequent works, Hirahara and Le Gall \cite{hirahara2021test} and Liu and Gheorghiu \cite{liu2021depth} showed that the same protocol only requires a quantum prover of constant depth.
The ideas behind these constructions basically follow from presenting NTCFs that can be evaluated with constant quantum depth.

A proof-of-quantumness protocol can be viewed as a protocol which separates a prover of non-zero quantum depth from one of zero quantum depth (i.e., a classical device).
It seems natural to rely on the same hardness assumption to separate a high-depth quantum device from a low-depth one with the following protocol:
\begin{enumerate}
    \item The verifier samples the functions $f_1,\ldots,f_d$ and sends these functions to the prover.
    
    \item The prover outputs $y_1,\ldots,y_d$.
    
    \item For $i=1\ldots d$, the verifier sequentially samples a random bit $c_i$ which indicates the request to send a preimage $x_i$ or a equation $e_i$ for $y_i$.
    The verifier rejects if in any of the rounds the prover fails.
\end{enumerate}
In this protocol, the prover must increase its quantum depth by 1 in each round of Step 3, since the operation the prover performs depend on the challenge bit $c_i$, which depends on the previous message from the prover.
It is straightforward to see a $(d_f+d)$-depth prover succeeds with nearly perfect probability, where $d_f$ is the depth required for the evaluation of $f$.
However, to show the hardness for any small-depth device, since the device is no longer purely classical, the same rewinding argument does not directly apply. %

We formalize an observation that a $(d-1)$-depth prover cannot stay coherent throughout the protocol, and has to ``reset'' (i.e., to destroy all its coherence and to continue with a purely classical state) in an intermediate round $j$. 
Thus from round $i=j\ldots d$, the prover begins with an intermediate classical state $\sigma$, and responds with its quantum power.
To break the adaptive hardcore bit property, the reduction simulates the protocol to compute the state $\sigma$, and rewinds on $\sigma$ to compute both a preimage and an equation for $f_d$.

\vspace{-3mm}
\subsection{Discussion and open problems}
We give protocols that allows a classical verifier to distinguish quantum machines with different circuit depths and polynomial-time classical computation. 
Our first two-prover protocol can achieve nearly optimal separation and information theoretic security by adding an additional untrusted quantum helper, and the verification can be made efficient if quantum-secure pseudorandom functions exist. 
The second protocol achieves a single-prover $\CVQD$ based on QLWE, and works for a slightly larger constant promise gap on the depth. %
We note that the two protocols we present in this paper are incomparable.
The first protocol makes no additional assumption to certify that the target machine runs in small quantum depth.
In contrast, the second requires no additional prover to achieve the same task based on a widely-held assumption. 

We include a few open questions.
First, the two-prover protocol has separation $d$ versus $d+3$ and the single-prover protocol has separation $d$ versus $d+d_f+1$. It is interesting to know if these separations can be improved.
Secondly, it would be interesting to know if we can directly instantiate $d$-SSP from (standard) computational assumptions. %
This is similar to the case for instantiating abelian hidden subgroup problems by factoring. 
If we can find such instantiation, then we can obtain a single-prover $\CVQD$ protocol that is different from the one in this work. 

In the single-prover $\CVQD$ protocol, we only know that an honest prover that implements ($d+d_f$)-QC schemes can succeed. It is open if a prover can convince the verifier by implementing a $(d+d_f)$-CQ scheme. %
Moreover, our single-prover verification scheme requires the use of randomized encoding of an NTCF family to achieve depth-efficient function evaluation, but since the number of qubits scales with the depth threshold, the honest prover would need a large space to succeed in the protocol.
Can we give a more space efficient protocol such that a demonstration of quantum depth can be implemented in a near-term quantum device?

Finally, the round complexity of our protocol scales with the depth.
In particular, to determine that a device has quantum depth no more than $d$, the round complexity is $O(d)$.
Can we give $\CVQD$ protocols for which the round complexity does not scale with the depth?

\vspace{-3mm}
\paragraph{Related work.} In an independent work~\cite{HLG22}, Atsuya Hasegawa and Fran{\c{c}}ois Le Gall defined the $d$-Bijective Shuffling Simon’s Problem that improves the quantum depth separation in~\cite{CCL19} to $d$ versus $d+1$ using the similar idea as in-place $d$-SSP (\dfn{ipdssp}). For in-place $d$-SSP, the gap is $d$ versus $d+1$ if we consider the same models as in Definition 3.8 and Definition 3.10 in~\cite{CCL19}. However, the models in~\cite{CCL19} count the depth of quantum queries to the oracle. In this work, we also count the two layers of Hadamard transforms at the beginning and the end of Simon's algorithm. This results in the gap $d$ versus $d+3$ in our first result in \thm{informal_1} (see \thm{cvqd_2p_opt} and \cor{cvqd_2p_final} for formal statements). 

\subsection{Organization}
The rest of the paper is organized  as follows. 
\sec{preliminaries} includes the required technical background knowledge for this paper, and our modifications of the previous protocols which will be useful for our contributions.
\sec{epr_protocol} defines a transformation from a quantum oracle separation to a two-prover protocol that preserves completeness and soundness. 
\sec{leash_protocol} presents a framework that transforms a quantum oracle separation to a two-prover protocol with a classical verifier. 
\sec{CVQD} shows a protocol for classical verification of quantum depth under the framework developed in \sec{epr_protocol} and \sec{leash_protocol}. 
Finally, in \sec{cqd-lwe}, we present a new single-prover protocol from QLWE.

\section*{Acknowledgement}

We thank Scott Aaronson, Kai-Min Chung, and anonymous reviewers for helpful suggestions on an earlier version of this paper.
We also thank Atsuya Hasegawa and Fran{\c{c}}ois Le Gall for sharing their results \cite{HLG22} with us.
SHH acknowledges the support from Simons Investigator in Computer Science award, award number 510817.

\section{Preliminaries}\label{sec:preliminaries}

For finite set $\X$, we denote $x\gets_R\X$ the process of sampling a random variable $x$ uniformly from $\X$.
For distribution $\D$ over a finite set $\X$, we denote $x\gets_R\D$ the process of sampling a random variable $x\in\X$ according to $\D$.
For a classical or quantum process $\A$, we denote $y\gets\A(x)$ to specify that $\A$ on input $x$ outputs $y$.
A function $f:\mathbb N\to\mathbb R$ is negligible, denoted $f(n)=\negl(n)$, if there exists an integer $n_0$ such that for $n\geq n_0$, $f(n)\leq n^{-c}$ for every constant $c$.
In other words, $f$ if negligible if $f(n)=n^{-\omega(1)}$.
We use the notation $1_P$ to denote 1 if $P$ is true and 0 if $P$ is false.

\subsection{Oracle separation for quantum depth}

We first introduce the two models for interleaving $d$-depth quantum circuits and classical polynomial-time computation.

\begin{definition}[$d$-CQ scheme~\cite{CCL19}]
\label{dfn:dcq}
Let $k = \poly(n)$. Let $\A^1_c,\dots,\A^k_c$ be a sequence of classical polynomial-time algorithms and $\A^1_q,\dots,\A^k_q$ be a sequence of $d$-depth quantum circuits. A $d$-CQ scheme can be represented as following: 
\begin{align*}
    \A^k_c\circ (\Pi_{0/1}\circ \A^k_q)\circ\cdots\circ \A^2_c\circ (\Pi_{0/1}\circ \A^2_q)\circ \A^1_c\circ (\Pi_{0/1}\circ \A^1_q), 
\end{align*}
where, $\Pi_{0/1}$ is a measurement on all qubits in the computational basis. 
\end{definition}

\begin{definition}[$d$-QC scheme~\cite{CCL19}]
\label{dfn:dqc}
Let $k = \poly(n)$. Let $\A_c^0,\A_c^1\dots,\A^d_c$ be a sequence of classical polynomial-time algorithms and $\A^1_q,\dots,\A^d_q$ be a sequence of $1$-depth quantum circuits. A $d$-CQ scheme can be represented as following: 
\begin{align*}
    \A^{d}_c\circ (\Pi_{0/1}\otimes I)\circ \A^d_q\circ\cdots\circ \A^2_c\circ (\Pi_{0/1}\otimes I)\circ \A^2_q\circ \A^1_c\circ (\Pi_{0/1}\otimes I)\circ \A^1_q\circ \A_c^0, 
\end{align*}
where, $\Pi_{0/1}\otimes I$ is a computational basis measurement that only operates on part of the qubits. The input of $\A_q^i$ includes the output quantum state of $\A_q^{i-1}$ for $i=2,\dots,d$ and the classical information from $\A^{j}_c$ and the measurement outcome of $\A^j_q$ for $j<i$.  The input of $\A_c^i$ includes the measurement outcome of $\A_q^{i}$ and other classical information from $\A^j_c$ and $\A^j_q$ for $j<i$ for all $i\in[d]$.
\end{definition}

\begin{remark}\label{rmk:gateset}
In this work, we generally choose the universal gateset to be all one- and two-qubit gates. In particular, the impossibility results in \thm{informal_1} showing all $d$-CQ and $d$-QC schemes fail the $\CVQD$ protocols hold for any universal gateset with bounded fan-in gates. 
\end{remark}

Roughly speaking, $d$-CQ schemes allow a classical algorithm to access a $d$-depth quantum circuit polynomially many times; however, all the qubits of the quantum circuit need to be measured after each access (and thus no quantum state can be passed to following $d$-depth quantum circuits). On the other hand, $d$-QC schemes let a quantum circuit to access classical algorithms after each depth and pass quantum states to the rest of the circuits for at most $d$ depths.

In~\cite{CCL19}, the class of problems that can be solved by $d$-CQ schemes is defined as $\BPP^{\BQNC_d}$, and the class of problems that can be solved by $d$-QC schemes is defined as $\BQNC_d^{\BPP}$. 

Chia, Chung and Lai~\cite{CCL19} presented an oracle problem that can separate schemes in~\dfn{dcq} and~~\dfn{dqc} with different quantum circuit depths. We briefly introduce the oracle separation in the following. 

\begin{definition}[$d$-shuffling {\cite[Definition~4.1]{CCL19}}]\label{dfn:d-shuffling} 
Let $f:\{0,1\}^n \rightarrow \{0,1\}^n$ be any function. A $d$-shuffling of $f$ is defined by $\F:=(f_0,\ldots,f_d)$, where $f_0,\ldots,f_{d-1}$ are random permutations over $\bit^{(d+2)n}$.
The last function $f_d$ is a fixed function satisfying the following properties: let $S_d:=\{f_{d-1}\circ\cdots\circ f_0(x'):x'\in\{0,1\}^n\}$.
\begin{itemize}
    \item For $x\in S_d$, let $f_{d-1}\circ\cdots\circ f_0(x')=x$, and choose the function $f_d:S_d\to[0,2^n-1]$ such that $f_d\circ f_{d-1}\circ\cdots\circ f_0(x')=f(x')$.
    \item For $x\notin S_d$, $f_d(x)=\perp$.
\end{itemize}
\end{definition}

Then, we recall the definition of Simon's function. 
\begin{definition}[Simon's function]
  For a finite set $S$ and $s\in\bit^n$ (also called the hidden shift), the Simon's function $f:\bit^n\to S$ satisfies that $f(x)=f(x')$ if and only if $x'=\{x,x\oplus s\}$.
\end{definition}

The Simon's problem is to compute the hidden shift $s$ given oracle access to a Simon's function $f$. The quantum algorithm for Simon's problem uses one quantum query to sample a random vector $y$ satisfying $y\cdot s=0$. Making $O(n)$ queries suffices to find a generating set of the subspace $H=\{y:y\cdot s=0\}$ with overwhelming probability, and thus the hidden shift is uniquely determined from the generators. It is worth noting that any classical algorithm that finds $s$ with high probability requires $\Omega(\sqrt{2^n})$ queries even if the Simon's function is given uniformly randomly. A random Simon's function is defined as a function drawn uniformly from the set of Simon's functions from $\bit^n$ to $S$ and we choose $S=\bit^n$.

We now define the $d$-Shuffling Simon's problem ($\dSSP$) that separates $\BPP^{\BQNC_{2d+3}} \cap \BQNC_{2d+3}^{\BPP}$ from $\BPP^{\BQNC_{d}} \cup \BQNC_{d}^{\BPP}$ relative to an oracle. 
\begin{problem}[$d$-shuffling Simon's problem ($\dSSP$) {\cite[Definition~4.9]{CCL19}}]\label{prob:dSSP}
Let $n\in\mathbb{N}$ and $f:\bit^n\to\bit^n$ be a random Simon's function.
Given oracle access to the $d$-shuffling $\F:=\{f_0,f_1,\dots,f_d\}$ of $f$, the problem is to find the hidden shift $s$ of $f$.
\end{problem}

Chia, Chung and Lai showed the following theorem \cite{CCL19}.

\begin{theorem}[\cite{CCL19}]
\label{thm:dssp_ccl19}
Let $d = \poly(n)$. The $d$-SSP problem can be solved by $(2d+3)$-CQ and $(2d+3)$-QC schemes with oracle access to the $d$-shuffling oracle of $f$. Furthermore, for any $d'$-CQ and $d'$-QC schemes $\A$ with with oracle access to the $d$-shuffling oracle of $f$ and $d'\leq d$, the probability that $\A$ solves the problem is negligible. 
\end{theorem}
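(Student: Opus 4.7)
The plan is to prove the two parts separately: a $(2d+3)$-depth upper bound construction, and a depth lower bound against interleaved hybrid schemes of quantum depth at most $d$.

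For the upper bound, I first show how to simulate one coherent query to the Simon's function $f$ with a depth-$(2d+1)$ circuit using the standard oracles $U_{f_i}:\ket{x}\ket{y}\mapsto\ket{x}\ket{y\oplus f_i(x)}$: compute $f_0(x), f_1\circ f_0(x), \ldots, f(x)$ forward into ancillae in $d+1$ sequential oracle layers, XOR $f(x)$ into the output register on the last layer, and then uncompute the intermediate ancillae by querying $f_{d-1},\ldots,f_0$ in reverse, using $d$ additional layers. Wrapping this with a layer of Hadamards before and after yields a $(2d+3)$-depth circuit implementing one round of Simon's sampling. A $(2d+3)$-QC scheme runs this circuit with classical post-processing across $\poly(n)$ repetitions, while a $(2d+3)$-CQ scheme invokes the same circuit $\poly(n)$ times with full measurement at the end of each run and uses classical linear algebra over the collected samples $y_i$ satisfying $y_i\cdot s=0$; both succeed with overwhelming probability by the standard analysis of Simon's algorithm.

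For the lower bound, the strategy is to argue that any $d'$-depth hybrid scheme with $d'\leq d$ has negligible amplitude on the useful subdomain $S_d=\{f_{d-1}\circ\cdots\circ f_0(x'):x'\in\bit^n\}$ of $f_d$, and therefore obtains essentially no information about $f$. Since $|S_d|=2^n$ is embedded in a domain of size $2^{(d+2)n}$, a quantum state with non-negligible weight on $S_d$ must have been prepared by composing queries to $f_0,f_1,\ldots,f_{d-1}$ in sequence: intuitively, each intermediate random permutation $f_i$ is information-theoretically ``hidden'' outside the points on which it has already been queried, so producing a superposition supported on $\Supp(f_i\circ\cdots\circ f_0)$ requires first having queried $f_i$ coherently, contributing an additional layer of depth. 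Formalizing this via a compressed-oracle or record-register hybrid across the $d$ intermediate permutations, I would show inductively that after $k\leq d$ quantum depth layers, the adversary's joint state has negligible weight on inputs to $f_d$ that lie in $S_d$ outside a $\poly(n)$-size ``known'' subset. Conditioned on this, $f_d$ returns $\perp$ on all but polynomially many accessible points, and the adversary's view effectively reduces to a classical oracle for $f$ on at most $\poly(n)$ inputs, at which point the classical $\Omega(\sqrt{2^n})$ query lower bound for Simon's problem yields negligible success probability.

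The main obstacle is correctly accounting for the interleaving with polynomial-time classical computation, since $d'$-CQ and $d'$-QC schemes are considerably more powerful than bare depth-$d'$ quantum circuits. In the CQ model, each measured transcript can feed an arbitrary classical computation that adaptively designs the next $d'$-depth quantum circuit; in the QC model, classical computation may be inserted after every single quantum layer. Consequently the recording argument cannot simply count quantum layers in isolation, but must track how classical rewinding and adaptivity can recycle partial path information across rounds. The technical heart, mirroring the analysis in~\cite{CCL19}, is a bookkeeping argument showing that each classical step only enlarges the adversary's known-path set by a polynomial additive amount, and that a $d'$-depth quantum segment starting from a $\poly(n)$-size classical record of valid paths still cannot reach $S_d$ with non-negligible amplitude unless $d'\geq d+1$. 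Maintaining this invariant across all $\poly(n)$ alternations between classical and quantum rounds and then invoking classical Simon hardness completes the argument.
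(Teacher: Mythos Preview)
The paper does not give its own proof of this theorem; it is quoted as a result of Chia, Chung and Lai~\cite{CCL19}. That said, the appendix (\app{cvqd-2p-opt}) adapts the CCL19 argument to the in-place setting and thereby reveals the structure of the original proof, so a comparison is still possible.

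Your upper bound is exactly right and matches both the paper's description in the introduction and \rmk{model}: $d+1$ forward oracle layers, $d$ uncomputation layers, and two layers of Hadamards give $2d+3$.

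For the lower bound, your high-level invariant---that after $k\leq d$ quantum layers the state has negligible weight on the useful subdomain $S_d$, outside a polynomial-size ``known'' set---is the right one, and your treatment of classical interleaving (each classical step adds only polynomially many known paths) captures the essential bookkeeping. But the technical vehicle you propose, a compressed-oracle/record-register hybrid, is not what CCL19 uses. Their argument, as reflected in \clm{ip_o2h}, proceeds via a sequence of \emph{hidden sets} $\bar S^{(0)},\ldots,\bar S^{(d)}$ and \emph{shadow} oracles $\G$ that agree with $\F$ outside $\bar S^{(k)}$ but are resampled independently inside it, together with a one-way-to-hiding (O2H) lemma bounding the distinguishing advantage by the probability of ``finding'' $\bar S^{(k)}$. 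The induction then shows this finding probability stays negligible through $d$ layers, after which $f_d$ is effectively replaced by a shadow carrying no information about the shift, and classical Simon hardness finishes.

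One reason to prefer the O2H-with-shadows route here: the intermediate $f_0,\ldots,f_{d-1}$ are random \emph{permutations}, not random functions, and compressed-oracle recording for permutations is substantially more delicate (the database must enforce bijectivity). The shadow technique sidesteps this entirely, since replacing a permutation by another permutation on a small subset is straightforward. Your plan is not wrong in spirit, but carrying it through with recording would require permutation-oracle machinery you have not developed, whereas the O2H approach in CCL19 handles permutations directly.
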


\begin{remark}
\label{rmk:model}
In~\cite{CCL19}, it said that $d$-SSP can be solved by $(2d+1)$-CQ and -QC schemes because the models defined in Definition 3.8 and Definition 3.10 in~\cite{CCL19} mainly considered the depth for querying the oracle. Here, for our purpose, we count the two Hadamard transforms at the beginning and the end of Simon's algorithm, which gives additional two depths.
\end{remark}

This means that when there is a quantum algorithm of $(2d+3)$ quantum circuit depth (including access to the oracle) succeeding with probability at least 2/3 (in fact the success probability is $1-\negl(n)$). 
The second part of \thm{dssp_ccl19} shows that every quantum algorithm of quantum circuit depth at most $d$ outputs the hidden shift with negligible probability, even if it makes an arbitrary polynomial number of queries.

\subsection{Quantum-secure pseudorandom permutations}

For our task, we also want the oracle can be implemented efficiently. However, by a counting argument, a random permutation cannot be computed efficiently with overwhelming probability. We use pseudorandom permutations to address this issue. In a query model, we can replace a random permutation with a pseudorandom one without decreasing the performance of a query algorithm by non-negligible difference.
\begin{definition}[Quantum-secure pseudorandom permutations (qPRP) {\cite{Zha16}}]\label{dfn:qprp}
For security parameter $\lambda$ and a polynomial $m=m(\lambda)$, a pseudorandom permutation $P$ over $\bit^{m}$ is a keyed function $\K\times\bit^m\to\bit^m$ such that there exists a negligible function $\epsilon$ such that for every quantum adversary $\A$, it holds that
\begin{align}
    \Big|\Pr_{F\gets_R\P}[\A^{O_{F},O_{F^{-1}}}=1]-\Pr_{k\gets_R\K}[\A^{O_{P(k,\cdot)}, O_{P^{-1}(k,\cdot)}}=1]\Big|\leq \epsilon(\lambda),
\end{align}
where $\P$ is the set of permutations over $\bit^m$ and $O_Q:\ket{x,y}\mapsto\ket{x,y\oplus Q(x)}$ for permutation $Q:\bit^m\to\bit^m$ and $x,y\in\bit^m$.
\end{definition}

\subsection{The Broadbent protocol for verifying quantum computation}\label{sec:broadbent}

In this section, we briefly introduce the Broadbent protocol for verifying quantum computation \cite{Bro18}.
The protocol consists of two parties, the prover $P$ which is untrusted but can perform arbitrary quantum computation, and the verifier $V$ which is almost classical.
In particular, $V$ can perform measurements in certain bases.
The prover and the verifier interact, and at the end of the protocol, the verifier outputs a bit which is either ``accept'' or ``reject.''
The protocol can be used to verify a complete language in $\BQP$ (more precisely, $\mathsf{PromiseBQP}$):
\begin{itemize}
    \item Completeness: if the computation $U$ satisfies $\|\Pi_0U\ket{0}\|^2\geq 2/3$, then there exists a quantum prover which makes $\verifier$ accept with probability at least $c$.
    
    \item Soundness: if the computation $U$ satisfies $\|\Pi_0 U\ket{0}\|^2\leq 1/3$, then for every prover, the verifier accepts with probability no more than $s$.
\end{itemize}
Here the projector $\Pi_0=\proj{0}\otimes\Id$ refers to the event that measuring the first qubit of the state $U\ket{0}$ in the standard basis yields an outcome $0$.
The parameters $c,s$ are called the completeness and soundness respectively.

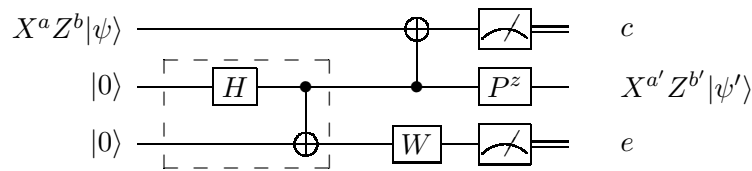
\begin{figure}[htbp]
    \centering
    \begin{minipage}{.8\linewidth}
    \Qcircuit @C=1.3em @R=0.6em @!R {
    \lstick{X^aZ^b\ket{\psi}} & \qw & \qw & \qw & \qw & \targ & \meter & \cw & \rstick{c} \\
    \lstick{\ket{0}} & \push{\rule{0em}{1.1em}}\qw & \gate{H} & \ctrl{1} & \qw & \ctrl{-1} & \gate{P^z}  & \qw & \rstick{X^{a'}Z^{b'} \ket{\psi'}} \\
    \lstick{\ket{0}} & \qw & \qw & \targ & \qw & \gate{W} & \meter & \cw & \rstick{e}
    \gategroup{2}{2}{3}{4}{.8em}{--}
    }
    \end{minipage}
    \caption{The quantum circuit of the $T$ gadget. The dashed box prepares an EPR pair. The first two qubits are held by the prover and the third qubit is held by the verifier. The bit $z\gets_R\bit$ is sampled independently for each $T$ gadget by the verifier. For the choice of $W$, see \tabl{W}.}
    \label{fig:t-gadget}
\end{figure}

    \begin{table}[htbp]
        \centering
        \begin{tabular}{|c|c|}
        \hline
            Round type & Unitary $W$ \\
            \hline\hline
             Computation & $HP^{a'+c+z}T$ \\
             Even parity & $\Id$ \\
             Odd parity & $HP^z$\\
             \hline
        \end{tabular}
        \caption{The choice of unitary $W$ in a $T$ gadget. The table is modified from \cite[Table~3]{CGJV19}.
        }
        \label{tabl:W}
    \end{table}

In the Broadbent protocol, the prover and the verifier share $(n+t)$ EPR pairs, where $n$ is the number of qubits the computation $U$ acts on, and $t$ is the number of $T$ gadgets in $U$.
The quantum circuit for the $T$ gadget is shown in \fig{t-gadget}.
The computation $U$ is a quantum circuit consisting of gates in $\{CNOT,H,T\}$ (which is a universal gate set), performed on a quantum state encrypted by quantum one-time pad.
The prover is designated to perform the following operations for each gate.
\begin{itemize}
\item For each $CNOT$ gate, the prover is designated to perform the gate on the associated qubits.

\item For each $T$ gate, the prover performs a $T$ gadget using a half of an EPR pair as an ancilla qubit.

\item Each Hadamard gate is compiled into the sequence $HTTHTTHTTH$ of single-qubit gates.
For each $T$ gate in the sequence, they run a $T$ gadget.
For each Hadamard gate in the sequence, the prover applies a Hadamard gate.
We will also refer to the implementation as a Hadamard gadget (or an $H$ gadget for short).
\end{itemize}
The verifier $V$ chooses to run one of the following rounds, and performs the operations on the other halves of the $(n+t)$ EPR pairs depending on the choice of rounds.
More specifically, $\verifier$ performs measurements on the EPR pairs to create the initial state and the states required for each $T$ gadget, and the actions for each round type is specified as follows.

    \begin{table}[]
        \centering
        \begin{tabular}{|c|c|}
            \hline
            Gate/Gadget & Key update rule \\
            \hline\hline
             $T$ gadget, computation & $(a+c,b+e+a+c+(a+c)z)$ \\
             $T$ gadget, even parity & $(e,0)$ \\
             $T$ gadget, odd parity & $(0,b+e+z)$ \\ 
             $H$ gate & $(b,a)$\\
             $CNOT$ gate & $(a,b+b',a+a',b')$ \\
             \hline
        \end{tabular}
        \caption{Rules for updating the one-time-pad keys after applying each gate in the protocol. The bit $z$ is chosen uniformly by the verifier, and the measurement outcomes $e,c$ are obtained by the prover (see \fig{t-gadget}). For the $T$ gadget and the Hadamard gate, before updating the key is $(a,b)$. For a CNOT gate, the keys are $(a,b)$ on the control qubit and $(a',b')$ on the target.
      }
        \label{tabl:key-update}
    \end{table}
    
\begin{itemize}
    \item Computation round (with constant probability $p$): the computation is delegated to the prover $P$. If $P$ behaves honestly, the result can be recovered by $V$.
    In the beginning of the protocol, $V$ measures the first $n$ EPR pairs in the standard basis.
    The half held by the prover collapses to a quantum one-time pad of the zero state.
    Then the (honest) prover performs the quantum computation on this state. 
    For each $T$-gadget (used to apply a $T$ gate or a Hadamard gate), the prover uses the a half of the EPR pair associated with the $T$-gadget as an ancilla qubit, and the verifier performs the associated measurement so the $T$-gadget implements a $T$ gate, up to a key update (\tabl{key-update}).

    \item Test round (with probability $1-p$): $V$ tests if $P$ behaves honestly, and rejects if any error is detected.
    A test round has two types, each of which is executed with probability 1/2, outlined as follows.
    \begin{itemize}
        \item An $X$-test round is used to detect if there is a bit flip error.
        In an $X$-test round, the verifier measures the first $n$ EPR half in the standard basis to create a quantum one-time pad of the zero state $\ket{0}^{\otimes n}$.
        For each $T$-gadget, the verifier performs the measurement in the basis such that the $T$-gadget acts trivially (as the identity operation) up to a key update (\tabl{key-update}).
        In the end, the verifier applies the key update rule to compute the key, and decrypts the first qubit.
        The verifier accepts if the bit is $0$ and rejects otherwise.
        
        \item A $Z$-test round is used to detect if there is a phase flip error.
        The operations are the same as the $X$-test round except that they are performed in the Hadamard basis.
        The verifier measures the first $n$ EPR pairs in the Hadamard basis to create a quantum one-time pad of the plus state $\ket{+}^{\otimes n}$.
        For each $T$ gadget, the verifier performs the measurement in some basis such that the $T$ gadget acts trivially up to a key update (\tabl{key-update}).
        In the end, the verifier applies the key update rule to compute the key, and decrypts the first qubit.
        The verifier disregards the result and rejects only if any error was detected throughout the computation.
        
    \end{itemize}
\end{itemize}
The protocol performs quantum computation on encrypted data by quantum one-time pad.
Depending on the round type, different key update rules are adopted.
For a Hadamard gadget, six $T$-gadgets are performed. 
Recall that in an $X$-test round, a $T$-gadget acts trivially on the zero state. Passing an odd number of Hadamard gates yields an encrypted plus state, and thus the $T$-gadget used in this case will be the same as a $T$ gate application in a $Z$-test round.
Thus for convenience, we may define the parity of a $T$-gadget as follows. %
A $T$-gadget is of even parity if it is not part of an Hadamard gadget, or an even (resp. odd) number of Hadamard gates has been applied before in an Hadamard gadget in an $X$-test (resp. a $Z$-test) round; otherwise it is of odd parity.
The key update rules are formally defined in \tabl{key-update}.

The original Broadbent protocol is used to verify $\BQP$ languages (more precisely $\mathsf{PromiseBQP}$).
Now we consider the following modification to show a rigidity result: 
the last message from the prover to the verifier is an $n$-qubit quantum state $\rho$.
The verifier computes the key according to the key update rule (\tabl{key-update}), and applies one-time pad decryption on $\rho$, with the following modification after the last message from the prover is sent.
\begin{itemize}
    
    \item In an $X$-test round, the verifier accepts if measuring the state in the standard basis yields $0^n$.
    
    \item In a $Z$-test round, the verifier accepts if measuring the state in the Hadamard basis yields $0^n$.
\end{itemize}
The differences are instead of checking only the first qubit, the verifier determines, in a test round, if an attack has been applied on any of the qubits.
The new verification procedure requires a larger quantum channel for the prover to send the entire state to the verifier.
As we will see, the extra cost is not necessary when the protocol is made purely classical with two provers.

Our rigidity statement is formally stated as follows:
if the prover is accepted with probability $1-\epsilon$ in test rounds, 
the prover in computation run implements a quantum channel $\E_C$ that is $O(\epsilon)$-close to the honest computation.
First recall the following fact about Pauli twirls.
\begin{lemma}[Pauli twirls]\label{lem:pauli-twirl}
  Let $P_a:=X^{a_1} Z^{a_2}$ denote a Pauli operator for $a=(a_1,a_2)$ where $a_1,a_2\in\bit^n$.
  For any quantum state $\rho$ and quantum channel $\Phi$, it holds that 
  \begin{align}
    \sum_{a\in\bit^{2n}} P_a^\dag \Phi(P_a \rho P_a^\dag) P_a = \sum_a r_a P_a\rho P_a^\dag,
  \end{align}
  for some distribution $r$ over Pauli matrices.
\end{lemma}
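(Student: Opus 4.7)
The plan is to reduce the claim to two standard ingredients: the Pauli decomposition of an arbitrary linear map on $n$-qubit operators, and the symplectic commutation relations for the Pauli group.

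First I would expand $\Phi$ in the Pauli basis. Every linear map on $2^n \times 2^n$ matrices can be written as $\Phi(\sigma) = \sum_{a,b \in \{0,1\}^{2n}} M_{a,b}\, P_a\, \sigma\, P_b^\dag$ for some coefficient matrix $M$, and complete positivity of $\Phi$ forces $M$ to be positive semidefinite (so in particular its diagonal entries $M_{a,a}$ are nonnegative reals). Plugging in $\sigma = P_c \rho P_c^\dag$ and conjugating by $P_c^\dag$ and $P_c$ on the outside gives
\begin{equation*}
P_c^\dag \Phi(P_c \rho P_c^\dag) P_c \;=\; \sum_{a,b} M_{a,b}\, (P_c^\dag P_a P_c)\, \rho\, (P_c^\dag P_b P_c)^\dag .
\end{equation*}

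Next I would invoke the commutation relation for single-qubit Paulis $X$ and $Z$ to observe that $P_c^\dag P_a P_c = (-1)^{\langle a, c\rangle_s} P_a$, where $\langle a, c\rangle_s := a_1 \cdot c_2 + a_2 \cdot c_1 \pmod 2$ is the symplectic form on $\{0,1\}^{2n}$. Substituting and summing over $c \in \{0,1\}^{2n}$ yields
\begin{equation*}
\sum_{c} P_c^\dag \Phi(P_c \rho P_c^\dag) P_c \;=\; \sum_{a,b} M_{a,b}\, P_a \rho P_b^\dag \cdot \sum_{c \in \{0,1\}^{2n}} (-1)^{\langle a+b,\, c\rangle_s}.
\end{equation*}
Because the symplectic form is a non-degenerate bilinear form on $\{0,1\}^{2n}$, the inner character sum equals $2^{2n}$ when $a = b$ and vanishes otherwise.

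The off-diagonal terms thus cancel, leaving $\sum_c P_c^\dag \Phi(P_c \rho P_c^\dag) P_c = \sum_a r_a\, P_a \rho P_a^\dag$ with $r_a := 2^{2n} M_{a,a} \geq 0$, which is the desired form. I don't anticipate a genuine obstacle: the only mildly delicate point is being careful with the signs in the commutation relation (keeping track of $X^{a_1}Z^{a_2}$ versus $Z^{a_2}X^{a_1}$ and any overall phases, which are all $\pm 1$ and cancel in the conjugation), but no phase issue affects the symplectic cancellation argument. Nonnegativity of $r_a$ follows from positive semidefiniteness of $M$ if one wishes to interpret $r$ as a (sub)distribution, which is automatic when $\Phi$ is a CPTP channel.
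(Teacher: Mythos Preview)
The paper does not prove this lemma; it is stated as a fact to ``recall'' and then used in the proof of \thm{broadbent}. Your argument is the standard one and is correct: expand $\Phi$ in the Pauli operator basis, use the symplectic commutation relation to pull the sign out, and kill the off-diagonal terms via the character sum over $c$. The sign computation and the non-degeneracy of the symplectic form are handled correctly.

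One small normalization remark: with the left-hand side written as an \emph{unnormalized} sum over all $4^n$ Paulis (as in the lemma statement), your coefficients satisfy $r_a = 4^n M_{a,a}$ and hence $\sum_a r_a = 4^n \sum_a M_{a,a} = 4^n$ for a trace-preserving $\Phi$ (since $\tr(P_a^\dag P_b)=2^n\delta_{a,b}$ forces $\sum_a M_{a,a}=1$). So $r$ is not literally a probability distribution but a nonnegative vector summing to $4^n$; this is a quirk of the paper's phrasing rather than a gap in your argument, and your parenthetical ``(sub)distribution'' should really read ``nonnegative weights.'' In the paper's application the twirl appears as an expectation $\Exp_k$, which supplies the missing $4^{-n}$.
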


Broadbent shows that without loss of generality, any attack performed by the prover can be written as a honest execution $C$ followed by a quantum channel $\Phi$. %
The prover first performs $C$ and yields a quantum state $\rho=C\proj{\psi}C^\dag$, one-time padded with key $Q$. The prover then applies any quantum channel $\Phi$, followed by decryption performed by the verifier.
By \lem{pauli-twirl}, the prover's attack can be written as a probabilistic mixture of Pauli unitaries.
We prove the following theorem.

\begin{theorem}\label{thm:broadbent}
  For $\epsilon\in[0,1/2]$, any prover who succeeds with probability $1-\epsilon$ in the test rounds if and only if it implements a quantum channel $\E_C$ satisfying $\|\E_C-\C\|_\diamond\leq 4\epsilon$ where $\C(\rho):=C\rho C^\dag$ in the computational round.
\end{theorem}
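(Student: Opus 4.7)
The plan is to reduce, via the Pauli twirl induced by the verifier's one-time-pad keys, any prover strategy to a fixed Pauli channel applied at the end, and then read off both the test-round acceptance and the computation-round diamond distance from the weights of this Pauli mixture. Following the original Broadbent analysis, I would first argue that any deviation the prover performs across the protocol can, without loss of generality, be compressed into a single channel $\Phi$ applied to the prover's registers after the honest execution of $C$: each cheating operation between gadgets commutes past the subsequent honest gates (absorbed into a redefinition of $\Phi$), and the fresh bit $z\gets_R\bit$ in every $T$-gadget, together with the OTP key-update rules of \tabl{key-update}, supplies the Pauli-twirl randomness. Crucially, the round type and the underlying OTP key are hidden from the prover, so the \emph{same} $\Phi$ is applied in every round. \lem{pauli-twirl} then implies that on the effective (unencrypted) output state, $\Phi$ acts as a Pauli channel $\Phi(\sigma)=\sum_{a\in\bit^{2n}} r_a P_a \sigma P_a^\dag$ with $a=(a_1,a_2)\in\bit^n\times\bit^n$ and $P_a=X^{a_1}Z^{a_2}$.

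Next I would evaluate the test-round acceptance probabilities in this normal form. In an $X$-test round the effective pre-$\Phi$ state is $\ket{0^n}\!\bra{0^n}$ (every $T$-gadget is configured to act as identity up to a key update), so the decrypted output is $\sum_a r_a \proj{a_1}$; the modified check measures \emph{all} $n$ qubits in the standard basis and accepts only on outcome $0^n$, giving $\Pr[\text{accept}\mid X]=\sum_{a_1=0^n} r_a$. Symmetrically, $\Pr[\text{accept}\mid Z]=\sum_{a_2=0^n} r_a$ using the $Z$-test in the Hadamard basis on $\ket{+^n}$. Writing $\delta:=\sum_{a_1\neq 0,\,a_2\neq 0} r_a$ and averaging over the two test types with probability $1/2$ each yields
\begin{align*}
\Pr[\text{accept}\mid \text{test}] \;=\; \frac{1+r_0-\delta}{2}.
\end{align*}
Imposing this is at least $1-\epsilon$ gives $1-r_0+\delta\leq 2\epsilon$, and in particular $r_0\geq 1-2\epsilon$.

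To conclude the main (``only if'') direction I would translate $r_0\geq 1-2\epsilon$ into the diamond-norm bound. The computation-round channel is $\E_C=\Phi\circ\C$, and since $\C$ is a unitary channel, $\|\E_C-\C\|_\diamond=\|\Phi-\mathrm{id}\|_\diamond$. For a Pauli channel one has the standard identity $\|\Phi-\mathrm{id}\|_\diamond=2(1-r_0)$, attained on the maximally entangled input (where $\Phi$ produces a convex combination of mutually orthogonal Bell states labelled by $a$). This gives $\|\E_C-\C\|_\diamond\leq 4\epsilon$. The converse is obtained by reversing the chain: $\|\E_C-\C\|_\diamond\leq 4\epsilon$ forces $r_0\geq 1-2\epsilon$, and hence $\delta\leq 1-r_0\leq 2\epsilon$, which in turn lower-bounds each per-test acceptance by $r_0$ and the averaged acceptance correspondingly.

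The main obstacle is the consolidation step—formally justifying that the prover's possibly interleaved deviations throughout all $T$- and Hadamard-gadgets can be replaced by a single terminal channel $\Phi$ whose Pauli-twirled distribution is shared across all four round types (computation, $X$-test, $Z$-test, and the even/odd-parity branches). I would handle this by an inductive commutation argument over gadgets as in the original Broadbent protocol, carefully tracking how the $z$-bits and the gadget measurement outcomes enter the key-update rules of \tabl{key-update}. A secondary point worth emphasizing is that the strengthening of the final check, from testing only the first qubit to testing \emph{all} $n$ decrypted qubits, is exactly what promotes the original ``trivial on the first qubit'' detection into the detection of every nonidentity Pauli $P_a$; this is precisely what powers the quantitative diamond-norm characterization stated in the theorem, and is the ingredient that will be reused when the protocol is later made purely classical in \sec{leash_protocol}.
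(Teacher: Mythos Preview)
Your proposal is correct and follows essentially the same route as the paper: reduce to a terminal Pauli channel via the one-time-pad twirl (citing Broadbent for the consolidation of interleaved deviations), read off the $X$- and $Z$-test acceptance probabilities as $\sum_{a_1=0}r_a$ and $\sum_{a_2=0}r_a$, and convert the resulting lower bound on $r_0$ into a diamond-norm bound on $\E_C-\C$.

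The only notable bookkeeping difference is that you compute the \emph{averaged} test acceptance exactly as $(1+r_0-\delta)/2$ and deduce $r_0\geq 1-2\epsilon$, then use the exact identity $\|\Phi-\mathrm{id}\|_\diamond=2(1-r_0)$ for Pauli channels to reach $4\epsilon$; the paper instead bounds each test separately ($\epsilon_X,\epsilon_Z\leq 2\epsilon$), obtains the looser $r_0\geq 1-4\epsilon$, and then upper-bounds $\|\P-\I\|_\diamond$ by $\sum_{a\neq 0}r_a\cdot\|\P_a-\I\|_\diamond$. Both land on the same $4\epsilon$, so your sharper intermediate bound and the paper's looser one are two routes through the same computation rather than different proof strategies. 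For the converse, the paper argues directly via $\|\P(\proj{0^n})-\proj{0^n}\|_{\tr}\leq\|\E_C-\C\|_\diamond$ (and similarly for $\ket{+^n}$), which is slightly cleaner than inverting your averaged-acceptance formula; note that neither argument recovers the exact constant in the ``if'' direction, so the theorem's ``if and only if'' should be read up to the factor~$4$.
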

\begin{proof}
  Without loss of generality, we may express the action of any prover given access to circuit description $C$ and the verifier's decryption as a quantum channel
\begin{align}
  \E_{C,i} = \Exp_k (\O_k \circ \Phi \circ \O_k) \circ \C_i = \P\circ \C_i,
\end{align}
for some Pauli channel $\P$ with Kraus form $\{r_a^{1/2}P_a : a\in\bit^{2n}\}$ by \lem{pauli-twirl} and round type $i$.
Also recall that here we denote $P_a=X^{a_1}Z^{a_2}$ for $a=(a_1,a_2)$ and $a_1,a_2\in\bit^n$.
Since these rounds look completely identical to the prover, the quantum channel $\Phi$ must be identical among the choices of round type. 

Let the success probability of the prover in an $X$- and a $Z$-test round be $1-\epsilon_X$ and $1-\epsilon_Z$ respectively.
If the prover succeeds with probability $1-\epsilon$ conditioned on the event that a test round is chosen, 
it must hold that $\epsilon_X,\epsilon_Z\leq 2\epsilon$ since otherwise $\frac{1}{2}(1-\epsilon_X)+\frac{1}{2}(1-\epsilon_Z)=1-\frac{\epsilon_X+\epsilon_Z}{2}< 1-\epsilon$.

In an $X$-test round, $\C_X=\I$, the identity channel, and 
\begin{align}\nonumber
    1-\epsilon_X 
  &= \bra{0^n}\P(\proj{0^n})\ket{0^n} \\\nonumber
  &= \sum_a r_a |\bra{0^n} P_a \ket{0^n}|^2 \\
  &= \sum_{a:a_1=0} r_a \geq 1-2\epsilon.
\end{align}
Similarly, in a $Z$-test round, $\sum_{a:a_2=0}r_a\geq 1-2\epsilon$. 
Combining the inequalities, we conclude that $r_0\geq 1-4\epsilon$.
Therefore, for any prover who succeeds with probability $1-\epsilon$, it holds that in the computation round 
\begin{align}\nonumber
  \|\P\circ\C-\C\|_\diamond 
  &\leq \|\P-\I\|_\diamond \\\nonumber
  &= \max_{\rho:\tr\rho=1}\tr\left|\sum_a r_a (P_a\rho P_a^\dag-\rho)\right| \\\nonumber
  &\leq \sum_{a\neq 0}r_a\|\P_a-\I\|_\diamond \\
  &\leq 4\epsilon,
\end{align}
where $\P_a(\rho)=P_a\rho P_a^\dag$.
The above reasoning shows the ``only if'' direction.

If the prover implements $\E_C=\P\circ\C$ which is $\delta$-close to $\C$ in the computation round.
Then in the $X$-test round, we have 
$\|\P(\proj{0^n})-\proj{0^n}\|_\tr\leq \|\E_C-\C\|_\diamond\leq\delta$.
Similarly, in the $Z$-test round, we have 
$\|\P(\proj{+^n})-\proj{+^n}\|_\tr\leq \|\E_C-\C\|_\diamond\leq\delta$.
Then we conclude that the success probability is at $1-\delta$ when a $X$-test or a $Z$-test round is chosen.
This implies that the success probability in a test round is at least $1-\delta$.
\end{proof}

\subsection{The Verifier-on-a-Leash protocol}\label{sec:leash}

The Broadbent protocol can be used to verify arbitrary quantum computation, but it requires the verifier to have the capability to perform measurements in the bases listed in \tabl{W}.
To achieve purely classical verification, Coladangelo, Grilo, Jeffery and Vidick presented a new protocol called the Verifier-on-a-Leash protocol (or the Leash protocol).
In the Leash protocol, two provers called PP and PV share entanglement, but are not allowed to communicate with each other after the protocol starts. 
PP plays the role of the prover in the Broadbent protocol, and PV performs the measurements by the verifier in the Broadbent protocol.
To certify PV plays honestly, a new rigidity test is introduced to verify Clifford measurements, in particular, the observables in $\Sigma=\{X,Y,Z,F,G\}$ where $X,Y,Z$ are Pauli matrices, $F=\frac{1}{\sqrt 2}(-X+Y)$ and $G=\frac{1}{\sqrt 2}(X+Y)$.
The idea is to use a non-local game for certifying the standard basis measurement (i.e., the observable $X$) and the Hadamard basis measurement (the observable $Z$).
We will modify a test called $\rigid'(\Sigma,m)$ which certifies Clifford measurements by Coladangelo, Grilo, Jeffery and Vidick \cite{CGJV19} to only act on a subset of qubits. %
In particular, the following theorem holds with the test.
\begin{theorem}[{\cite[Theorem~4]{CGJV19}}]\label{thm:cliff}
  There exists a test $\rigid'(\Sigma,m)$ such that the following holds:
  suppose a strategy for the players succeeds in test $\rigid'(\Sigma,m)$ with probability at least $1-\epsilon$.
  Then for $D\in\{A,B\}$, there exists an isometry $V_D$ such that 
  \begin{align}
      \|(V_A\otimes V_B)\ket{\psi}_{AB} - \ket{EPR}^{\otimes m}_{A'B'}\ket{AUX}_{\hat A\hat B}\|^2\leq O(\sqrt\epsilon),
  \end{align}
  and
  \begin{align}
      \Exp_{W\in\Sigma^m}
      \sum_{u\in\{\pm\}^m}
      \left\|
      V_A \tr_B((\Id_A\otimes W_B^u)\proj{\psi}_{AB} (\Id_A\otimes W_B^u)) V_A^\dag
      -
      \sum_{\lambda\in\{\pm\}}
      \left( \bigotimes_{i=1}^m \frac{\sigma_{W_i,\lambda}^{u_i}}{2} \otimes \tau_\lambda \right)
      \right\|_1
      = O(\poly(\epsilon)).
  \end{align}
  Moreover, players employing the honest strategy succeed with probability $1-e^{-\Omega(m)}$ in the test.
\end{theorem}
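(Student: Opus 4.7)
The plan is to adapt the rigidity analysis of the $\rigid$ test from Coladangelo, Grilo, Jeffery and Vidick to a subset-restricted variant $\rigid'(\Sigma,m)$. The starting point is a single-copy self-test for a shared EPR pair together with the full set of Clifford observables $\Sigma=\{X,Y,Z,F,G\}$: anti-commutation of $X$ and $Z$ is certified through a CHSH- or Magic-Square-style sub-game, $Y$ is pinned down by the relation $Y=iXZ$ up to signs, and the combinations $F=(-X+Y)/\sqrt 2$ and $G=(X+Y)/\sqrt 2$ are certified by linear-combination consistency checks between the rotated-basis outcomes and those in the Pauli bases. The output is local isometries that conjugate each player's observables to the honest Clifford actions on an auxiliary EPR pair.

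The next step is to lift the single-copy self-test to $m$ copies. The verifier samples an independent uniform basis $W_i\in\Sigma$ for each coordinate $i\in[m]$, sends the tuple to the provers, and performs parallel cross-coordinate consistency checks (e.g., pairs of coordinates queried in correlated or anti-correlated bases to detect a strategy that leaks information across coordinates). A standard parallel-repetition / swap-isometry argument, as in CGJV19, produces global isometries $V_A,V_B$ satisfying the state closeness bound $\|(V_A\otimes V_B)\ket{\psi}_{AB}-\ket{EPR}^{\otimes m}_{A'B'}\ket{AUX}_{\hat A\hat B}\|^2\leq O(\sqrt\epsilon)$; the square-root loss is the generic cost of passing from a winning-probability bound to an operator-norm bound on the isometry. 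The trace-distance statement on the post-measurement states then follows by composing the state rigidity with the operator rigidity for $B$'s measurements in each basis, and averaging over $W\in\Sigma^m$ and outcomes $u\in\{\pm\}^m$, giving the claimed $O(\poly(\epsilon))$ error.

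The modification from $\rigid$ to $\rigid'$ is implemented by letting the verifier sample a random subset of the $m$ coordinates on which $B$ is told to perform the announced Clifford measurement, while $A$ is given a marginal question whose distribution is independent of that subset. Because $A$'s view does not depend on the subset, the CGJV19 rigidity argument applies coordinate-wise on the queried subsystem, and the register associated with the non-queried coordinates is absorbed into $\ket{AUX}_{\hat A\hat B}$. Completeness is by inspection: the honest strategy passes each parallel check with constant probability bounded away from $1/2$, and a Chernoff bound over the $m$ independent checks yields failure probability $e^{-\Omega(m)}$.

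The main obstacle will be quantitative. Parallel self-tests typically degrade error polynomially under the number of coordinates, and propagating approximate anti-commutation through the five non-commuting observables in $\Sigma$ introduces further polynomial-in-$\epsilon$ losses at each step. Keeping the bounds at $O(\sqrt\epsilon)$ for the state and $O(\poly(\epsilon))$ for the measurements — rather than letting them blow up with $m$ — requires careful application of a Pauli-twirl reduction in the spirit of \lem{pauli-twirl}, combined with the swap-isometry calculus already used by CGJV19, so that errors are averaged rather than summed across the $m$ coordinates. This is essentially where the bulk of the technical work of the original CGJV19 proof sits, and I would follow their accounting directly.
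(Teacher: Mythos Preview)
The paper does not prove this theorem. It is stated in the preliminaries as a direct citation of \cite[Theorem~4]{CGJV19} and is used as a black box throughout; there is no ``paper's own proof'' to compare against. Your proposal is therefore solving a problem the paper explicitly outsources.

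You have also inverted the naming convention. In this paper, $\rigid'(\Sigma,m)$ \emph{is} the original CGJV19 test (the one the theorem is about), and $\rigid(\Sigma,m)$ (\prot{R}) is this paper's subset-restricted wrapper that simply calls $\rigid'(\Sigma,|\bar N|)$ on a random subset $\bar N\subseteq[m]$. So the paragraph in your proposal describing ``the modification from $\rigid$ to $\rigid'$'' as introducing a random subset is backwards, and in any case that modification is not part of the content of \thm{cliff} --- the paper's only argument for the subset version is the one-line remark after the theorem that ``it is clear that for the subset, \thm{cliff} holds,'' since running $\rigid'$ on $\bar N$ is literally an instance of the cited test on $|\bar N|$ coordinates.

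As a sketch of what CGJV19 actually does, your outline is in the right neighborhood (single-copy Clifford self-test, parallel lifting via swap isometries, conjugation ambiguity giving the $\lambda\in\{\pm\}$ mixture), but since none of that is reproduced or re-derived in the present paper there is nothing here to grade it against.
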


\thm{cliff} shows that a strategy that succeeds in $\rigid'(\Sigma,m)$ with probability at least $1-\epsilon$ must satisfy the following conditions:
The players' joint state is $O(\sqrt\epsilon)$-close to a tensor product of $m$ EPR pairs together with an arbitrary ancilla register.
Moreover, on average over uniformly chosen basis $W\gets_R\Sigma^m$, the provers' measurement is $\poly(\epsilon)$-close to a probabilistic mixture of ideal measurements and their conjugates. 
More specifically, the probabilistic mixture can be realized as follows: performing a measurement on the ancilla register yields a post-measurement state $\tau_\lambda$ and a single bit outcome $\lambda\in\{\pm\}$ that specifies whether the ideal measurement associated with the observable $\sigma_{W_i,+}$ or that with its conjugate $\sigma_{W_i,-}$ is performed.\footnote{The conjugate of a matrix $A$ is obtained by taking the complex conjugate of each element of $A$.}
  Then applying $\sigma_{W_i,\lambda}$ yields a post-measurement state $\sigma_{W_i,\lambda}^{u_i}$ and an outcome $u_i\in\{\pm\}$ for each index $i\in[m]$.
We can only hope to certify that the strategy is close a probabilisitc mixture of the ideal strategy and its conjugate because a protocol with classical communication does not distinguish a strategy from its conjugate (more concretely, replacing $i=\sqrt{-1}$ with $-i$ for any strategy by the prover does not change the score). 

In \sec{leash_protocol}, we will modify $\rigid'(\Sigma,m)$ to only apply on a random subset of indices in $[m]$, chosen by the verifier.
It is clear that for the subset, \thm{cliff} holds.

\subsection{Proof of quantumness}

Our protocol in the plain model will be based on a construction of noisy trapdoor claw-free function (NTCF) function family \cite{Brakerski18}, based on QLWE.
\begin{definition}[NTCF family {\cite[Definition~3.1]{Brakerski18}}]
Let $\lambda$ be a security parameter and $\X,\Y$ be finite sets.
Let $\K_\F$ be a finite set of keys.
A family of functions 
\begin{align}
    \F=\{f_{k,b}:\X\to\D_\Y\}_{k\in\K_\F,b\in\bit}
\end{align}
is called a noisy trapdoor claw free (NTCF) family if the following conditions hold:
\begin{enumerate}
    \item Efficient function generation.
    There exists an efficient probabilistic algorithm $\Gen_\F$ which generates a key $k\in\K_\F$ together with a trapdoor $t$: $(k,t)\gets\Gen_\F(1^\lambda)$.
    
    \item Trapdoor injective pair.
    For all keys $k\in\K_\F$, the following conditions hold.
    \begin{enumerate}
        \item Trapdoor:
        There exists an efficient deterministic algorithm $\Inv_\F$ such that for all $b\in\bit$, $x\in\X$ and $y\in\Supp(f_{k,b}(x))$, $\Inv_\F(t,b,y)=x$.
        Note that this implies that for all $b\in\bit$ and $x\neq x'$,
        $\Supp(f_{k,b}(x))\cap\Supp(f_{k,b}(x'))=\emptyset$.
        
        \item Injective pair: there exists a perfect matching $\R_k\subseteq\X\times\X$ such that $f_{k,0}(x_0)=f_{k,1}(x_1)$ if and only if $(x_0,x_1)\in\R_k$.
    \end{enumerate}
    
    \item Efficient range superposition. For all key $k\in\K_\F$ and $b\in\bit$, there exists a function $f_{k,b}':\X\to\D_\Y$ such that the following hold.
    \begin{enumerate}
        \item For all $(x_0,x_1)\in\R_k$ and $y\in\Supp(f_{k,b}'(x_b))$, $\Inv_\F(t,0,y)=x_0$ and $\Inv_\F(t,1,y)=x_1$.
        
        \item There exists an efficient deterministic procedure $\Chk_\F$ that, on input $k,b\in\bit$, $x\in\X$ and $y\in\Y$, returns 1 if $y\in\Supp(f_{k,b}'(x))$ and 0 otherwise.
        Note that $\Chk_\F$ is not provided the trapdoor $t$.
        
        \item For every $k$ and $b\in\bit$, 
        \begin{align}
            \Exp_{x\gets_R\X}[H^2(f_{k,b}(x),f_{k,b}'(x))] \leq \mu(\lambda),
        \end{align}
        for some negligible function $\mu$.
        Here $H^2$ is the Hellinger distance.
        Moreover, there exists an efficient procedure $\Samp_\F$ that on input $k$ and $b\in\bit$, prepares the state
        \begin{align}
            |\X|^{-1/2} \sum_{x\in\X,y\in\Y} 
            \sqrt{f_{k,b}'(x)(y)}\ket{x}\ket{y}.
        \end{align}
    \end{enumerate}
    
    \item Adaptive hardcore bit.
    For all keys $k\in\K_\F$, the following conditions hold for some integer $w$ that is a polynomially bounded function of $\lambda$.
    \begin{enumerate}
        \item For all $b\in\bit$ and $x\in\X$, there exists a set $G_{k,b,x}\subseteq\bit^w$ such that $\Pr_{d\gets_R\bit^w}[d\notin G_{k,b,x}]$ is negligible, and moreover there exists an efficient algorithm that checks for membership in $G_{k,b,x}$ given $k,b,x$ and the trapdoor $t$.
        
        \item There is an efficiently computable injection $J:\X\to\bit^w$ such that $J$ can be inverted efficiently on its range, and such that the following holds.
        If 
        \begin{align*}
            H_k &= \{(b,x_b,d,d\cdot(J(x_0)\oplus J(x_1))) | b\in\bit, (x_0,x_1)\in\R_k, d\in G_{k,0,x_0}\cap G_{k,1,x_1}\}, \\
            \bar H_k &= \{ (b,x_b,d,c) | (d,x,d,c\oplus 1)\in H_k \},
        \end{align*}
        then for any quantum polynomial-time procedure $\A$, there exists a negligible function $\mu$ such that 
        \begin{align}\label{eq:ahb}
            \left| \Pr_{(k,t)\gets\Gen_\F(1^\lambda)}[\A(k)\in H_k] - \Pr_{(k,t)\gets\Gen_\F(1^\lambda)}[\A(k)\in \bar H_k]\right| \leq \mu(\lambda).
        \end{align}
    \end{enumerate}
\end{enumerate}
\end{definition}
In a breakthrough, Brakerski, Christiano, Mahadev, Vazirani and Vidick give a proof-of-quantumness protocol (the BCMVV protocol) based on NTCFs \cite{Brakerski18}.
The protocol proceeds in the following steps.
\begin{enumerate}
    \item The verifier samples $(k,t)\gets\Gen(1^\lambda)$ and sends $k$ to the prover.
    
    \item The prover performs $\Samp_\F$ on the input state $\ket{+}$ and measures the image register to yield an outcome $y$.
    
    \item The verifier samples a random coin $c\gets_R\bit$ and sends $c$ to the prover.
    
    \item If $c=0$, the prover performs a standard basis measurement; otherwise the prover performs a Hadamard basis measurement.
    The outcome $w$ is then sent to the verifier.
    
    \item The verifier outputs $V(t,c,w)$, which is defined as
    \begin{align}\label{eq:bcmvv-v}
        V(t,y,c,w) :=
        \left\{
        \begin{array}{ll}
             1 & \text{if $c=0$ and $\Chk_\F(k,y,b,x)=1$, $w=(b,x)$} \\
             1 & \text{if $c=1$, $d\in G_{k,y}$ and $d\cdot(J(x_0)\oplus J(x_1))=u$, $w=(u,d)$} \\
             0 & \text{otherwise,}
        \end{array}
        \right.
    \end{align}
    where $G_{k,y}:=G_{k,0,x_0}\cap G_{k,1,x_1}$.
    Here we note that for $c=0$, $\Chk_\F$ does not require the trapdoor $t$, but the bit can be determined using the trapdoor.
\end{enumerate}
The adaptive hardcore bit property (see \eq{ahb}) implies that every adversary $\A$ given access to $k$, outputs $y,w_0,w_1$ such that $V(y,0,w_0)=V(y,1,w_1)=1$ with probability at most $1/2+\negl(\lambda)$. 
On the other hand, there exist efficient quantum processes to output valid $(y,w_0)$ or $(y,w_1)$ with probability $1-\negl(\lambda)$.

Hirahara and Le Gall \cite{hirahara2021test} and Liu and Gheorghiu \cite{liu2021depth} independently proposed two different methods of transforming the BCMVV protocol \cite{Brakerski18} into one that can be computed using only constant quantum depth (interleaving with classical computation).
In our second protocol for certifying quantum depth based on LWE, we will be using the theorem which states that there exists a construction of NTCF family for which the function evaluation takes constant depth, based on randomized encoding, defined as follows.

\begin{definition}[Randomized encoding \cite{AIK06,liu2021depth}]\label{dfn:randomized-encoding}
  Let $f:\bit^n\to\bit^\ell$ be a function and $r\gets_R\bit^m$.
  A function $\hat f:\bit^n\times\bit^m\to\bit^s$ is a $\delta$-correct, $\epsilon$-private randomized encoding of $f$ if it satisfies the following properties:
  \begin{itemize}
  \item Efficient generation: there exists a deterministic polynomial-time algorithm that, given a description of the circuit implementing $f$, outputs a description of a circuit implementing $\hat f$.
  \item $\delta$-correctness: there exists a deterministic polynomial-time algorithm $\textsc{Dec}$, called the decoder, such that for every input $x\in\bit^n$, $\Pr_r[\textsc{Dec}(\hat f(x,r))\neq f(x)]\leq\delta$. 
  \item $\epsilon$-privacy: there exists a PPT algorithm $S$, called the simulator, such that for every $x\in\bit^n$, the total variation distance between $S(f(x))$ and $\hat f(x,r)$ is at most $\epsilon$.
  \end{itemize}
Furthermore, a perfect randomized encoding is one for which $\epsilon=\delta=0$.
\end{definition}

\begin{theorem}[{\cite[Section~3.1]{liu2021depth}}]\label{thm:lg21-completeness}
  There exists an efficient quantum process which uses gates of bounded fan-out in constant quantum depth and prepares the state
  \begin{align}
      \sum_{b,x} \ket{b}\ket{x}\ket{\hat f_k(b,x)},
  \end{align}
  where $\hat f$ is a perfect randomized encoding of an NTCF $f$.
\end{theorem}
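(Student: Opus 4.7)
The plan is to combine three ingredients: (i) the Applebaum–Ishai–Kushilevitz randomized encoding \cite{AIK06}, which compiles arbitrary polynomial-size bounded-space computations into $\mathsf{NC}^0$ functions whose output bits each depend on only a constant number of input bits; (ii) bounded fan-out gates, which can distribute a single bit to polynomially many locations in constant quantum depth; and (iii) parallel Hadamards, which create a uniform superposition in depth $1$. Taken together, these ingredients convert any efficient classical evaluation of an NTCF into a constant-depth coherent preparation of a superposition of (encoded) function values.

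Concretely, I would proceed in three conceptual steps. First, apply Hadamard gates in parallel to the input registers $\ket{b}\ket{x}$ and a fresh randomness register $\ket{r}$ of length $m$, producing
\[ \frac{1}{\sqrt{2^{1+n+m}}} \sum_{b,x,r} \ket{b}\ket{x}\ket{r}\ket{0^s} \]
in depth $1$. Second, coherently implement the randomized encoding $\hat f_k(b,x;r)$ of the NTCF evaluator $f_k$ into the output register. By \dfn{randomized-encoding}, every output bit of $\hat f_k$ is a constant-degree function of the bits of $(b,x,r)$, so each output qubit can be written into by a constant-depth Boolean subcircuit. The difficulty that a single input bit may feed many output subcircuits is handled by first using bounded fan-out gates to make in parallel all the copies of each bit that are required by the $\mathsf{NC}^0$ evaluation; each output bit is then computed by a constant-depth XOR/AND circuit acting on its constantly many local copies, and all $s$ such subcircuits act on disjoint output qubits and can be executed simultaneously. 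Third, reinterpret the randomness register: since a perfect randomized encoding already outputs its randomness as part of the encoding, one may regard $\ket{r}$ as a portion of $\ket{\hat f_k(b,x)}$, yielding the state in the theorem statement.

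The main obstacle is establishing that the NTCF evaluator $f_k$ admits a \emph{perfect} randomized encoding (i.e.\ $\delta=\epsilon=0$ in \dfn{randomized-encoding}) that is compatible with the NTCF's adaptive hardcore bit property, rather than merely a statistically close one. The AIK construction gives perfect $\mathsf{NC}^0$ encodings for functions in $\oplus\mathsf{L}/\mathsf{poly}$, which comfortably captures the modular arithmetic and matrix--vector operations underlying the LWE-based NTCF of \cite{Brakerski18}; the remaining care is that noise terms used by the NTCF must be sampled from explicit randomness $r$ that is fed into the encoding \emph{as input}, rather than prepared coherently, so that the whole evaluation remains an $\mathsf{NC}^0$-encodable deterministic function. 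This is precisely the reduction carried out in \cite[Section~3.1]{liu2021depth}, and invoking it completes the argument.
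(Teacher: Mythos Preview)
The paper does not supply its own proof of this statement; \thm{lg21-completeness} is stated purely as a citation of \cite[Section~3.1]{liu2021depth} and used as a black box in \sec{cqd-lwe}. Your proposal is essentially a reconstruction of the argument from that reference---AIK $\mathsf{NC}^0$ randomized encodings to make every output bit depend on $O(1)$ input bits, bounded fan-out gates to distribute each input bit to all the subcircuits that need it in constant depth, and parallel Hadamards for the initial superposition---and it is correct.

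One small point of presentation: your third step, in which you ``reinterpret'' the randomness register $\ket{r}$ as part of the encoding output, is a bit loose. It is not true in general that a perfect randomized encoding literally outputs its randomness; what makes this work here is the \emph{randomness reconstruction} property (which the paper invokes separately in \thm{lg21-soundness}), i.e., the map $(b,x,r)\mapsto (b,x,\hat f_k(b,x,r))$ is a bijection computable and invertible in $\mathsf{NC}^0$, so the $\ket{r}$ register can be uncomputed in constant depth from the encoding register and no residual entanglement with $r$ remains. This is a clarification rather than a gap.
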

In particular, as shown in the same paper \cite[Section~3.3]{liu2021depth}, the proof-of-quantumness protocol takes total quantum depth 14 and three quantum-classical interleavings.
\begin{theorem}[{\cite[Theorem~3.1]{liu2021depth}}]\label{thm:lg21-soundness}
  There exists a perfect randomized encoding of an NTCF, which satisfies the randomness reconstruction property and is an NTCF.
\end{theorem}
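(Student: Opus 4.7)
The plan is to construct $\hat f$ explicitly from an existing NTCF family $\F = \{f_{k,b}\}$ (e.g.\ the BCMVV construction based on QLWE) by applying the Applebaum--Ishai--Kushilevitz randomized-encoding framework. That framework yields an encoding $\hat f_{k,b}(x, r)$ whose output bits each depend on only a constant number of input bits (an $\mathsf{NC}^0$ circuit), which is precisely what enables the constant-depth superposition evaluation used in \thm{lg21-completeness}. I would first verify perfectness of the encoding (both perfect correctness and perfect privacy are standard for the AIK construction applied to low-complexity target functions) and then verify randomness reconstruction: by tracking the affine blocks of the AIK construction, one shows that given $\hat f_{k,b}(x, r)$ together with $x$, the string $r$ is uniquely determined and efficiently recoverable, so $\hat f_{k,b}$ is injective in $(x, r)$.

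Next, I would verify each NTCF axiom for $\hat\F = \{\hat f_{k,b}\}$ by inheritance from $\F$. Key and trapdoor generation are unchanged; $\Chk$ and $\Samp$ follow from the explicit $\mathsf{NC}^0$ description of $\hat f$ together with the corresponding procedures for $\F$, plus an efficient sampler for $r$. For the trapdoor injective-pair property, perfect correctness plus randomness reconstruction yields injectivity of $\hat f_{k,b}$, and the claw structure is inherited via the projection $(x, r) \mapsto x$: two inputs $(x_0, r_0)$ and $(x_1, r_1)$ collide under $(\hat f_{k,0}, \hat f_{k,1})$ iff $(x_0, x_1) \in \R_k$ together with an affine compatibility relation on $(r_0, r_1)$ dictated by the encoding.

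The main obstacle I expect is the adaptive hardcore-bit property for $\hat\F$. The plan is a reduction showing that an efficient adversary $\A$ breaking HCB for $\hat\F$ yields an efficient adversary $\B$ breaking HCB for the underlying $\F$, contradicting the QLWE-based hardness. Given $(b, \hat x_b, \hat d, \hat c)$ produced by $\A$ on key $k$, the reduction recovers $(x_b, r_b)$ via randomness reconstruction, and the key design step is choosing the injection $\hat J$ for $\hat\F$ so that the equation $\hat d \cdot (\hat J(\hat x_0) \oplus \hat J(\hat x_1)) = \hat c$ projects linearly onto an equation $d \cdot (J(x_0) \oplus J(x_1)) = c$ for $\F$. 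Concretely, I would let $\hat J$ act block-wise, separating the $x$-coordinates from the randomness coordinates, so that the randomness part of $\hat d$ can be discarded without affecting the $\F$-equation, while the structural relationship between $(r_0, r_1)$ enforced by AIK makes the projection well-defined. The remaining subtlety is bounding the error contributed by the rare events $\hat d \notin G_{k, b, \hat x_b}$ and by the negligible imperfections already present in the underlying NTCF; both are negligible and so preserve any non-negligible advantage through the reduction.
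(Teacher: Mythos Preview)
The paper does not prove this statement: it is quoted verbatim as an external result from Liu and Gheorghiu \cite{liu2021depth} (their Theorem~3.1), and the only remark the paper adds is the one-line observation that the theorem implies the adaptive hardcore bit property holds for the new construction. There is therefore no proof in the paper to compare your proposal against.

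That said, your sketch is a reasonable outline of how the cited source proceeds: apply the Applebaum--Ishai--Kushilevitz perfect randomized-encoding compiler to an LWE-based NTCF to obtain an $\mathsf{NC}^0$ evaluation circuit, observe that the affine block structure of the AIK encoding yields randomness reconstruction (hence injectivity in $(x,r)$), and then recover each NTCF axiom by reduction to the underlying family. You correctly single out the adaptive hardcore bit as the step requiring real work, and your plan of defining $\hat J$ block-wise so that an equation over $\hat\F$ projects to an equation over $\F$ is the right shape; carrying it out cleanly requires tracking exactly how the randomness coordinates enter the parity $\hat d\cdot(\hat J(\hat x_0)\oplus \hat J(\hat x_1))$ under the specific AIK affine maps, which is where the details in \cite{liu2021depth} live.
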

In particular, \thm{lg21-soundness} implies that with the new construction, the adaptive hardcore bit property holds.

\section{Certifying a query algorithm}
\label{sec:epr_protocol}

A $q$-query algorithm $\A$ given access to $O$, denoted $\A^O$, without loss of generality, consists of a sequence of unitary maps
$U_q O \ldots U_1 O U_0$
acting on the zero state, followed by a standard basis measurement to extract the information about the oracle.
In the real world, information is transmitted in a non-black-box way, and therefore one must instantiate the oracle with an efficient quantum process.
However, any instantiation in the plain model would in general leak the information about the oracle, even based on computational assumptions \cite{BGIRSVY01}.
Thus, the analysis or the conclusions obtained in a query model does not apply when 
the oracle is replaced with an instantiation.

On the other hand, the impossibility result does not apply when the oracle is implemented by an external device to which the algorithm has limited access.
In this setting, another device, denoted $\prover_O$, is added to implement the oracle $O$, and the secret encoded in the oracle may be transparent to $\prover_O$.
An equivalent two-player protocol that mimics the query computation can be described as the following protocol:
\begin{enumerate}
\item A player $\prover_A$ performs $U_0$ on the zero state, and teleports the state to the other player $\prover_O$.
\item $\prover_O$ performs $O$ on the quantum state and teleport the output state back.
\item $\A$ performs $U_1$ on the input state and teleports the state.
\item Repeat the last two steps $(q-1)$ times with unitaries $U_2,\ldots,U_q$ and outputs $w$.
\item The verifier accepts if the oracle $O$ and $w$ satisfies some relation $R$ and rejects otherwise.
\end{enumerate}

When the players behave honestly, i.e., $\prover_O$ performs $O$ in each iteration, and $\prover_A$ performs $U_0,\ldots,U_q$ in order, the verifier accepts with probability exactly the same as the query algorithm $\A$ solving $R$.
When neither of the players is trusted, we ask the question: given a query algorithm $\A$, can we construct an equivalent protocol against untrusted players?
To answer the question, we specify the classes of problems which we consider in this work, and the equivalence between a query algorithm and a two-player protocol.

\paragraph{Oracle separation problems.}
For quantum complexity classes $\C_{yes},\C_{no}$, an oracle separation problem $R$ is a relation between the oracle $O$ (drawn from a public distribution) and a binary string $w$ such that %
\begin{itemize}
\item there is a quantum query algorithm $\A$ that is given access to $O$ and runs in $\C_{yes}$, outputs $w$ such that $R(O,w)=1$ with probability at least $p$, and
\item no quantum algorithm that is given access to $O$ and runs in $\C_{no}$, outputs $w$ such that $R(O,w)=1$ with probability at most $p'$.
\end{itemize}
Here, the probabilities $p$ and $p'$ are defined by $\C_{yes}$ and $\C_{no}$, e.g., if $\C_{yes} = \class{PP}$ and $\C_{no} = \class{BPP}$, $p=1/2$ and $p'=1/3$.

\paragraph{Two-player protocol for separation.}
A two-player protocol distinguishes $\C_{yes}$\footnote{In this work, we only focus on the machines that can at least teleport quantum states.} from $\C_{no}$ if there exists a classical verifier $\verifier$ interacting with two provers $\prover_O$ and $\prover_A$ such that the following conditions hold.
\begin{itemize}
\item \textbf{Non-locality.} $P_O$ and $P_A$ share an arbitrary quantum state and there is no quantum and classical channel between once the protocol starts. 

\item \textbf{Completeness.} There exists $\prover_O$ and $\prover_A$ that runs in $\C_{yes}$ such that 
$\Pr[\langle V,P_O,P_A\rangle=accept]\geq c$.

\item \textbf{Soundness.} For every $\prover_A$ that runs in $\C_{no}$ and any $\prover_O$, 
$\Pr[\langle V,P_O,P_A\rangle=accept]\leq s$.
Note that the protocol is sound for unbounded $P_O$.
\end{itemize}

\paragraph{Transforming oracle separation into two-player protocols.}
Our goal is to transform an oracle separation problem $R$ into a two-player protocol that distinguishes $\C_{yes}$ from $\C_{no}$.
In this work, we focus on the cases where the oracle $O$ is quantum accessible,  %
and the predicate $R$ can be computed deterministically. %
For real numbers $s\geq 0$ and $c\geq s+n^{-O(1)}$, we say a two-player protocol is $(c,s)$-equivalent to a problem $R$ if there exists a classical verifier $\verifier$ such that the following conditions hold.
\begin{itemize}
\item \textbf{Completeness.} For every $\A^O$ that runs in $\C_{yes}^O$ and solves $R$ with probability $2/3$, there exist 
$\prover_A$ that runs in $\C_{yes}$ and a quantum prover $P_O$, such that $\verifier$ accepts with probability at least $c$.
\item \textbf{Soundness.} For every $\prover_A$ that runs in $\C_{no}$, if $\verifier$ accepts with probability more than $s$, there exists an oracle algorithm $\A^O$ that runs in $\C_{no}^O$ and solves $R$ with probability more than $1/3$.

\end{itemize}
The completness statemenet together with the contrapositive of the soundness statement implies that if there is an oracle separation problem between $\C_{yes}$ and $\C_{no}$, then there is a two-player protocol that separates $\C_{yes}$ and $\C_{no}$.
We note the thresholds 2/3 and 1/3 are unimportant for our analysis; they can be replaced for giving a contradiction.

\section{A two-player non-local game for oracle separation}
\label{sec:leash_protocol}

In this section, we show how to classically verify a query algorithm.
More specifically, we show how to turn a query algorithm into a two-player protocol that preserves the completeness and soundness, up to an inverse polynomial multiplicative factor.
Our protocol consists of two quantum provers $\prover_A$ and $\prover_O$ sharing entanglements and interacting with a purely classical verifier $\verifier$.
The prover $\prover_O$ is designated to perform quantum computation $O$, and the prover $\prover_A$ performs arbitrary quantum computation to learn the information about $O$.

The protocol consists of two phases: in a query phase, $\prover_O$ receives a teleported quantum state from $\prover_A$.
An honest $\prover_O$ performs the unitary map of the oracle $O$ and teleports the resulting state to $\prover_A$.
Next, in the computation phase, $\prover_A$ may perform any quantum computation, subject to its resource constraints.

To classically verify whether $\prover_A,\prover_O$ instantiates a query algorithm in the two-player model, 
it is crucial that the verifier $\verifier$ can classically verify the $\prover_O$ behaves as intended.
To this end, we rely on the Leash protocol by Coladangelo, Grilo, Jeffery and Vidick \cite{CGJV19}, as introduced in \sec{leash}.
In this protocol, the classical verifier interacts with two entangled provers PP and PV.
PP is designated to perform a quantum computation $U$ on encrypted classical input $\ket{x}$ (using quantum one-time pad).
The Leash protocol relies on the Broadbent protocol to verify that PP performs $U$, conditioned on PV performing the correct single-qubit measurement for each $T$-gadget.
To classically verify PV's measurements, the verifier chooses either to run either a rigidity test to verify PV, or to execute the Broadbent protocol.
It is crucial that a rigidity test looks indistinguishable to PV from the real execution of the Broadbent protocol,
while at the same time PP needs to play differently for these two tests.
On the other hand, when running the Broadbent protocol, PP and PV are required to perform one of three indistinguishable tests introduced in the previous sections.
This allows the verifier to make sure PP behaves as intended.

To verify a query algorithm, a naive approach is to run the Leash protocol to check in every query, $\prover_O$ honestly run the unitary $O$, and $\prover_A$ performs the correct measurements (for the $T$-gadgets).
However, note that in the Leash protocol, the tests hide the round type from the provers, and only one ``query'' to PP is made.
Between the queries, $\prover_A$ performs any computation on the plaintext (which can be done by applying the Pauli correction on the teleported state from $\prover_O$).
When integrating the Leash protocol for multiple queries, the round type could possibly be leaked, at least to $\prover_O$: 
imagine that $\prover_A$ teleports its quantum state $\ket{\psi}$ to $\prover_O$.
When the verifier chooses to run a rigidity test, $\prover_O$ actually performs corresponding measurements on fresh EPR pairs, and leaves the state $\ket{\psi}$ unchanged.
Thus $\prover_O$ on receiving the teleported quantum state, can run a swap test to check if the returned state has been changed.

To overcome the issue, our approach is to run the tests in the Leash protocol only once: 
in the beginning of the protocol, the verifier tosses a biased coin $\gamma$ such that $\Pr[\gamma=0]=O(1/q)$. If $\gamma=1$, the verifier chooses a uniformly random index $\ell\in\{1,\ldots,q\}$, and performs a random test. %
In particular, when $\gamma=1$, the verifier chooses a random iteration $i\in[q]$, and runs the Leash protocol for the $i^{th}$ query, i.e., running a $X$-test, a $Z$-test, or a rigidity test with probability determined later.
If the provers pass the test, then the verifier simply accepts and terminates (i.e., ignores the rest of the queries).
If $\gamma=0$, the verifier only performs the same check as in the original query algorithm, i.e., it checks whether $\prover_A$'s final output $w$ satisfies $R(O,w)=1$.

We describe the protocol in the following steps.
Note that though the provers do not necessarily follow the steps, we include the honest behavior for concreteness and clarity.
\begin{protocol}[$\query(q,\O,V)$]\label{prot:query}~
Let $p=1/3$ and $\alpha=\Theta(1/q)$.
\begin{enumerate}
\item The verifier $\verifier$ samples a oracle $O\gets_R\O$ for oracle ensemble $\O$ and a bit $\gamma$ with $\Pr[\gamma=0]=\alpha$. %
  If $\gamma=1$, then $\verifier$ samples a random index $\ell\gets_R[q]$.

\item If $\gamma=1$, for $i\in\{1,\ldots,\ell-1\}$, run $\comp(\Sigma,O,m)$.
  For $i=\ell$, 
  \begin{enumerate}
  \item $\verifier$ chooses a random protocol from $\{\xtest(\Sigma,O,m),\ztest(\Sigma,O,m),\rigid(\Sigma,m)\}$ with probability $p_X=p,p_Z=p,p_R=1-2p$ respectively. %
  \item If the test succeeds then the verifier accepts and terminates; otherwise it rejects.
  \end{enumerate}

  Otherwise, $\gamma=0$, for $i\in\{1,\ldots,q\}$, run $\comp(\Sigma,O,m)$.
  Finally $\prover_A$ sends an answer $w$, and the verifier outputs 1 if $V(O,w)=1$ and 0 otherwise.
\end{enumerate}
\end{protocol}

The protocol $\query$ (\prot{query}) describes $\verifier$'s behavior to initiate the protocol. 
More specifically, $\verifier$ randomly decides to do computation or tests first. 
If $\verifier$ decides to do tests, it randomly picks one iteration $\ell$ from $[q]$ and randomly selects a test ($X$, $Z$, or rigidity) with probability $p_X,p_Z,p_R$ respectively. 
Then $\verifier$ applies corresponding protocols $\xtest$ (\prot{X}), $\ztest$ (\prot{Z}), and $\rigid$ (\prot{R}) at iteration $\ell$ and always applies $\comp$ (\prot{comp}) to other rounds.  

Steps 1 to 4 in these four protocols are the same. 
In particular, the verifier $\verifier$ divides the EPR pairs shared between $\prover_O$ and $\prover_A$ into subsets $N_C$, $N_X$, and $N_Z$ for running $\comp$ (\prot{comp}), $\xtest$ (\prot{X}), and $\ztest$ (\prot{Z}) with corresponding measurements (chosen from the set $\Sigma$ for each EPR pair). 
The registers $B_1,\dots,B_d$ are used for applying the $T$ gadgets. 
Then, the verifier asks $\prover_A$ to measure its halves of the EPR pairs according to $W$, to teleport his queries to $\prover_O$, and to send the corresponding Pauli correction to the $\verifier$. 
The verifier $\verifier$ sets $N$ according to the round type and reveals $N$ only to $\prover_O$, but the underlying state remains unknown to $P_O$ since it is encrypted by a quantum one-time pad. 
On the other hand, $\prover_A$ cannot distinguish $\comp$ (\prot{comp}), $\xtest$ (\prot{X}) or $\ztest$ (\prot{Z}) since it does not know $N$. Moreover, from $\prover_A$'s view, $\rigid$ (\prot{R}) looks the same as the other tests. This implies that $P_A$ cannot know if $\verifier$ is chooses to run $\rigid$ (\prot{R}) to check its behavior. 

\begin{protocol}[$\comp(\Sigma,O,m)$]\label{prot:comp}~
  \begin{enumerate}

    \item Setup:
      \begin{enumerate}
      \item $\verifier$ samples a random subset $N_C\subseteq[m]$ of size $n$, where $n$ is the number of qubits $O$ acts on. %
      \item For each $i\in\bar N_C$, $\verifier$ samples $W_i\gets_R\Sigma$.
      \item Then $\verifier$ samples two random subsets $N_X\subseteq\{i: W_i=X\}$ and $N_Z\subseteq\{i: W_i=Z\}$, both of size $n$.
      \item $\verifier$ sets $N=N_C$.

      \item $\verifier$ partitions $[m]\backslash (N_C\cup N_X\cup N_Z)$ into subsets of equal size $B_1,\ldots,B_d$, where $d$ is the number of $T$ layers in $O$. 
      \end{enumerate}

    \item $(\verifier\to\prover_A)$ $\verifier$ sends $N_C, W_{\bar N_C}$ to $\prover_A$ sequentially.
    \item $(\prover_A\to\verifier)$ 
      $\prover_A$ teleports its quantum state $\ket{\psi}$ using EPR pairs with indices in $N$, and sends the Pauli correction $(a_{N_C},b_{N_C})$.
      For each $i\in\bar N_C$, perform measurements on the $i^{th}$ EPR pair in basis $W_i$.
      Finally $\prover_A$ sends the outcomes $e_{\bar N_C}$.
  \item $(\verifier\to\prover_O)$: $\verifier$ sends $N$ to $\prover_O$.

  \item For each $\ell=1,\ldots,d$, 
    \begin{enumerate}
    \item $(\verifier\to\prover_O)$ $\verifier$ chooses a random subset $T_\ell\subseteq \{i\in B_\ell: W_i\in\{G,F\}\}$, and sends $T_\ell$ to $\prover_O$.
      For each Clifford gate in the $\ell$-th layer, perform the appropriate key update.

    \item $(\prover_O\to\verifier)$: $\prover$ performs the Clifford operations in the $\ell$-th layer. For each $T$-gadget in the $\ell$-th layer, $\prover_O$ runs the $T$-gadget on $(i,j)\in T_\ell\times N$, and sends the measurement outcome $c_{T_\ell}$ to $\verifier$.
    \item $(\verifier\to\prover_O)$ For each $i\in T_\ell$, set $z_{i}=a_{j}+1_{W_i=F}+c_i$. $\verifier$ sends $z_{T_\ell}$ to $\prover_O$.
    \end{enumerate}

  \item $(\prover_O\to\verifier)$ Let $\ket{\phi}$ be the resulting state after the evaluation of $O$ is done. 
    $\prover_O$ teleports $\ket{\phi}$ to $\prover_A$ and sends the Pauli correction $a',b'$ to $\verifier$.
  \end{enumerate}
\end{protocol}

In the following steps in \prot{comp}, $\verifier$ basically guides $\prover_O$ to apply the oracle on $\ket{\psi}$ using corresponding gadgets and updates the keys according to the measurement outcomes of $\prover_O$ and $\prover_A$. 

\begin{protocol}[$\xtest(\Sigma,O,m)$]\label{prot:X}~
  \begin{enumerate}
    \item Setup:
      \begin{enumerate}
      \item $\verifier$ samples a random subset $N_C\subseteq[m]$ of size $n$.
      \item For each $i\in\bar N_C$, $\verifier$ samples $W_i\gets_R\Sigma$.
      \item Then $\verifier$ samples two random subsets $N_X\subseteq\{i: W_i=X\}$ and $N_Z\subseteq\{i: W_i=Z\}$, both of size $n$.
      \item $\verifier$ sets $N=N_X$.

      \item $\verifier$ partitions $[m]\backslash (N_C\cup N_X\cup N_Z)$ into subsets of equal size $B_1,\ldots,B_d$, where $d$ is the number of $T$ layers in $O$.
      \end{enumerate}

    \item $(\verifier\to\prover_A)$ $\verifier$ sends $N_C, W_{\bar N_C}$ to $\prover_A$ sequentially.
    
    \item $(\prover_A\to\verifier)$ 
      $\prover_A$ teleports its quantum state $\ket{\psi}$ using EPR pairs with indices in $N$, and sends the Pauli correction $(a_{N_C},b_{N_C})$.
      For each $i\in\bar N_C$, perform measurements on the $i^{th}$ EPR pair in basis $W_i$.
      Finally $\prover_A$ sends the outcomes $e_{\bar N_C}$.
  \item $(\verifier\to\prover_O)$: $\verifier$ sends $N$ to $\prover_O$.
  \item For each $\ell=1,\ldots,d$, 
    \begin{enumerate}
    \item $(\verifier\to\prover_O)$ $\verifier$ chooses a random subset $T_\ell=T_\ell^0\cup T_\ell^1$ such that $T_\ell^0$ is a random subset of $\{i: W_i=Z\}$, and $T_{\ell}^1$ is a random subset of $\{i:W_i\in\{X,Y\}\}$. $\verifier$ sends $T_\ell$ to $\prover_O$.
      For each Clifford gate in the $\ell$-th layer, perform the appropriate key update.

    \item $(\prover_O\to\verifier)$: $\prover_O$ performs the Clifford operations in the $\ell$-th layer. For each $T$-gadget in the $\ell$-th layer, $\prover_O$ runs the $T$-gadget on $(i,j)\in T_\ell\times N$, and sends the measurement outcome $c_{T_\ell}$ to $\verifier$.
    \item $(\verifier\to\prover_O)$ For each $i\in T_\ell^0$, set $z_{i}\gets_R\bit$; if $i\in T_{\ell}^1$, $z_i=1_{W_i=Y}$. $\verifier$ sends $z_{T_\ell}$ to $\prover_O$.
    $\prover_O$ teleports $\ket{\phi}$ to $\prover_A$ and sends the Pauli correction $a',b'$ to $\verifier$.
    \end{enumerate}
  \item $(\prover_O\to\verifier)$ Let $\ket{\phi}$ be the resulting state after the evaluation of $O$ is done. 
    $\prover_O$ teleports $\ket{\phi}$ to $\prover_A$ and sends the Pauli correction $a',b'$ to $\verifier$.
  \item ($\verifier\to\prover_A$) $\verifier$ requests $\prover_A$ to perform a standard basis measurement on every qubit of the teleported state.
  \item ($\prover_A\to\verifier$) $\prover_A$ performs a standard basis measurement on every qubit of the teleported state and sends the outcome $d$.
    $\verifier$ accepts if and only if $d\oplus a' \oplus a''=0$, where $a''$ is calculated by the key update rule.
  \end{enumerate}
\end{protocol}

\begin{protocol}[$\ztest(\Sigma,O,m)$]\label{prot:Z}~
  \begin{enumerate}
    \item Setup:
      \begin{enumerate}
      \item $\verifier$ samples a random subset $N_C\subseteq[m]$ of size $n$.
      \item For each $i\in\bar N_C$, $\verifier$ samples $W_i\gets_R\Sigma$.
      \item Then $\verifier$ samples two random subsets $N_X\subseteq\{i: W_i=X\}$ and $N_Z\subseteq\{i: W_i=Z\}$, both of size $n$.
      \item $\verifier$ sets $N=N_Z$.

      \item $\verifier$ partitions $[m]\backslash (N_C\cup N_X\cup N_Z)$ into subsets of equal size $B_1,\ldots,B_d$, where $d$ is the number of $T$ layers in $O$.
      \end{enumerate}

    \item $(\verifier\to\prover_A)$ $\verifier$ sends $N_C, W_{\bar N_C}$ to $\prover_A$ sequentially.
    \item $(\prover_A\to\verifier)$ 
      $\prover_A$ teleports its quantum state $\ket{\psi}$ using EPR pairs with indices in $N$, and sends the Pauli correction $(a_{N_C},b_{N_C})$.
      For each $i\in\bar N_C$, perform measurements on the $i^{th}$ EPR pair in basis $W_i$.
      Finally $\prover_A$ sends the outcomes $e_{\bar N_C}$.
  \item $(\verifier\to\prover_O)$: $\verifier$ sends $N$ to $\prover_O$.
  \item For each $\ell=1,\ldots,d$, 
    \begin{enumerate}
    \item $(\verifier\to\prover_O)$ $\verifier$ chooses a random subset $T_\ell=T_\ell^0\cup T_\ell^1$ such that $T_\ell^0$ is a random subset of $\{i: W_i=\{X,Y\}\}$, and $T_{\ell}^1$ is a random subset of $\{i:W_i\in Z\}$. $\verifier$ sends $T_\ell$ to $\prover_O$.
      For each Clifford gate in the $\ell$-th layer, perform the appropriate key update.

    \item $(\prover_O\to\verifier)$: $\prover_O$ performs the Clifford operations in the $\ell$-th layer. For each $T$-gadget in the $\ell$-th layer, $\prover_O$ runs the $T$-gadget on $(i,j)\in T_\ell\times N$, and sends the measurement outcome $c_{T_\ell}$ to $\verifier$.
    \item $(\verifier\to\prover_O)$ For each $i\in T_\ell^1$, set $z_{i}\gets_R\bit$; if $i\in T_{\ell}^0$, $z_i=1_{W_i=Y}$. $\verifier$ sends $z_{T_\ell}$ to $\prover_O$.
    $\prover_O$ teleports $\ket{\phi}$ to $\prover_A$ and sends the Pauli correction $a',b'$ to $\verifier$.
    \end{enumerate}
  \item $(\prover_O\to\verifier)$ Let $\ket{\phi}$ be the resulting state after the evaluation of $O$ is done. 
    $\prover_O$ teleports $\ket{\phi}$ to $\prover_A$ and sends the Pauli correction $a',b'$ to $\verifier$.
  \item ($\verifier\to\prover_A$) $\verifier$ requests $\prover_A$ to perform a Hadamard basis measurement on every qubit of the teleported state.
  \item ($\prover_A\to\verifier$) $\prover_A$ performs a Hadamard basis measurement on every qubit of the teleported state and sends the outcome $d$.
    $\verifier$ accepts if and only if $d\oplus b' \oplus b''=0$, where $b''$ is calculated by the key update rule.
  \end{enumerate}
\end{protocol}

\prot{X} and \prot{Z} originate from the Broadbent protocol to check if $\prover_O$ is consistent with $O$. 
The crucial idea, as introduced in \sec{broadbent}, is that $\prover_O$ acts as applying identity on an all-zero or an all-plus state up to key update. 
Therefore, the verifier can detect if there is an attack applied to the computation: First $V$ asks $P_A$ to perform a standard basis or a Hadamard basis measurement on the state received from $P_O$.
Then it applies the key update rules to compute the decryption key, and check if the measurement outcomes from $P_A$ decrypts to zero.

Since the above steps relies on reliable measurements performed by $P_A$, it is essential to enforce $P_A$ to perform the measurements correctly.
We include our modification $\rigid(\Sigma,m)$ of the rigidity test in \sec{leash}.
The test is the same as $\rigid'(\Sigma,|\bar N|)$ on a random subset $\bar N$ of $[m]$. %
The purpose of $\rigid$ (\prot{R}) is to check if $\prover_A$ measures its EPR pairs in bases $W'$. 
From the collection of measurement outcomes for questions $W$ to $P_A$ and $W'$ to $P_O$, $V$ checks if the outcomes follows the relation specified in \prot{R}.
By \thm{cliff}, passing with probability $1-\epsilon$ ensures $P_A$'s output is $\poly(\epsilon)$-close in total variation distance to a measurement performed on EPR pairs in the correct bases $W$.
Note that $\prover_O$ knows that $\verifier$ chooses to execute $\rigid$ (\prot{R}) after receiving $W$ from $\verifier$. However, $\prover_A$ does not know this since it only receives random partitions and $W$, indistinguishable from the messages he obtained in other protocols. 
\begin{protocol}[$\rigid(\Sigma,m)$]\label{prot:R}~
  \begin{enumerate}
      \item Setup: 
        \begin{enumerate}
            \item $\verifier$ samples a random subset $N_C\subseteq[m]$ of size $n$.
            \item For each $i\in \bar N_C$, $\verifier$ samples $W_i\gets_R\Sigma$.
            \item Then $\verifier$ samples two random subsets $N_X\subset\{i:W_i=X\}$ and $N_Z\subseteq\{i:W_i=Z\}$, both of size $n$.
            \item $\verifier$ sets $N=N_C$.
            \item $\verifier$ partitions $[m]\backslash(N_C\cup N_X\cup N_Z)$ into subsets of equal size $B_1,\ldots,B_d$, where $d$ is the number of $T$ layers in $O$.
        \end{enumerate}

    \item Execute $\rigid'(\Sigma,|\bar N|)$ on the subset $\bar N$ and output the outcome.
  \end{enumerate}
\end{protocol}

\subsection{Completeness}

The completeness immediately follows: for algorithm $\A$, the provers $\prover_O$ performs $O$ and $\prover_A$ performs $U_i$ in each iteration $i$.
Then the provers with probability 1 if $\xtest$ or $\ztest$ is chosen.
By \thm{cliff}, when $\rigid$ is chosen, they succeeds with probability $1-\exp(-\Omega(n+t))$. 
We give a proof as follows.
\begin{theorem}[Completeness]\label{thm:leash-completeness}
  For every $q$-query algorithm $\A^O$ which outputs $w$ satisfying $R(O,w)=1$ with probability at least $2/3$, there exist provers $\prover_O,\prover_A$ which succeed with probability at least $1-\frac{\alpha}{3}-\exp(-\Omega(n+t))$.
\end{theorem}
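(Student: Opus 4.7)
The plan is to exhibit an honest strategy for $\prover_O$ and $\prover_A$ that faithfully simulates $\A^O$ inside the teleportation-based framework, and then argue that this pair succeeds in every branch of \prot{query}. The honest $\prover_A$ prepares the state $\ket{\psi_i}$ that sits on the query register just before the $i$-th query of $\A^O$, teleports it through the EPR pairs indexed by $N$ and, after receiving $\prover_O$'s response, applies the Pauli correction dictated by the running key that $\verifier$ maintains via \tabl{key-update}. On all non-query EPR halves $\bar N_C$, $\prover_A$ measures in the bases $W_{\bar N_C}$ announced by $\verifier$. The honest $\prover_O$, after being told $N$ and the per-layer subsets $T_\ell$ (with the bits $z_{T_\ell}$), implements $O$ under quantum one-time pad by stringing together Clifford gates and $T$-gadgets exactly as in \sec{broadbent}, then teleports the output back with the appropriate Pauli announcement.

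Next I would verify the three branches of \prot{query} separately. When $\gamma=0$, every round is a $\comp$ round; since a computation-mode $T$-gadget (first row of \tabl{W}) implements a $T$ gate up to the key update in \tabl{key-update}, a straightforward induction on $i$ shows that after the $i$-th round the encoded state is a one-time-pad encryption of $O U_{i-1} \cdots O U_0 \ket{0}$ under the key $\verifier$ is tracking. Hence the decrypted output $w$ that $\prover_A$ returns is distributed exactly as the output of $\A^O$, and by assumption $R(O,w)=1$ with probability at least $2/3$. When $\gamma=1$ and the chosen test at index $\ell$ is $\xtest$ or $\ztest$, $\verifier$ sets $N=N_X$ or $N=N_Z$, which by construction encode $\ket{0^n}$ or $\ket{+^n}$ under a random Pauli key; the even/odd-parity $T$-gadgets (second and third rows of \tabl{W}, applied according to the randomized $T_\ell^0 \cup T_\ell^1$ partition) act as identity up to key updates on these eigenstates, so an honest $\prover_O$ returns the encrypted zero or plus state and $\prover_A$'s final measurement decrypts to $0^n$ with probability $1$. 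When the chosen test is $\rigid$, \prot{R} reduces to running $\rigid'(\Sigma,|\bar N|)$ on $|\bar N| = \Omega(n+t)$ EPR pairs, which the honest strategy wins with probability at least $1-\exp(-\Omega(n+t))$ by \thm{cliff}.

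Combining these estimates gives
\[
\Pr[\text{accept}] \;\ge\; \alpha \cdot \tfrac{2}{3} \;+\; (1-\alpha)\bigl(1 - e^{-\Omega(n+t)}\bigr) \;\ge\; 1 - \tfrac{\alpha}{3} - e^{-\Omega(n+t)},
\]
where the second term absorbs the (perfect) success probability of $\xtest$ and $\ztest$ together with the rigidity error, since $\rigid$ is the only test in the $\gamma=1$ branch that is not passed with probability $1$.

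The main obstacle will be the bookkeeping needed to check that the encoded state after $\prover_O$'s return equals the expected encryption in $\xtest$ and $\ztest$. Concretely, one must track how the parity (even vs.\ odd) assigned to each $T$-gadget inside a Hadamard gadget interacts with the random $z_i$ chosen by $\verifier$, and verify that the accumulated key $(a'',b'')$ produced by repeatedly applying \tabl{key-update} matches the Pauli announcement $(a',b')$ returned by $\prover_O$ so that $d \oplus a' \oplus a'' = 0$ (resp.\ $d \oplus b' \oplus b'' = 0$). This calculation is routine once one inherits the Broadbent analysis, but must be spelled out because our $\xtest$ and $\ztest$ additionally split $T_\ell$ into $T_\ell^0 \cup T_\ell^1$ to accommodate the measurement choices in $\Sigma$ rather than only $\{X,Z\}$.
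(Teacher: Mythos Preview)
Your proposal is correct and follows essentially the same approach as the paper: exhibit the honest strategy in which $\prover_O$ applies $O$ and $\prover_A$ applies $U_i$, note that $\xtest$ and $\ztest$ are passed with probability $1$ while $\rigid$ is passed with probability $1-\exp(-\Omega(n+t))$ by \thm{cliff}, and combine with the $\gamma=0$ contribution $\alpha\cdot c'$ for $c'\geq 2/3$. The paper's proof is in fact terser than yours---it simply asserts the test-round success probabilities without the key-update bookkeeping you flag as the ``main obstacle''---and it separates the $\xtest$/$\ztest$ and $\rigid$ contributions in the final sum rather than bundling them, but this yields the identical bound $1-\frac{\alpha}{3}-\exp(-\Omega(n+t))$.
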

\begin{proof}
For every algorithm $\A$ that succeeds with probability at least $c'\geq 2/3$, in each iteration $i$, $\prover_O$ runs $O$ and $\prover_A$ runs $U_i$.
The provers passes $\xtest$ and $\ztest$, when they are chosen, with probability 1.
By \thm{cliff}, when the verifier chooses to execute $\rigid$ (with probability $(1-\alpha)/3$), honest provers succeed with probability $1-\exp(-\Omega(n+t))$ since CHSH games are run in sequential repetition.
Thus the success probability of the provers is
\begin{align}\nonumber
  \alpha\cdot c' + \frac{2(1-\alpha)}{3} + \frac{1-\alpha}{3} (1-\exp(-\Omega(n+t)))
  &= 1-\alpha(1-c') - \frac{1-\alpha}{3}\exp(-\Omega(n+t)) \\
  &\geq 1-\frac{\alpha}{3} - \exp(-\Omega(n+t)).
\end{align}
  
\end{proof}

\subsection{Soundness}

To show the soundness, recall that it suffices to show that if the provers succeed with probability more than $s$, then there exists a query algorithm which is accepted with probability more than 1/3.
First we show a simpler case in which $\prover_A$ behaves honestly.

\begin{lemma}[Soundness with honest $\prover_A$]\label{lem:soundness-honest-pa}
  There exists $\alpha=\Theta(1/q)$ such that, for every provers $\prover_O$ that succeeds with success probability $s>1-\frac{2\alpha}{3}$, there exists a query algorithm that is given access to $O$ and outputs $w$ satisfying $R(O,w)=1$ with probability more than $1/3$. 
\end{lemma}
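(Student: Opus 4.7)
The plan is to show that any $\prover_O$ making $V$ accept with probability strictly greater than $1-2\alpha/3$, when paired with the honest $\prover_A$, can be converted into an oracle algorithm $\A^O$ that outputs $w$ satisfying $R(O,w)=1$ with probability strictly greater than $1/3$. The central observation is that, from $\prover_O$'s viewpoint, the four sub-protocols $\comp$, $\xtest$, $\ztest$, and $\rigid$ are indistinguishable up to the moment $V$ discloses $N$: in all of them $\prover_A$ teleports through a uniformly chosen subset of EPR pairs and measures the remaining halves in the announced bases $W$. Consequently $\prover_O$'s action on the $n$-qubit state it receives in iteration $i$ is a well-defined quantum channel $\E_i$. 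Writing $p_{\text{comp}}$ for the success probability in the $\gamma=0$ branch and $\epsilon_i^X,\epsilon_i^Z$ for the failure probabilities of $\xtest$ and $\ztest$ at $\ell=i$, the hypothesis $\Pr[V\text{ accepts}]>1-2\alpha/3$ unpacks (after dropping the nonnegative $\rigid$ contribution) into
\begin{align*}
\alpha(1-p_{\text{comp}})+\frac{(1-\alpha)p}{q}\sum_{i=1}^{q}(\epsilon_i^X+\epsilon_i^Z)<\frac{2\alpha}{3}.
\end{align*}

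Next, I would apply \thm{broadbent} iteration by iteration. The sub-protocols $\xtest$ and $\ztest$ are designed so that, once $\prover_A$'s honest teleportation and $W$-basis measurements are factored out, $\prover_O$ faces a quantum-one-time-padded $\ket{0^n}$ (respectively $\ket{+^n}$), and the $T$-gadget choices $z_{T_\ell}$ force the ``identity up to key update'' cases of \tabl{W}. Thus iteration $i$ embeds a Broadbent test for $\E_i$, and \thm{broadbent} yields $\|\E_i-O\|_\diamond\leq 2(\epsilon_i^X+\epsilon_i^Z)$.

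The algorithm $\A^O$ is then the internal simulation of the $\gamma=0$ branch, using honest $\prover_A$'s strategy but substituting an actual oracle query to $O$ for $\prover_O$'s channel $\E_i$ in every iteration. A standard hybrid argument gives
\begin{align*}
\Pr[\A^O\text{ solves }R]\ \geq\ p_{\text{comp}}-\sum_{i=1}^{q}\|\E_i-O\|_\diamond\ \geq\ 1-\delta-2S,
\end{align*}
where $\delta=1-p_{\text{comp}}$ and $S=\sum_i(\epsilon_i^X+\epsilon_i^Z)$. The displayed inequality from the first step rearranges to $\delta+\tfrac{(1-\alpha)p}{\alpha q}S<2/3$, so to conclude $\delta+2S<2/3$ (and hence $\Pr[\A^O\text{ solves }R]>1/3$) it suffices that $\tfrac{(1-\alpha)p}{\alpha q}\geq 2$, equivalently $\alpha\leq p/(p+2q)=\Theta(1/q)$.

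I expect the main obstacle to lie in the second step: making precise, in the multi-prover, sequentially composed setting, that $\prover_O$ in iteration $i$ really does face a Broadbent test so that \thm{broadbent} applies as a black box. One must verify that the coupling of $\prover_O$'s measurements to the random $W_i$, $T_\ell$, and $z_{T_\ell}$ preserves the quantum-one-time-pad structure on which the Pauli-twirl argument underlying \thm{broadbent} relies, and that $\prover_O$'s possibly entangled state across iterations does not invalidate the per-iteration rigidity bound. The $\rigid$ contribution, negligibly bounded by $1-e^{-\Omega(m)}$ under honest $\prover_A$ via \thm{cliff}, is absorbed into a $\negl(n)$ additive term that can be chosen smaller than any desired inverse polynomial in $q$ by taking $m$ suitably large.
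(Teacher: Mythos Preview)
Your proposal is correct and follows essentially the same route as the paper: bound the per-iteration test failure, invoke \thm{broadbent} to get $\|\E_i-\O\|_\diamond$ small, run a hybrid over the $q$ iterations, and choose $\alpha=\Theta(1/q)$ so the arithmetic closes; your condition $\alpha\leq p/(p+2q)$ is exactly the paper's choice $\alpha=1/(1+2q/p)$. The only cosmetic difference is that you track $\epsilon_i^X,\epsilon_i^Z$ separately and drop the $\rigid$ term at the outset, whereas the paper lumps them into a single $\epsilon_i$ and bounds each test failure by $\epsilon_i/p$; your final paragraph's appeal to \thm{cliff} for the $\rigid$ contribution is unnecessary (and conflates completeness with soundness), since you already discarded that term as a nonnegative contribution to the failure probability.
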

\begin{proof}
  Let the success probability be $1-\epsilon_i$ conditioned on $\gamma=1$ and the chosen index is $i$.
  The the probability of failing $\xtest$ or $\ztest$ is at most $p^{-1}\epsilon_i$.
  Then the quantum channel $\E_i$ that $\prover_A$ implements at iteration $i$ satisfies $\|\E_i-\O\|_\diamond\leq 2p^{-1}\epsilon_i$.

  Then let $\A$ be $\U_q\circ\O\circ\U_{q-1}\circ\cdots \circ \U_1\circ\O\circ \U_0$.
  By union bound, 
  \begin{align}\label{eq:distance-soundness}
    \|\A - \U_q\circ\E_q\circ\U_{q-1}\circ\cdots \circ \U_1\circ\E_1\circ \U_0\|_{\diamond} \leq \frac{2}{p}\sum_{i=1}^t \epsilon_i = 2qp^{-1}\epsilon,
  \end{align}
  where $\epsilon:=\frac{1}{q}\sum_i\epsilon_i \in [0,1]$.
  Then the success probability of $\prover_O$ is 
  \begin{align}
  s= (1-\alpha)(1-\epsilon)+\alpha\delta
  \leq(1-\alpha)(1-\epsilon)+\alpha (p_\A + 2q\epsilon/p),
    \end{align}
  where $\delta$ is the probability that the provers output $w$ such that such that $R(O,w)=1$ conditioned on $\gamma=0$ (i.e., the second quantum channel in \eq{distance-soundness}), and $p_\A$ is the success probability of $\A$.
  This implies that 
  \begin{align}
    p_\A \geq \frac{s}{\alpha} - \Big(\frac{1}{\alpha}-1\Big) (1-\epsilon) - 2q\epsilon/p = 1 - \frac{1-s}{\alpha} + \epsilon \Big(\frac{1}{\alpha}-1-\frac{2q}{p}\Big).
  \end{align}
  Setting $\alpha = \frac{1}{1+2q\cdot c/p}$ for any constant $c>0$, we conclude that $p_\A\geq 1-\frac{1-s}{\alpha}$.
  If $s>1-\frac{2\alpha}{3}$, $p_\A> 1/3$.
\end{proof}

Now we consider the effect of a cheating $\prover_A$.
The crucial idea is if $P_A$ chooses to deviate non-trivially from the protocol by $\epsilon$ in total variation distance, then the probability it is accepted when $\rigid$ (\prot{R}) is chosen is then at most $1-\epsilon$.
As argued previously, since $P_A$ does not learn whether $\rigid$ is selected, the same strategy must have been applied for other tests.
This implies that in the delegation game (where $\xtest$, $\ztest$ or $\comp$ is chosen), the score can be at most increased by at most $\poly(\epsilon)$.
This is because as shown in \thm{cliff}, the rigidity test guarantees that for every pair of provers succeeds with probability $1-\epsilon$, the output transcript must be $\poly(\epsilon)$-close to the that from an honest strategy in total variance distance.
More formally, for every pair of provers $P_O$ and $P_A$ such that they succeed in the rigidity test with probability $1-\epsilon$ and in the delegation game with probability $q$,
there exist $P_O'$ and $P_A'$ such that $P_A'$ plays honestly (i.e., performs a correct measurement on a half of every EPR pairs) and they succeed in the delegation game with probability $q-\poly(\epsilon)$.
We use the result to prove the following theorem.
\begin{theorem}[Soundness]\label{thm:leash-soundness}
  For constant $p$, there exists $\alpha=1/\poly(q)$, such that for every pair of provers that succeeds with probability $s>1-\frac{2\alpha}{3}$, 
  there exists a query algorithm that is given access to $O$ and outputs $w$ satisfying $R(O,w)=1$ with probability more than $1/3$.
\end{theorem}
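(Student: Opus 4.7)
The plan is to reduce the general soundness claim to \lem{soundness-honest-pa} by using the rigidity test $\rigid$ (\prot{R}) to ``honestify'' $\prover_A$ across the entire protocol. The crucial design feature enabling this is that the four sub-protocols $\comp, \xtest, \ztest, \rigid$ look statistically identical from $\prover_A$'s viewpoint---in each round $\prover_A$ receives only a random subset $N_C$ and random basis labels $W_{\bar N_C}$ sampled from the same fixed distribution---so whatever strategy it uses must be executed identically across all four sub-protocols.

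First I would decompose $s = (1-\alpha)(1-\epsilon) + \alpha\delta$ as in the honest-case proof, where $\epsilon = p_X\epsilon_X + p_Z\epsilon_Z + p_R\epsilon_R$ averages the test failure probabilities over the three test types. The assumption $s > 1 - 2\alpha/3$ forces $\epsilon = O(\alpha)$, and since $p_X, p_Z, p_R$ are positive constants, each individual failure probability is $O(\alpha)$; in particular $\epsilon_R = O(\alpha)$. Applying \thm{cliff} to the rigidity component then guarantees that, up to a local isometry on $\prover_A$'s side, the joint state is $O(\sqrt{\epsilon_R})$-close to EPR pairs tensored with an ancilla, and $\prover_A$'s basis measurements are $\poly(\epsilon_R)$-close (in TV distance, averaged over uniform $W$) to a probabilistic mixture of the designated honest measurements and their complex conjugates. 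By the round-type indistinguishability, the same strategy is simultaneously executed in $\comp, \xtest, \ztest$, so I can substitute $\prover_A$ with an ``honest'' $\prover_A'$ that performs the prescribed measurements (after the isometry), losing at most $\poly(\epsilon_R)$ in the acceptance probability of every round type. The resulting pair $(\prover_O, \prover_A')$ then satisfies the hypothesis of \lem{soundness-honest-pa} with success probability $s' \geq s - \poly(\epsilon_R)$, which yields a query algorithm $\A^O$ outputting $w$ with $R(O,w)=1$ with probability at least $1 - (1-s)/\alpha - \poly(\epsilon_R)/\alpha$. Choosing $\alpha = 1/\poly(q)$ with sufficiently high degree, both $(1-s)/\alpha < 2/3$ and $\poly(\epsilon_R)/\alpha$ can be kept small enough to conclude $p_\A > 1/3$.

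I expect the main obstacle to be two-fold. First, there is the conjugate branch produced by \thm{cliff}: the certified strategy is a convex mixture of the honest measurement and its complex conjugate, controlled by a classical bit residing in an ancilla register. The cleanest way to dispose of this is to observe that the conjugate branch simulates an honest execution against the complex conjugate oracle $O^\ast$; since the oracle problems we ultimately reduce to (\prob{dSSP} and its in-place variant) involve permutation and $2$-to-$1$ oracles whose quantum implementations are real, both branches yield identically distributed classical outputs $w$, so the extracted query algorithm is valid conditioned on either branch. Second, there is the tightness of the polynomial: the completeness-soundness gap is only $O(\alpha) = O(1/\poly(q))$, so the exponent of $\epsilon_R$ in the conclusion of \thm{cliff} must be matched against the polynomial defining $\alpha$; making $\alpha$ an inverse polynomial in $q$ of sufficiently high degree (depending on the rigidity exponent) ensures that $\poly(\epsilon_R)/\alpha$ is dominated by the available slack $(2/3 - (1-s)/\alpha)$, finishing the reduction.
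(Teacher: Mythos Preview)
Your high-level plan---honestify $\prover_A$ via \thm{cliff} and then invoke the honest-$\prover_A$ lemma---is the right idea, but the black-box reduction to \lem{soundness-honest-pa} as you describe it does not close.  The problem is quantitative.  From $s>1-\tfrac{2\alpha}{3}$ you correctly deduce $\epsilon_R=O(\alpha)$; rigidity then incurs a honestification loss of order $\poly(\epsilon_R)=O(\alpha^{a})$ for some constant $a<1$ (this is the exponent hidden in the $O(\poly(\epsilon))$ of \thm{cliff}).  Plugging $s'\geq s-\poly(\epsilon_R)$ into the conclusion of \lem{soundness-honest-pa} gives
\[
p_\A \;\geq\; 1-\frac{1-s}{\alpha}-\frac{\poly(\epsilon_R)}{\alpha},
\]
and the last term is of order $\alpha^{a-1}$, which \emph{diverges} as $\alpha\to 0$.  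So ``choosing $\alpha=1/\poly(q)$ of sufficiently high degree'' makes the bound worse, not better: there is no choice of $\alpha$ for which this expression exceeds $1/3$ uniformly over all $s>1-\tfrac{2\alpha}{3}$.

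The paper's proof avoids this circularity by \emph{not} first bounding $\epsilon$ in terms of $\alpha$ and then reducing black-box.  Instead it re-opens the analysis of \lem{soundness-honest-pa} with the rigidity loss folded directly into the per-round channel distance: after honestifying at round $i$ the combined test failure is $\delta_i=\epsilon_i/p+\poly(\epsilon_i/(1-2p))$, so by \thm{broadbent} the implemented channel is $2\delta_i$-close to $\O$.  Summing gives $\delta\leq p_\A+2q\,g(\epsilon)$ for a single monotone $g(\epsilon)=\poly(\epsilon)$, hence
\[
s\;\leq\;(1-\alpha)(1-\epsilon)+\alpha\bigl(p_\A+2q\,g(\epsilon)\bigr).
\]
Now $\epsilon$ is treated as a free parameter; the worst case for the verifier is the break-even $\epsilon^*$ with $2q\,g(\epsilon^*)=1-p_\A$, yielding $s\leq 1-(1-\alpha)\epsilon^*$ and, upon inversion, $p_\A\geq 1-2q\cdot g\bigl(\tfrac{1-s}{1-\alpha}\bigr)$.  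Choosing $\alpha$ comparable to $g^{-1}(1/(3q))$---still $1/\poly(q)$ since $g$ is a polynomial---then makes the bound exceed $1/3$ whenever $s>1-\tfrac{2\alpha}{3}$.  The essential difference from your outline is that the rigidity loss enters as $g(\epsilon)$ and is traded against the test-failure term $(1-\alpha)(1-\epsilon)$, rather than being divided by $\alpha$ after the fact.

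Your remark on the conjugate branch (that the relevant oracles are real, so the $\lambda=-$ component of \thm{cliff} causes no harm) is a valid point that the paper does not spell out.
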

\begin{proof}
  Let the success probability be $1-\epsilon_i$ conditioned on $\gamma=1$ and the chosen index being $i\in[q]$.
  Thus the failure probabilities are at most $\frac{\epsilon_i}{p}$, $\frac{\epsilon_i}{p}$ and $\frac{\epsilon_i}{1-2p}$ respectively conditioned on the events that an $\xtest$, an $\ztest$ and an rigidity test $\rigid$ is chosen.
  Also note that when an $\xtest$ or a $\ztest$ is chosen, the provers do not distinguish the test from $\comp$ until $\verifier$ asks a measurement from $\prover_A$.
  When $\rigid$ is chosen, $\prover_A$ does not distinguish it from $\comp,\xtest,\ztest$ until $\verifier$ accepts or rejects.

  By \thm{cliff}, there exist $\prover_A',\prover_O'$ such that $\prover_A'$ plays honestly, and $\prover_O'$ successfully passes $\xtest$ and $\ztest$ with probability at least $1-\delta_i=1-\frac{\epsilon_i}{p}-\poly\big(\frac{\epsilon_i}{1-2p}\big)$.
  Thus by \thm{broadbent}, $\prover_O'$ implements a quantum channel $\E_i$ such that $\|\E_i-\O\|_{\diamond}\leq 1-2\delta_i$.

  Conditioned on $\gamma=0$, let the process of $\prover_A'$ on receiving a teleported state $\rho_{in}^{(i)}$, produces the output state $\rho_{out}^{(i)}$ be $\U_i:\rho_{in}^{(i)}\mapsto\rho_{out}^{(i)}$.
  Now let the algorithm $\A:= \U_q\circ\O_q\circ\cdots\circ\U_1\circ\O_1\circ\U_0$.
  By union bound,
  \begin{align}
    \|\A - \U_q\circ\E_q\circ\U_{q-1}\circ\cdots \circ \U_1\circ\E_1\circ \U_0\|_{\diamond} \leq 2\sum_{i=1}^q \delta_i, = 2q\delta,
  \end{align}
  where $\delta=\frac{1}{q}\sum_{i=1}^q\delta_i=\epsilon/p+\poly(\frac{\epsilon}{1-2p})\leq g(\epsilon)=\poly(\epsilon)$ for some monotonically increasing $g$ in $[0,\infty)$ (e.g., $c\cdot\epsilon^a$ for constants $a\leq 1$ and $c$).
  Since $g$ is monotonically increasing for $\epsilon\geq 0$, we note that the inverse $g^{-1}$ exists.
  The success probability of the provers is upper bounded by 
  \begin{align}
    s \leq (1-\alpha)(1-\epsilon) + \alpha\cdot \max\{ (p_\A+2q \cdot g(\epsilon)), 1\} 
  \end{align}
  where $\epsilon=\frac{1}{q}\sum_i\epsilon_i$ and $p_\A$ is the probability that measuring the associated qubits on $\A$'s output state yields an outcome $w$ satisfying $R(O,w)=1$.
  Since $g(\epsilon)$ is monotonically increasing, there exists $\epsilon^*\geq 0$ such that $2q\delta(\epsilon^*)=1-p_\A\leq 2q\cdot g(\epsilon^*)$.
  This implies that
  \begin{align}
      s\leq (1-\alpha)(1-\epsilon^*) + \alpha = 1-(1-\alpha)\epsilon^* 
      \leq 1-(1-\alpha)\cdot g^{-1}\Big(\frac{1-p_\A}{2q}\Big),
  \end{align}
  and
  \begin{align}\label{eq:pa-bound}
      p_\A \geq 1 - 2q\cdot  g\left(\frac{1-s}{1-\alpha}\right).
  \end{align}
By \eq{pa-bound}, if $s>1-(1-\alpha)\cdot g^{-1}(\frac{1}{3q})$, $p_\A>1/3$.
  For $\alpha > \frac{g^{-1}(\frac{1}{3q})}{\frac{2}{3}+g^{-1}(\frac{1}{3q})}$, we have $1-(1-\alpha)\cdot g^{-1}(\frac{1}{3q})> 1-\frac{2\alpha}{3}$.
  It suffices to choose $\alpha=\frac{2\cdot g^{-1}(\frac{1}{3q})}{\frac{2}{3}+g^{-1}(\frac{1}{3q})}=1/\poly(q)$.
\end{proof}

Setting $p=1/3$, we conclude with the following corollary, a direct consequence of \thm{leash-completeness} and \thm{leash-soundness}.
\begin{corollary}\label{cor:main}
  Let $\C_{yes},\C_{no}$ be two complexity classes. If there exists an oracle $\O$ and a relation $R$ such that $R$ is solvable in $\C_{yes}^\O$ using $q$ queries but not in $\C_{no}^\O$. Then, there exists $\alpha=1/\poly(q)$ and a protocol $\langle V, P_O, P_A\rangle$ such that the following statements hold.
  \begin{itemize}
      \item There exist $\prover_O$ that runs in $O(q\cdot CC(\O))+\poly(n)$ and $\prover_A$ runs in $\C_{yes}$ such that the verifier accepts with proability at least $1-\frac{\alpha}{3}-e^{-\Omega(n)}$.
      
      \item For any $\prover_A$ that runs in $\C_{no}$ and any unbounded $\prover_O$, the verifier accepts with probability at most $1-\frac{2\alpha}{3}$.
  \end{itemize}
  Here, $CC(\O)$ is the quantum circuit complexity for implementing $\O$. 
\end{corollary}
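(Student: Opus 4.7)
The plan is to instantiate the protocol $\query(q,\O,V)$ (\prot{query}) with the hypothesized oracle separation and read off the two bounds from \thm{leash-completeness} and \thm{leash-soundness}. First I would fix $p=1/3$, so that the soundness analysis applies and yields an $\alpha=1/\poly(q)$ satisfying both the constraint $\alpha > 2\,g^{-1}(1/(3q))/(2/3+g^{-1}(1/(3q)))$ from \thm{leash-soundness} and the constraint $\alpha=\Theta(1/q)$ already used in the completeness analysis. By hypothesis there exists a $q$-query algorithm $\A^\O$ that runs in $\C_{yes}^\O$ and outputs $w$ with $R(\O,w)=1$ with probability at least $2/3$; I would plug this $\A^\O$ into $\query(q,\O,V)$ as the honest strategy.

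For completeness, I would apply \thm{leash-completeness} to the honest instantiation: $\prover_O$ performs the unitary $\O$ at each query using the gadget-based evaluation of $\O$ prescribed in $\comp$ (\prot{comp}), while $\prover_A$ performs the $U_i$'s of $\A$ between queries. The resulting acceptance probability is at least $1-\alpha/3-\exp(-\Omega(n+t))$, which I would absorb into $1-\alpha/3-e^{-\Omega(n)}$ by noting that the rigidity test is run on $\Theta(n)$ EPR halves (set by the size parameter $m$) so that $t$ is polynomial in $n$. The complexity claim for $\prover_O$ follows by inspection: the only non-trivial work is applying $\O$ once per query, for a total of $q\cdot CC(\O)$ quantum gates, plus $\poly(n)$ gates for teleportation, Clifford layers, and $T$-gadget handling; $\prover_A$'s strategy is just $\A$ with teleportation interleaved, so it remains in $\C_{yes}$.

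For soundness, I would argue by contrapositive. Suppose some pair $(\prover_O,\prover_A)$ with $\prover_A\in\C_{no}$ is accepted with probability strictly greater than $1-2\alpha/3$. Then \thm{leash-soundness}, with $p=1/3$ and the chosen $\alpha$, produces a query algorithm $\A'$ that, given oracle access to $\O$, outputs $w$ with $R(\O,w)=1$ with probability strictly greater than $1/3$; moreover the construction of $\A'$ there only composes the quantum channels $\U_i$ implemented by $\prover_A$ between queries with actual oracle calls, so $\A'$ still runs in $\C_{no}^\O$. This contradicts the assumption that $R$ is not solvable in $\C_{no}^\O$ (at the standard success threshold $1/3$). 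Hence every such $(\prover_O,\prover_A)$ is accepted with probability at most $1-2\alpha/3$, as required.

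The main conceptual obstacle has already been handled inside \thm{leash-soundness}: one needs to simultaneously apply \thm{cliff} (to replace cheating $\prover_A$ by an honest measurer at the cost of $\poly(\epsilon)$ in statistical distance) and \thm{broadbent} (to turn the resulting $\prover_O$ into a Pauli-twirled channel close to $\O$ in diamond distance), and then balance the inverse-polynomial overhead in $q$ against the weight $\alpha$ of the test branch via the $g(\cdot)$ bound. Once $p=1/3$ is fixed and $\alpha=1/\poly(q)$ is chosen as above, the two theorems slot together, and the remainder of the corollary is a matter of bookkeeping on the complexity of $\prover_O$ and on the negligible $\exp(-\Omega(n))$ contribution from sequential CHSH-style rigidity testing.
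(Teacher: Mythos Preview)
Your proposal is correct and follows exactly the approach the paper intends: the paper states the corollary as ``a direct consequence of \thm{leash-completeness} and \thm{leash-soundness}'' after setting $p=1/3$, and you have spelled out precisely that deduction, including the contrapositive reading of soundness and the observation that the algorithm $\A'$ built in \thm{leash-soundness} inherits membership in $\C_{no}^\O$ from $\prover_A\in\C_{no}$. The additional bookkeeping you supply on the running time of $\prover_O$ and on absorbing $\exp(-\Omega(n+t))$ into $e^{-\Omega(n)}$ is not written out in the paper but is exactly the intended justification.
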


\section{Application: classical verification of quantum depth}
\label{sec:CVQD}

In this section, we will prove the existence of $\CVQD_2(d,d')$ for integers $d'>d$. First we give the formal definition as follows. 
\begin{definition}[$\CVQD_2(d,d')$]
\label{dfn:cvqd_2p_formal}
Let $d,d'\in \mathbb{N}$ and $d'>d$. 
A two-prover protocol $\CVQD_2(d,d')$  that separates quantum circuit depth $d$ from $d'$ consists of a classical verifier $V$ and two provers $P_O,P_A$ such that the following properties hold:
\begin{itemize}
    \item \textup{\bf Non-locality:} $P_O$ and $P_A$ share an arbitrary quantum state, and there is no quantum and classical channel between them once the protocol starts. 
    \item \textup{\bf Completeness:} %
    There exist an integer $\hat d\geq d'$, a quantum prover $P_O$ and a $\hat{d}$-QC or $\hat{d}$-CQ scheme $P_A$ such that $\Pr[\langle V,P_O,P_A\rangle=accept]\geq 2/3$.
    \item \textup{\bf Soundness:} 
    For integer $\hat d\leq d$ and any $\hat{d}$-QC or $\hat{d}$-CQ scheme $P_A$ and any $P_O$, $\Pr[\langle V,P_O,P_A\rangle=accept]\leq 1/3$. 
\end{itemize}%
\end{definition}

We prove the following theorem by \cor{main} and the oracle separation in~\cite{CCL19}. Note that  \dfn{cvqd_2p_formal} does not specify the power of $V$ and honest $P_O$ except that $V$ is a classical algorithm and $P_O$ is a quantum machine that can store EPR pairs. we first show the existence of an information-theoretically secure $\CVQD_2(d,2d+3)$ for any $d=\poly(n)$ with inefficient verification, i.e., $V$ and honest $P_O$ need to run in exponential time. Then, we show a $\CVQD_2(d,2d+3)$ for any $d=\poly(n)$ with efficient verification under the assumption that qPRP (\dfn{qprp}) exists.  

\begin{theorem}
\label{thm:cvqd_2p_formal}
Let $d = \poly(n)$. 
\begin{enumerate}
    \item There exist $\alpha=1/\poly(d)$ and an unconditionally
    secure $\CVQD_2(d,2d+3)$ (\dfn{cvqd_2p_formal}), in which the verifier $V$ runs in probabilistic $O(2^n)$ time such that the following holds.
    \label{itm:cvqd_2p_1}
      \begin{itemize}
        \item \textup{\bf Completeness:} There exist $P_A$ which has quantum depth at least $2d+3$ and $P_O$ which runs in quantum $O(2^n)$ time such that $\Pr[\langle V,P_O,P_A\rangle=accept]\geq 1-\frac{\alpha}{3}$.

        \item \textup{\bf Soundness:} For any unbounded $\prover_O$ and $\prover_A$ that are $d$-CQ or $d$-QC schemes, $\Pr[\langle V,\prover_O, \prover_A\rangle = accept]\leq 1-\frac{2\alpha}{3}$.
      \end{itemize}
    \item Assume that there exist quantum-secure pseudorandom permutations (qPRP) as defined in \dfn{qprp}.
      There exist $\alpha=1/\poly(d)$ and $\CVQD_2(d,2d+3)$ (\dfn{cvqd_2p_formal}), in which  $V$ runs in probabilistic polynomial time such that the following holds. %
        \label{itm:cvqd_2p_2}
    \begin{itemize}
        \item \textup{\bf Completeness:} There exist $P_A$ that has quantum depth at least $2d+3$ and $P_O$ that runs in quantum polynomial time such that $\Pr[\langle V,\prover_O, \prover_A\rangle = accept]\geq 1-\frac{\alpha}{3}$. %
        \item \textup{\bf Soundness:} For any unbounded $\prover_O$ and $\prover_A$ that are $d$-CQ or $d$-QC schemes, $\Pr[\langle V,\prover_O, \prover_A\rangle = accept]\leq 1-\frac{2\alpha}{3}$.
  \end{itemize}
\end{enumerate}
\end{theorem}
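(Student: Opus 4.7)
The plan is to derive both parts of \thm{cvqd_2p_formal} as direct applications of \cor{main} instantiated with the $\dSSP$ oracle problem of \cite{CCL19} (restated in \prob{dSSP}). By \thm{dssp_ccl19} together with \rmk{model}, $\dSSP$ is solvable by $(2d+3)$-CQ and $(2d+3)$-QC schemes with overwhelming probability using $q=\poly(n,d)$ queries to the $d$-shuffling oracle $\F=(f_0,\ldots,f_d)$, while every $d$-CQ or $d$-QC scheme outputs the hidden shift with only negligible probability. Setting $\C_{yes}$ to be the class of $(2d+3)$-depth hybrid schemes and $\C_{no}$ to be the class of $d$-CQ or $d$-QC schemes, and letting $R(\F,w)=1$ iff $w$ is the hidden shift of the underlying Simon's function, we can feed $(\F,R)$ into \cor{main} to obtain a two-prover, classically verified protocol with completeness $1-\alpha/3-e^{-\Omega(n)}$ and soundness $1-2\alpha/3$ for some $\alpha=1/\poly(q)=1/\poly(d)$.

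For \itm{cvqd_2p_1}, the honest verifier and $\prover_O$ are required to implement the $d$-shuffling oracle. Since $f_0,\ldots,f_{d-1}$ are random permutations on $\bit^{(d+2)n}$ and $f_d$ encodes a random Simon's function, by a counting argument the honest sampling and circuit implementation of $\F$ takes quantum time $O(2^n)$; this matches the claimed probabilistic $O(2^n)$ runtime of $V$ and quantum $O(2^n)$ runtime of $P_O$. The completeness bound $1-\alpha/3$ in the theorem is slightly loose compared to \cor{main}'s $1-\alpha/3-e^{-\Omega(n)}$, so a standard absorption of the exponentially small term into a constant adjustment of $\alpha$ (or running a small number of repetitions) yields the stated bound, and the soundness bound $1-2\alpha/3$ is already exactly what \cor{main} provides.

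For \itm{cvqd_2p_2}, the plan is to replace the inefficient sampling of $\F$ by a pseudorandom one. First, sample independent keys $k_0,\ldots,k_{d-1}$ from the qPRP key space and define $f_i(\cdot)=P(k_i,\cdot)$ for $i<d$. Next, factor a random Simon's function as a random permutation composed with a fixed, efficiently computable canonical 2-to-1 map together with a random shift $s$; replacing the random permutation with another qPRP instance yields an efficiently samplable pseudorandom Simon's function, and composing all pieces produces $f_d$ with quantum-polynomial-time access. Call the resulting distribution $\D'$. I would then establish soundness by a hybrid argument: any $d$-CQ or $d$-QC prover $\prover_A$ that violates soundness against $\D'$ with advantage non-negligibly larger than the $1-2\alpha/3$ bound would, together with the $\D'$-sampling $V$ and the adversarial $\prover_O$, yield a quantum polynomial-time distinguisher between oracle access to a random permutation and its qPRP replacement (applied sequentially for each of the $d+1$ functions, using the standard hybrid over the $d$ keys). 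This contradicts \dfn{qprp}, so soundness under $\D'$ differs from soundness under $\D$ by only a negligible amount, which is absorbed into $\alpha$. Completeness under $\D'$ is immediate because the honest $(2d+3)$-depth $\prover_A$'s success probability on $\dSSP$ differs from its success on $\D'$ by a negligible amount by the same qPRP argument.

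The main obstacle is the hybrid argument for soundness in \itm{cvqd_2p_2}: the adversarial $\prover_A$ that we wish to rule out is itself a bounded-depth quantum machine, but the distinguisher we construct (which internally simulates $V$, $\prover_O$ and $\prover_A$) must be a quantum polynomial-time algorithm with oracle access to either the true random permutations or their qPRP replacements. The subtle point is that the simulated $\prover_O$ is unbounded, so the reduction must only need oracle access (not explicit descriptions) to the permutations and their inverses, which is exactly what \dfn{qprp} provides; I would verify that every use of each $f_i$ and $f_i^{-1}$ inside $V$ and $\prover_O$ can indeed be phrased as an oracle query, and that the number of queries is polynomial. Once this is checked, the standard hybrid over $i=0,\ldots,d-1$ together with the two extra permutations used in $f_d$ gives a negligible drop in soundness, and combined with the completeness analysis yields \itm{cvqd_2p_2}.
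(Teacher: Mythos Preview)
Your plan for Part~\ref{itm:cvqd_2p_1} is correct and is exactly what the paper does: instantiate \cor{main} with $\dSSP$, the $d$-shuffling oracle, and the hidden-shift relation.

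For Part~\ref{itm:cvqd_2p_2} your high-level idea (replace random permutations and the random Simon's function by qPRP-based constructions) is right, and your construction of a pseudorandom Simon's function coincides with the paper's \clm{p-simon}. However, the place where you run the hybrid argument is problematic. You propose to build a qPRP distinguisher that internally simulates $V$, the adversarial $\prover_O$, and $\prover_A$, and you flag as the ``obstacle'' that $\prover_O$ is unbounded but hope to salvage this by checking that $\prover_O$'s uses of $f_i,f_i^{-1}$ are oracle queries of polynomial number. That check cannot succeed: an adversarial $\prover_O$ is under no obligation to touch the permutations only through oracle calls, and in the efficient protocol $V$ hands $\prover_O$ the qPRP keys (equivalently, a circuit for $O$), so $\prover_O$ can run arbitrary unbounded computation on that description. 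Your distinguisher would therefore not be polynomial-time, and \dfn{qprp} gives no guarantee against it. Moreover, switching the distribution to truly random permutations changes what $V$ can even send to $\prover_O$ (an exponential-size description), so the two ``protocols'' you are comparing are not the same interactive system with a swapped oracle.

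The paper sidesteps this entirely by moving the hybrid one level down, to the oracle problem rather than the protocol. It first proves \cor{dssp_prp}: pseudorandom $\dSSP$ is already an oracle separation between depth $d$ and depth $2d{+}3$. The qPRP hybrid there is against the \emph{query algorithm} extracted from $\prover_A$ alone (this is precisely what the soundness proof of \cor{main}, namely \thm{leash-soundness}, produces: an algorithm $\A=\U_q\circ\O\circ\cdots\circ\O\circ\U_0$ built from $\prover_A$'s maps and the \emph{true} oracle, with no reference to $\prover_O$). Since $\prover_A$ is a $d$-CQ or $d$-QC scheme, that extracted algorithm is polynomial-time and makes polynomially many oracle queries, so it is a legitimate qPRP adversary. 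Once \cor{dssp_prp} is in hand, \cor{main} is applied directly to pseudorandom $\dSSP$, and the unbounded $\prover_O$ never enters the indistinguishability reduction. If you want to keep your overall structure, the fix is simply to run the hybrid on the extracted query algorithm rather than on the full protocol transcript.
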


For efficient instantiations, it is required that the functions $f_0,\ldots,f_d$ are efficiently samplable and computable. Any construction of qPRP satisfying \dfn{qprp} (e.g., \cite{Zha16}) can be used to construct a pseudorandom $d$-shuffling of a pseudorandom Simon's function. We now give constructions.

In the problem $\dSSP$, the functions $f_0,\ldots, f_{d-1}$ are random permutations. 
These functions can be replaced with pseudorandom permutations.
For the last function $f_d$, we note that $f_d$ can be written as $f\circ f_{0}^{-1}\circ\cdots\circ f_{d-1}^{-1}$, where $f$ is a random Simon's function, when the domain is restricted to a hidden subset. 
It then suffices to show a construction of a pseudorandom Simon's function.

\begin{definition}[Pseudorandom Simon's function]\label{dfn:p-simon}
  For finite set $\Y$, let $\mathcal S$ be the set of Simon's function from $\bit^n$ to $\Y$, i.e., $f\in\mathcal S$ if there exists $s\in\bit^n$ such that $f(x)=f(x')$ if and only if $x=x'\oplus s$.
  For key space $\K$, a pseudorandom Simon's function is a keyed function $F:\K\times\bit^n\to\bit^{m}$ such that for every quantum adversary $\A$, it holds that
  \begin{align}
    \left|\Pr_{F\gets_R \mathcal S}[\A^F()=1] - \Pr_{k\gets_R\K}[\A^{F_k}()=1]\right| \leq \negl(n).
  \end{align}
\end{definition}
We note that by the definition of Simon's function, it must be the case that $m\geq n-1$.
Next we prove that there exists a pseudorandom Simon's function from qPRP.
\begin{claim}\label{clm:p-simon}
  Assume that qPRP exists as defined in \dfn{qprp}.
  Then there exists a pseudorandom Simon's function as defined in \dfn{p-simon}.
\end{claim}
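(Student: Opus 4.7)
The plan is to realize every Simon's function as a fixed, efficiently computable ``canonical-representative'' map followed by a uniformly random bijection on the codomain, and then to pseudorandomize the bijection with a qPRP. First I would fix, for each nonzero shift $s\in\bit^n\setminus\{0^n\}$, the function $g_s:\bit^n\to\bit^n$ defined by $g_s(x)=\min(x,x\oplus s)$ under lexicographic order. The function $g_s$ is classically efficiently computable, is constant on each one-dimensional affine subspace $\{x,x\oplus s\}$, and its image $T_s\subseteq\bit^n$ has size $2^{n-1}$. Moreover $g_s$ can be evaluated and uncomputed coherently using one ancilla register (since $x$ is retained), so the unitary $\ket{x,y}\mapsto\ket{x,y\oplus g_s(x)}$ is implementable by a polynomial-size quantum circuit.

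Next I would show that if $P$ is a uniformly random permutation of $\bit^n$ and $s$ is uniform on $\bit^n\setminus\{0^n\}$, then $F_{s,P}(x):=P(g_s(x))$ is distributed as a uniformly random Simon's function. By construction $F_{s,P}(x)=F_{s,P}(x\oplus s)$, and $F_{s,P}$ is injective on the equivalence classes because $P$ is injective on $T_s$, so $F_{s,P}$ is a Simon's function with shift $s$. For the uniformity, observe that two permutations $P,P'$ yield the same $F_{s,P}=F_{s,P'}$ iff they agree on $T_s$; so each Simon's function with shift $s$ is the image of exactly $(2^{n-1})!$ permutations out of $(2^n)!$, which is the same count for all such functions, so conditioning on $s$ yields the uniform distribution over Simon's functions with hidden shift $s$. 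Averaging over uniform $s$ gives the uniform distribution over $\mathcal{S}$.

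The pseudorandom construction is then $F_{(s,k)}(x):=P(k,g_s(x))$ with key $(s,k)\in(\bit^n\setminus\{0^n\})\times\K$, where $P$ is the qPRP from \dfn{qprp}. To prove pseudorandomness, I would give a black-box reduction: for any quantum distinguisher $\A$ with advantage $\delta$, construct a qPRP adversary $\B^{O_Q,O_{Q^{-1}}}$ that first samples $s\gets_R\bit^n\setminus\{0^n\}$, runs $\A$, and answers each of $\A$'s quantum oracle queries $\ket{x,y}$ by (i) writing $g_s(x)$ into an ancilla, (ii) forwarding the query $\ket{g_s(x),y}$ to $O_Q$ to obtain $\ket{g_s(x),y\oplus Q(g_s(x))}$, and (iii) uncomputing the ancilla with a second classical evaluation of $g_s$. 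When $Q$ is a truly random permutation, $\B$'s simulation is identically distributed to $\A$'s view in the random-Simon's-function world by the previous paragraph; when $Q=P(k,\cdot)$ for random $k$, it is identically distributed to $\A$'s view against $F_{(s,k)}$. Hence $\B$ inherits advantage $\delta$, and by \dfn{qprp} we conclude $\delta=\negl(n)$; note the reduction never needs $O_{Q^{-1}}$.

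The main obstacle is really just the combinatorial bookkeeping in the second paragraph, namely checking that averaging the distribution of $P\circ g_s$ over random $P$ exactly recovers the uniform distribution on Simon's functions with shift $s$; once this is settled, the reduction is entirely mechanical and only uses forward queries to the permutation oracle. A minor point worth flagging is that $g_s$ must be computed in superposition by $\B$ without disturbing $\A$'s input register, which is why I retain $x$ throughout and use an ancilla that is uncomputed at the end.
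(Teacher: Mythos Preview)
Your proposal is correct and follows essentially the same approach as the paper: both decompose a random Simon's function as a random permutation composed with a canonical-representative map (the paper's $T_s$ is exactly your $g_s(x)=\min(x,x\oplus s)$), establish uniformity by the same counting argument, and reduce to qPRP security via the same simulation. The only cosmetic difference is that the paper inserts an intermediate embedding $W_s:H\to\bit^m$ to handle a general codomain $\bit^m$ with $m\geq n-1$, whereas you work directly with $m=n$; this does not affect the argument.
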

\begin{proof}
  We first show that a random Simon's function can be constructed from a random permutation, and thus replacing a random permutation with a qPRP, we obtain a pseudorandom Simon's function.

Let $H:=\{x\in\bit^n: x<x\oplus s\}$ for total ordering $<$ over $\bit^n$ defined as follows:
For $x,y\in\bit^n$, $x<y$ if the smallest index $i\in[n]$ where $x_i\neq y_i$ satisfies $x_i=0$ and $y_i=1$.

The subset $H$ forms a subgroup of $\bit^n$ for group operation $\oplus$: It is clear that $0\in H$ since 0 is smaller than any string in $\bit^n$. 
Let $i\in[n]$ be the smallest index such that $s_i=1$.
For $x,y\in H$, $x_i=y_i=0$, and thus $(x\oplus y)_i=0$. 
This implies that $x\oplus y\in H$.
Since $H$ is a subgroup, the cosets $\{H,s\oplus H\}$ forms a partition of $\bit^n$.

Now we show that for codomain $\Y=\bit^m$ where $m\geq n-1$, every Simon's function $f:\bit^n\to\Y$ can be constructed from a permutation and a hidden shift $s$.
We then define the following function $T_s:\bit^n\to H$, $T_s(x)=x$ for $x\in H$, and $T_s(x)=x\oplus s$ for $x\in s\oplus H$. 
Let the mapping $W_s:H\to\bit^{m}$, 
\begin{align}
  W_s(x_1,\ldots,x_n)=(x_1,\ldots,x_{i-1},x_{i+1},\ldots,x_n, 0,\ldots,0),
\end{align}
where $i$ is the smallest index such that $s_i=1$.
The padding has $m-n+1$ zeros.
For every Simon's function $f$ with shift $s$, we know that $f=f\circ T_s$.
When the domain is restricted to $H$, $f$ is injective, and thus there exist $(2^m-|H|)!$ permutations $P:\bit^m\to\bit^m$ such that $f=P\circ W_s\circ T_s$ (we use the convention that $0!=1$).
On the other hand, by definition, for every $P$, $P\circ W_s\circ T_s$ is a Simon's function.
These facts imply that $(P,s)\mapsto P\circ W_s\circ T_s$ is a well-defined mapping from a pair of permutation and shift to a Simon's function and is $(2^m-|H|)!$-to-1.

Thus a random Simon's function can be sampled (inefficiently) as follows: 
First sample a uniform shift $s\gets_R\bit^n$ and a random permutation $P:\bit^m\to\bit^m$ and output $P\circ W_s\circ T_s$.
A pseudorandom Simon's function can be sampled efficiently using a qPRP $F:\K\times\bit^m\to\bit^m$: Sample a random shift $s\gets_R\bit^n$ and $k\gets_R\K$ and output $g_{k,s}:=F_k\circ W_s\circ T_s$.

To show that $g_{k,s}$ indeed yields a pseudorandom Simon's function, suppose toward contradiction there exists an adversary $\A$ which distinguishes $g_{k,s}$ for uniform $k,s$ from a random Simon's function with non-negligible probability $\epsilon$.
Then the reduction $\B$ given oracle access to a permutation $Q$, samples $s\gets_R\bit^n$ and outputs $b\gets\A^{Q\circ W_s\circ T_s}$.
If $Q$ is random, $Q\circ W_s\circ T_s$ is a random Simon's function; otherwise let the key be $k$, and the Simon's function is $g_{k,s}$. By the assumption, $\B$ distinguishes a random permutation from a qPRP with non-negligible advantage.
\end{proof}

We then define the pseudorandom $d$-shuffling of a function $f$, which can be implemented in quantum polynomial time. 
\begin{definition}[Pseudorandom $d$-shuffling (cf. \dfn{d-shuffling})]\label{dfn:p-d-shuffling}
  For $f:\bit^n\to\bit^n$, pseudorandom $d$-shuffling of $f$ is a tuple of functions $(f_0,\ldots,f_d)$, where $f_0,\ldots,f_{d-1}$ are pseudorandom permutations over $\bit^{(d+2)n}$, and the last function $f_d$ is a fixed function satisfying the following properties:
  Let $S_d:=\{f_{d-1}\circ \cdots \circ f_0(x'): x'\in\bit^n\}$.
  \begin{itemize}
  \item For $x\in S_d$, let $f_{d-1}\circ\cdots\circ f_0(x')=x$, and choose the function $f_d:S_d\to[0,2^{n}-1]$ such that $f_d\circ f_{d-1}\circ\cdots f_0(x')=f(x')$.

  \item For $x\notin S_d$, $f_d(x)=\bot$.
  \end{itemize}
\end{definition}

Now we are ready to define a pseudorandom $d$-shuffling Simon's problem.

\begin{problem}[Pseudorandom $\dSSP$ (cf.~\prob{dSSP})]\label{prob:p-dSSP}
  Given oracle access to a pseudorandom $d$-shuffling (\dfn{p-d-shuffling}) of a pesudorandom Simon's function (\dfn{p-simon}), the problem is to find the hidden shift.
\end{problem}
By a simple hybrid argument, pseudorandom $\dSSP$ separates a depth-$(2d+3)$ quantum computation from a depth-$d$ one.
\begin{corollary}
\label{cor:dssp_prp}
Let $d = \poly(n)$. Pseudorandom $\dSSP$ (\prob{p-dSSP}) can be solved by a $\QNC_{2d+3}$ circuit with classical post-processing. Furthermore, for any $\hat{d}$-CQ and $\hat{d}d$-QC schemes $\A$ with $\hat{d}\leq d$, the probability that $\A$ solves the problem is negligible. 
\end{corollary}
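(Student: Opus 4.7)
Completeness is essentially inherited from \thm{dssp_ccl19}. The $(2d{+}3)$-depth algorithm of Chia, Chung and Lai accesses $f_0,\ldots,f_d$ only through their oracle interfaces: it evaluates the composition $f_d\circ f_{d-1}\circ\cdots\circ f_0$ on a uniform superposition, samples in the Hadamard basis, and then uncomputes $f_0,\ldots,f_{d-1}$ by querying them in reverse. Its correctness depends only on the structural facts that $f_0,\ldots,f_{d-1}$ are permutations and that $f_d\circ f_{d-1}\circ\cdots\circ f_0$ realises a Simon's function on $\{0,1\}^n$. Both facts hold by construction in \dfn{p-d-shuffling} when the base Simon's function is instantiated via \clm{p-simon}. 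Hence the same $\QNC_{2d+3}$ circuit with classical post-processing solves \prob{p-dSSP} with probability $1-\negl(n)$.

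For soundness, the plan is a standard hybrid argument interpolating between the oracles of \prob{p-dSSP} and \prob{dSSP}. Define hybrids $H_0,H_1,\ldots,H_{d+1}$ where $H_0$ is the distribution of \prob{p-dSSP}, $H_i$ for $1\le i\le d$ replaces $f_0,\ldots,f_{i-1}$ by independent uniform permutations on $\{0,1\}^{(d+2)n}$ while $f_i,\ldots,f_{d-1}$ remain pseudorandom and $f_d$ is updated to match the new shuffling, and $H_{d+1}$ further replaces the pseudorandom Simon's function by a truly random one, thereby recovering the oracle of \prob{dSSP}. For each step $H_i\to H_{i+1}$ with $i<d$, I would reduce from any $\hat d$-scheme $\A$ that distinguishes the two to a qPRP distinguisher $\B$: on challenge oracle $F$ (random permutation or $P(k,\cdot)$), $\B$ sets $F=f_i$, samples $f_{i+1},\ldots,f_{d-1}$ using fresh qPRP keys, instantiates $f_0,\ldots,f_{i-1}$ as random permutations, and answers any query to $f_d$ on input $x$ by inverting through $f_{d-1}^{-1}\circ\cdots\circ f_0^{-1}$ (using the inverse access granted by \dfn{qprp}) and returning $f(x')$ if this yields a valid preimage in $\{0,1\}^n$ or $\bot$ otherwise. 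At the end $\B$ simply checks whether $\A$ output the hidden shift $s$ of $f$, which $\B$ knows from its own sampling via \clm{p-simon}. The final step $H_d\to H_{d+1}$ is handled analogously, using \clm{p-simon} rather than \dfn{qprp}. Combining the bounds, any $\hat d$-scheme solving \prob{p-dSSP} with probability $p$ yields a $\hat d$-scheme solving \prob{dSSP} with probability $p-\negl(n)$; when $\hat d\le d$, \thm{dssp_ccl19} forces $p=\negl(n)$.

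The main technical obstacle is that the reduction $\B$ must provide $\A$ with quantum oracle access to the uniformly random permutations $f_0,\ldots,f_{i-1}$, and a naive sampling of such a permutation on $\{0,1\}^{(d+2)n}$ is exponential-size. The standard resolution is to simulate each such permutation lazily against quantum queries: one may either invoke Zhandry's result that a random function is indistinguishable from a random permutation for any polynomial-query quantum adversary (adapted to the two-sided access used to evaluate $f_d$), or replace the ``random'' permutations in the hybrids with pseudorandom ones of sufficient security, yielding an additional additive $\negl(n)$ loss. A secondary subtlety is that $\A$ may query $f_d$ in superposition; this is fine because $\B$'s simulation of $f_d$ is a composition of superposition-accessible subroutines for $f_0^{-1},\ldots,f_{d-1}^{-1}$ and an efficient classical evaluation of $f$, hence remains a coherent unitary oracle.
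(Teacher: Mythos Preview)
Your approach matches the paper's, whose entire proof of this corollary is the single clause ``By a simple hybrid argument''; your write-up is a faithful and considerably more detailed instantiation of that hybrid, and the completeness argument is exactly right. You even surface a subtlety the paper does not address, namely that the reduction $\B$ must efficiently provide two-sided quantum access to the already-switched random permutations in order to simulate $f_d$---just note that your two suggested patches do not quite close this as written (Zhandry's function/permutation switch does not supply inverse queries, and swapping the random permutations back to PRPs collapses the hybrid), so one either reads \dfn{qprp} as security against polynomial-\emph{query} adversaries, allowing $\B$ to be time-inefficient, or invokes an efficient quantum simulation of two-sided random permutations.
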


Now, we can prove \thm{cvqd_2p_formal}. 

\begin{proof} [Proof of {\thm{cvqd_2p_formal}}]
By \thm{dssp_ccl19}, $\dSSP$ separates the complexity classes $\BPP^{\BQNC_d}\cup \BQNC_d^{\BPP}$ and  $\BPP^{\BQNC_{2d+3}}\cap \BQNC_{2d+3}^{\BPP}$ relative to the $d$-shuffling oracle of $f$. Furthermore, teleporting quantum states only takes one circuit depth by choosing the gateset properly or considering the gateset including all one- and two-qubit gates as in~\rmk{gateset}.
Therefore, by setting $q=2d+1$, $\O$ to be the shuffling oracle, and $R$ to be $\dSSP$, \prot{query} separates depth-$(2d+3)$ quantum computation from depth-$d$ and the relation $R(O,w)=1$ if and only if $O$ is the shuffling oracle and $w$ is the hidden shift according (see \cor{main}). Here, $V$ and $\prover_O$ are inefficient since describing and implementing the shuffling oracle are inefficient. 

Following the same proof, we can also show that \prot{query} separates depth-$(2d+3)$ quantum computation from depth-$d$ by %
replacing $d$-SSP by pseudorandom $d$-SSP. This follows from the fact that pseudorandom $d$-SSP also separates depth-$(2d+3)$ quantum computation from depth-$d$ by \cor{dssp_prp}. Then, we follow the same proof for $d$-SSP by using \cor{main} except that we set $\O$ to be a pseudorandom-shuffling oracle that can be described and implemented efficiently, and thus both $V$ and $\prover_O$ are efficient. 
\end{proof}

Note that the algorithm is allowed to make parallel queries which do not increase the query depth. It is straightforward to adapt \prot{query} to allow parallel queries: let $t$ denote the largest number of parallel queries. For any query algorithm $\A$ of depth $q$, there is a query algorithm $\A'$ that is given access to $O^{\otimes t}$ and achieves the same performance as $\A$.
The equivalent two-player protocol to $\A'$ is $\query(2d+3,O', R)$, where sampling $O'$ can be performed by sampling $O\gets_R\O$ and outputting $O'=O^{\otimes t}$.

Furthermore, we emphasize that while the protocol only has a small completeness-soundness gap $\alpha/3=1/\poly(d)$, by sequential repetition for $O(\alpha^{-2}\cdot\log^2\lambda)$ times it suffices to amplify the gap to $1-\negl(\lambda)$.

\subsection{A nearly optimal separation}

In this section, we modify the original $\dSSP$ to give an oracle separation that reduces the gap from $d$ versus $2d+3$ to $d$ versus $d+3$. 

First, instead of considering standard quantum query model, we define $d$-shuffling in the ``in-place'' quantum oracle model.

\begin{definition}[In-place $d$-shuffling]
Let $f:\{0,1\}^n \rightarrow \{0,1\}^n$ be a Simon's function with shift $s$. Let $\F:=\{f_0,\dots,f_d\}$ be a $d$-shuffling of $f$.  We define the in-place $(d,f)$-shuffling $U:= \{U_{f_0},\dots,U_{f_d}\}$ as follows: 
\begin{enumerate}
    \item For $i=0$, let $U_{f_0}$ be a unitary such that for all $x\in \{0,1\}^{(d+2)n}$, $U_{f_0}\ket{x}\ket{0} = \ket{x}\ket{f_0(x)}$.
    \item For $i = 1,\dots,d-1$, let $U_{f_i}$ be a unitary in $\mathbb{C}^{2^{(d+2)n}\times 2^{(d+2)n}}$ such that for all $x\in \{0,1\}^{(d+2)n}$, $U_{f_i}\ket{x} = \ket{f_i(x)}$. 
    \item Let $U_{f_d}$ be a unitary in $\mathbb{C}^{2^{(d+2)n+1}\times 2^{(d+2)n+1}}$ such that for all $x\in \{0,1\}^{(d+2)n}$ and $b\in \{0,1\}$, $U_{f_d}\ket{x,b} = \ket{f_d(x)}\ket{b\oplus b'}$, where $b'=1$ if $x\in H$ (see the definition of $H$ in the proof of \clm{p-simon}).%
    \end{enumerate}
\end{definition}

We note that an in-place pseudorandom permutation exists if there exists a qPRP defined in \dfn{qprp}.
First we give the definition of in-place qPRPs.
\begin{definition}[In-place qPRP (cf.~\dfn{qprp})]\label{dfn:i-qprp}
  For security parameter $\lambda$ and polynomial $m=m(\lambda)$, a pseudorandom permutation $P$ over $\bit^m$ is a keyed function $\K\times\bit^m\to\bit^m$ such that there exists a negligible function such that for every quantum adversary $\A$, it holds that 
\begin{align}
    \Big|\Pr_{F\gets_R\P}[\A^{I_F,I_{F^{-1}}}=1]-\Pr_{k\gets_R\K}[\A^{I_{P(k,\cdot)}, I_{P^{-1}(k,\cdot)}}=1]\Big|\leq \negl(\lambda),
\end{align}
where $I_Q:\ket{x}\mapsto\ket{Q(x)}$ for $x\in\bit^m$ and permutation $Q:\bit^m\to\bit^m$.
\end{definition}
\begin{theorem}
  If there exists a qPRP as defined in \dfn{qprp}, then an in-place qPRP defined in \dfn{i-qprp} exists.
\end{theorem}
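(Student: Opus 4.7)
The plan is to exhibit a simple simulation that realizes the in-place oracle $I_{P(k,\cdot)}$ (and its inverse) using two invocations of the XOR-style oracles $O_{P(k,\cdot)}$ and $O_{P^{-1}(k,\cdot)}$ together with an ancilla, and then argue that any distinguisher for the in-place construction gives a distinguisher for the standard qPRP, contradicting \dfn{qprp}.

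The core construction I would use is the folklore uncomputation trick for permutations. Given any permutation $Q:\bit^m\to\bit^m$ and an ancilla register initialized to $\ket{0^m}$, one computes
\begin{align}
  \ket{x}\ket{0^m} \xrightarrow{O_Q} \ket{x}\ket{Q(x)} \xrightarrow{\text{SWAP}} \ket{Q(x)}\ket{x} \xrightarrow{O_{Q^{-1}}} \ket{Q(x)}\ket{x\oplus Q^{-1}(Q(x))} = \ket{Q(x)}\ket{0^m}.
\end{align}
Because $Q$ is a permutation, the ancilla returns to $\ket{0^m}$ on every computational basis state, and hence on every superposition, so the overall action (after discarding the ancilla) is precisely the unitary $I_Q:\ket{x}\mapsto\ket{Q(x)}$. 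An entirely symmetric construction, using $O_{P^{-1}}$ followed by SWAP and $O_P$, realizes $I_{P^{-1}(k,\cdot)}$.

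Given this, I would prove security by a direct black-box reduction. Let $\A$ be a quantum adversary that distinguishes $(I_{P(k,\cdot)},I_{P^{-1}(k,\cdot)})$ from $(I_{F},I_{F^{-1}})$ for $F$ a uniformly random permutation, with advantage $\epsilon$. Build $\B$ with oracle access to $(O_{Q},O_{Q^{-1}})$ (where $Q$ is either $P(k,\cdot)$ for $k\gets_R\K$ or a random $F\gets_R\P$): $\B$ runs $\A$, and whenever $\A$ queries its in-place oracle on some register, $\B$ allocates a fresh ancilla, runs the three-step simulation above, and returns the first register (tracing out the ancilla that has deterministically returned to $\ket{0^m}$). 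Since the simulation is exact on every basis state, the simulated experience is identical to $\A$ interacting with $(I_Q,I_{Q^{-1}})$; hence $\B$ inherits the advantage $\epsilon$, contradicting \dfn{qprp}.

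I do not anticipate a significant obstacle: the only subtlety is to verify that the ancilla uncomputation works coherently, and this follows immediately from $P(k,\cdot)$ being a bijection so that the correction $x\oplus P^{-1}(k,P(k,x))$ vanishes on every branch of the superposition. The reduction is tight (each in-place query costs two standard queries and one ancilla register of $m$ qubits), and it preserves quantum polynomial-time efficiency, so the resulting construction satisfies \dfn{i-qprp} under the same qPRP assumption.
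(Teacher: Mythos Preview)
Your proposal is correct and essentially identical to the paper's proof: both simulate $I_Q$ from $O_Q,O_{Q^{-1}}$ via the two-query uncomputation trick on an ancilla (the paper writes it as $O_P$ then $O_{P^{-1}}$ with the register roles exchanged, which is the same as your $O_Q$–SWAP–$O_{Q^{-1}}$), and both conclude by the obvious black-box reduction that wraps this simulation around the in-place distinguisher $\A$.
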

\begin{proof}
  It suffices to show that an in-place oracle of a permutation $P$ can be implemented using two queries to the standard oracles $O_P:\ket{x,y}\mapsto\ket{x,y\oplus P(x)}$ and $O_{P^{-1}}$.
  A query to the in-place oracle of $P$ can be computed in the following steps:
  \begin{align}\label{eq:i-qprp}
    \ket{x,0}\xmapsto{O_P} \ket{x,P(x)}
    \xmapsto{O_{P^{-1}}} \ket{x\oplus P^{-1}(P(x)),P(x)} = \ket{0,P(x)}.
  \end{align}
  Swapping the registers and removing the ancilla yields a mapping $\ket{x}\mapsto\ket{P(x)}$.

  Now we show that if $F:\K\times\bit^n\to\bit^n$ is a qPRP as defined in \dfn{qprp}, 
  then $F$ is an in-place qPRP.
  Suppose toward contradiction that a quantum adversary $\A$ which distinguishes $\ket{x}\mapsto\ket{F_k(x)}$ for $k\gets_R\K$ and $\ket{x}\mapsto\ket{P(x)}$ for uniform permutation $P$ with advantage $\epsilon$.
  Then we show there exists an adversary $\B$ which distinguishes $O_{F_k},O_{F_k^{-1}}$ from $O_P,O_{P^{-1}}$.
  The adversary $\B$ simulates the mapping in \eq{i-qprp} using two queries to either $O_{F_k},O_{F_k^{-1}}$ or $O_P,O_{P^{-1}}$ and runs $\A$.
  By the assumption, $\B$ distinguishes the oracles with advantage $\epsilon$.
\end{proof}
We define \textsl{in-place} $\dSSP$ as follows: 
\begin{definition}[In-place $\dSSP$]
\label{dfn:ipdssp}
Let $n\in\mathbb{N}$ and $f:\bit^n\to\bit^n$ %
be a random Simon's function.
Given access to the in-place $d$-shuffling oracle of $f$, the problem is to find the hidden shift of $f$.
\end{definition}

\begin{theorem}
\label{thm:cvqd_2p_opt}
  Let $d = \poly(n)$. In-place $\dSSP$ can be solved using $(d+3)$-CQ and $(d+3)$-QC schemes with access to the in-place $d$-shuffling oracle of $f$. Furthermore, for any $d'$-CQ or $d'$-QC schemes $\A$ with access to the in-place $d$-shuffling oracle of $f$ and $d'\leq d$, the probability that $\A$ solves the problem is negligible.
\end{theorem}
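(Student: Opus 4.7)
\emph{Plan.} The theorem has two directions and I would handle them separately.

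\emph{Completeness.} I would exhibit an explicit algorithm that runs Simon's algorithm using the in-place oracles in quantum depth $d+3$. Initialize three registers: $A_1$ and $A_2$ of $(d+2)n$ qubits each, and a single-qubit register $A_3$. Apply one layer of Hadamards to the first $n$ qubits of $A_1$ to prepare $\sum_{x\in\bit^n}\ket{x,0^{(d+1)n}}$ (depth~$1$). Then apply in sequence $U_{f_0}$ on $(A_1,A_2)$, the in-place oracles $U_{f_1},\ldots,U_{f_{d-1}}$ on $A_2$, and $U_{f_d}$ on $(A_2,A_3)$, yielding
\begin{align*}
\frac{1}{\sqrt{2^n}}\sum_{x\in\bit^n}\ket{x,0^{(d+1)n}}_{A_1}\ket{f(x)}_{A_2}\ket{b'_x}_{A_3}
\end{align*}
in depth $d+1$. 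A final layer of Hadamards on the first $n$ qubits of $A_1$ and on $A_3$, followed by measurement, samples a pair $(y,c)$. The key property is that because $H$ (from \clm{p-simon}) contains exactly one element of each Simon's pair $\{x_0,x_0\oplus s\}$, the bits $b'_{x_0}$ and $b'_{x_0\oplus s}$ differ, and a short interference calculation shows the sample satisfies $s\cdot y=c$. Collecting $O(n)$ independent samples and applying Gaussian elimination classically recovers $s$ with high probability. The total quantum depth per iteration is $1+(d+1)+1=d+3$, and the algorithm fits both $(d+3)$-CQ and $(d+3)$-QC schemes.

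\emph{Soundness.} I would adapt the CCL19 lower bound for $\dSSP$ (\thm{dssp_ccl19}) to the in-place model via a hybrid argument. The structural fact to exploit is that for any $d'$-depth algorithm with $d'\leq d$, the image set $S_d=f_{d-1}\circ\cdots\circ f_0(\bit^n)\subseteq\bit^{(d+2)n}$ cannot be meaningfully reached: each layer of in-place queries advances the register by only one step along the chain $f_0,\ldots,f_{d-1}$, so after fewer than $d+1$ layers any quantum state of the algorithm has only negligible amplitude on $S_d$. Consequently, every query to $U_{f_d}$ during the first $d$ layers is applied to a state on which $U_{f_d}$ is statistically indistinguishable from a uniformly random permutation-plus-bit unitary $R$ that is independent of $s$. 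A hybrid argument replacing $U_{f_d}$ by $R$ (using a compressed-oracle analysis for the intermediate permutations) then shows that the algorithm's success probability changes by at most a negligible amount, and in the $R$-hybrid the algorithm's view is independent of $s$, so it recovers $s$ with probability $O(2^{-n})$.

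\emph{Main obstacle.} The main obstacle is the lower bound. In-place queries do not deposit a record in an ancilla register, and $U_{f_i}^{-1}$ is not provided to the algorithm, so a black-box reduction to the standard-oracle lower bound of \cite{CCL19} loses a factor in depth. I would therefore redo the compressed-oracle analysis specifically for in-place permutations, tracking the evolution of the algorithm's query register through the chain $f_0,f_1,\ldots,f_{d-1}$ and bounding its overlap with $S_d$ after each layer. A further technicality is the extra leakage from the bit $b'_x$ produced by $U_{f_d}$, but because $H$ is determined by the intermediate permutations together with $s$, without $d$ sequential queries to the intermediate permutations the marginal of $b'_x$ is negligibly close to a uniform bit independent of $s$, so this leakage can be absorbed into the hybrid analysis.
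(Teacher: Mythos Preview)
Your completeness argument is correct and essentially the same as the paper's; your version is in fact slightly cleaner, since you apply Hadamards on the first register and the extra bit simultaneously and read off the relation $s\cdot y=c$ directly, whereas the paper measures the $f(x)$ register, then the extra bit, and states that the useful branch occurs with probability $1/2$ (in fact both branches are useful and reduce to exactly your formulation once measurements are deferred). Either way the depth count is $1+(d+1)+1=d+3$.

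For soundness your overall strategy---show the algorithm has negligible amplitude on $S_d$ after at most $d$ layers, replace $U_{f_d}$ by an $s$-independent unitary, conclude the view is independent of $s$---is sound in spirit, but the route differs from the paper's. The paper does \emph{not} redo a compressed-oracle analysis; instead it ports the one-way-to-hiding (O2H) machinery of \cite{CCL19} to the in-place model by changing only the definition of a ``shadow'' oracle. In the standard model the shadow sends hidden-set inputs to $\bot$; here that would break unitarity, so the paper instead lets the shadow agree with $\mathcal F$ outside the hidden set $\bar S^{(k)}$ and be an independent random permutation (or $2$-to-$1$ function) on $\bar S^{(k)}$. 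With this single change, Lemma~5.7 of \cite{CCL19} goes through verbatim, and then the entire hybrid-over-hidden-sets argument of \cite{CCL19} for both $d$-CQ and $d$-QC schemes is reused without modification. Your plan to ``redo the compressed-oracle analysis specifically for in-place permutations'' and to separately handle the leakage from the $b'$ bit would presumably also work, but it is considerably more labor: you would have to rebuild the layered hidden-set bookkeeping that \cite{CCL19} already provides, whereas the paper gets the whole lower bound essentially for free once the shadow is redefined.
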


Note that if we consider the models defined in Definition 3.8 and Definition 3.10 in~\cite{CCL19}, the quantum depth separation will be $d$ versus $d+1$. See~\rmk{model} for the detailed discussion.   

We present the proof of \thm{cvqd_2p_opt} in \app{cvqd-2p-opt}. 
By \cor{main} and \thm{cvqd_2p_opt}, we have a construction of $\CVQD_2(d,d+3)$.
\begin{corollary}
\label{cor:cvqd_2p_final}
For $d=\poly(n)$, there exists an unconditionally secure $\CVQD_2(d,d+3)$ which is sound as in \thm{cvqd_2p_formal}, \itm{cvqd_2p_1} and complete with $P_O$ and $V$ running in $O(2^n)$ time. %
Moreover, if qPRP (\dfn{qprp}) exists, then there exists $\CVQD_2(d,d+3)$ which is sound as in \thm{cvqd_2p_formal}, \itm{cvqd_2p_2} and complete with $P_O$ and $V$ running in polynomial time. %
\end{corollary}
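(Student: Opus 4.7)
The plan is to instantiate the framework of $\cor{main}$ with the in-place $\dSSP$ oracle separation (\thm{cvqd_2p_opt}) in direct analogy with the proof of $\thm{cvqd_2p_formal}$. Concretely, by $\thm{cvqd_2p_opt}$ the in-place $d$-shuffling Simon's problem is solvable by a $(d+3)$-CQ or $(d+3)$-QC scheme making at most $q=d+1$ sequential queries, and is not solvable by any $d'$-CQ or $d'$-QC scheme with $d'\leq d$. Setting the oracle ensemble $\O$ in $\cor{main}$ to be the uniform distribution over in-place $d$-shufflings of a random Simon's function, and the relation $R(O,w)=1$ iff $w$ is the hidden shift encoded by $O$, we immediately obtain a two-player protocol $\langle V,P_O,P_A\rangle$ with completeness $1-\alpha/3-e^{-\Omega(n)}$ and soundness $1-2\alpha/3$ for some $\alpha=1/\poly(d)$. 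Sequential repetition $O(\alpha^{-2}\log^2\lambda)$ times amplifies the gap to the standard $2/3$ vs.\ $1/3$ without increasing the quantum depth of $P_A$, since the same bounded-depth hybrid computation is reused across repetitions.

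For the first claim (unconditional security), I would simply note that an honest $P_O$ must implement the in-place unitaries of the shuffling, which for a uniformly random permutation on $\bit^{(d+2)n}$ requires $O(2^n)$ time by a counting argument; likewise $V$ samples the shuffling in $O(2^n)$ time and performs purely classical bookkeeping. The soundness bound applies against arbitrary (possibly unbounded) $P_O$ and against every $d$-CQ or $d$-QC scheme $P_A$, which is exactly \itm{cvqd_2p_1} of $\thm{cvqd_2p_formal}$.

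For the second claim (efficient verification under qPRP), the plan is to replace the random in-place $d$-shuffling with a pseudorandom one obtained as follows. First, the in-place oracles for $f_0,\dots,f_{d-1}$ are instantiated by independently keyed in-place qPRPs (whose existence from the standard qPRP assumption follows from the two-query reduction described after $\dfn{i-qprp}$). Second, the last function $f_d$, which factors as $f\circ f_{d-1}^{-1}\circ\cdots\circ f_0^{-1}$ for the underlying Simon's function $f$, is instantiated by replacing $f$ with a pseudorandom Simon's function as constructed in $\clm{p-simon}$. A straightforward hybrid argument (entirely analogous to $\cor{dssp_prp}$, using that the adversary $P_A$ is polynomial-time) shows that pseudorandom in-place $\dSSP$ remains hard for all $d'$-CQ and $d'$-QC schemes with $d'\leq d$, up to a negligible loss in success probability. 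Applying $\cor{main}$ to this efficiently samplable ensemble yields the desired protocol with polynomial-time $V$ and $P_O$, and soundness as in \itm{cvqd_2p_2} of $\thm{cvqd_2p_formal}$.

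The main thing to verify carefully is that $\cor{main}$, whose statement is given for general quantum-accessible oracles and relations, applies to the in-place oracle model: the protocol of $\sec{leash_protocol}$ treats $\O$ only through its quantum circuit $CC(\O)$, so as long as in-place shuffling unitaries are themselves quantum circuits (which they are, since permutations admit in-place unitary implementations and $f_d$ is defined to be a bijection), the transformation goes through unchanged. Beyond this sanity check, the argument is a direct composition of $\cor{main}$, $\thm{cvqd_2p_opt}$, and the pseudorandom instantiations of $\clm{p-simon}$ and $\dfn{i-qprp}$.
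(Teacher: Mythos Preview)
Your proposal is correct and follows essentially the same approach as the paper: instantiate \cor{main} with the in-place $\dSSP$ separation of \thm{cvqd_2p_opt} by setting $q=d+1$ and $\O$ the in-place shuffling oracle, then replace random permutations with in-place qPRPs and a pseudorandom Simon's function for the efficient version. Your write-up is in fact more detailed than the paper's proof, which is quite terse; the extra sanity check about \cor{main} applying to the in-place model and the explicit mention of sequential repetition are reasonable elaborations but not departures from the paper's route.
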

\begin{proof}%
Following the same idea, we set $q=d+1$, and we choose $\O$ to be the in-place shuffling oracle and $R$ to be in-place $\dSSP$. Then, \prot{query} separates $d+3$-depth quantum circuit from $d$-depth quantum circuit in the presence of polynomial-time classical computation by \cor{main} and \thm{cvqd_2p_opt}.

Again, $V$ and $\prover_O$ are not efficient since implementing random permutations is expensive. Following the same idea for proving the second result of \thm{cvqd_2p_formal}, we can make both $V$ and $\prover_O$ efficient using qPRP in the in-place oracle model. 
\end{proof}

\section{Certifying quantum depth from learning with errors}\label{sec:cqd-lwe}

In this section, we describe a protocol for certifying quantum depth using NTCFs.
First we define a single-prover protocol for certifying quantum depth.
\begin{definition}
\label{dfn:cvqd_2p_formal_single}
Let $d,d'\in \mathbb{N}$ and $d'>d$. 
A single-prover protocol $\CVQD(d,d')$ that separates quantum circuit depth $d$ from $d'$ consists of a classical verifier $V$ and a prover $P$ such that the following properties hold: %
\begin{itemize}
    
    \item \textup{\bf Completeness:} There exist an integer $\hat d\geq d$ and a $\hat d$-QC or $\hat d$-CQ scheme $P$ such that $\Pr[\langle V,P\rangle]\geq 2/3$.
    
    \item \textup{\bf Soundness:} For integer $\hat d\leq d$ and any $P$ that are $\hat d$-CQ or $\hat d$-QC schemes, %
    $\Pr[\langle V,P\rangle=accept]\leq 1/3$. 
\end{itemize}
\end{definition}

We give a prototocol based on a randomized encoding of NTCFs, and include the honest behavior of the prover in the description of the protocol (but the prover does not necessarily follow the instructions).

\begin{protocol}[{$\CVQD(d,d+d_0)$}]\label{prot:cqd-2}
  \begin{enumerate}
  \item The verifier samples $\{(k_i,t_i)\gets\Gen(1^\lambda): i\in[d+1]\}$ %
    and sends $(k_1,\ldots,k_{d+1})$. %

  \item %
    The prover performs the quantum process in \thm{lg21-completeness} and prepares the state
    \begin{align}\label{eq:claw}
      \frac{1}{(2|\X|)^{d/2}}\bigotimes_{i=1}^{d+1} \left(\sum_{b_i\in\bit,x_i\in\X}\ket{b_i}_{B_i}\ket{x_i}_{X_i}\ket{\hat f_{k_i}(b_i,x_i)}_{Y_i}\right),
    \end{align}
    and performs a standard basis measurement on the registers $Y_1,\ldots,Y_{d+1}$ to yield $y_1,\ldots,y_{d+1}$, which is sent to the verifier.
  \item %
    For $i=1\ldots d+1$, the verifier and the prover proceed as follows.
    \begin{enumerate}
    \item The verifier samples a random bit $c_i\in\bit$ and sends $c_i$ to the verifier. %
    
    \item %
      If $c_i=0$, the prover performs a standard basis measurement on $B_iX_i$; otherwise the prover performs a Hadamard basis measurement on $B_iX_i$.
      The prover then sends the outcome $w_i$ to the verifier.
      
    \item The verifier computes $V(t_i,y_i,c_i,w_i)=a$, where $V$ is defined in \eq{bcmvv-v}. If $a=0$, then the verifier rejects and aborts.
    \end{enumerate}
    If the verifier does not reject for each $i\in[d]$, then it accepts.
  \end{enumerate}
\end{protocol}

\begin{theorem}[Completeness]\label{thm:cqd-lwe-completeness}
  There exists a constant $d_0$ such that for security parameter $\lambda$ and polynomially bounded function $d$, a negligible function $\epsilon$, there is a prover which is a $(d_0+d(\lambda))$-QC scheme and succeeds with probability $1-\epsilon(\lambda)$.
\end{theorem}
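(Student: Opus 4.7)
The plan is to exhibit the natural honest prover strategy obtained by plugging \thm{lg21-completeness} into the BCMVV template, bound its circuit depth by layer accounting, and control its success probability using the NTCF per-round completeness together with a union bound.

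First I would have the prover invoke \thm{lg21-completeness} to prepare the state in \eq{claw} in constant quantum depth $d_0'$ independent of $d$, followed by standard-basis measurements of $Y_1,\ldots,Y_{d+1}$ to obtain $y_1,\ldots,y_{d+1}$. Since the $d+1$ tensor factors act on disjoint registers, the claw-state preparations execute in parallel with total depth $d_0'$, and the $Y$-measurements are absorbed into the final layer of this constant-depth phase. After this phase, registers $B_1X_1,\ldots,B_{d+1}X_{d+1}$ remain coherent while $y_1,\ldots,y_{d+1}$ are sent to the verifier.

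Next, for each round $i=1,\ldots,d+1$, upon receiving the classical challenge $c_i$, the prover applies a Hadamard layer to $B_iX_i$ if $c_i=1$ (and $\Id$ if $c_i=0$), then measures $B_iX_i$ in the standard basis to produce $w_i$. Each round contributes one $1$-depth quantum layer plus polynomial-time classical processing for formatting $w_i$ and receiving $c_{i+1}$; crucially the idle blocks $B_jX_j$ for $j>i$ have no gates acting on them during round $i$ and thus incur no extra depth. The cumulative circuit depth is $d_0'+(d+1)$, so setting $d_0:=d_0'+1$ yields a $(d_0+d)$-QC scheme as required by \dfn{dqc}.

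For success probability, I would argue round-by-round. By the efficient range superposition and trapdoor injective pair properties of the NTCF, combined with the perfect randomized encoding of $\hat f$ from \thm{lg21-soundness}, the state on $B_iX_i$ conditioned on outcome $y_i$ is within trace distance $\negl(\lambda)$ of the ideal claw state $\frac{1}{\sqrt{2}}(\ket{0,x_{i,0}}+\ket{1,x_{i,1}})$, where $(x_{i,0},x_{i,1})\in\R_{k_i}$ is the unique claw determined by $y_i$. If $c_i=0$, the standard-basis outcome $w_i=(b_i,x_{i,b_i})$ satisfies $\Chk_\F(k_i,b_i,x_i,y_i)=1$ with probability $1-\negl(\lambda)$; if $c_i=1$, the usual BCMVV analysis of the adaptive hardcore bit property shows the Hadamard-basis outcome $w_i=(u_i,d_i)$ satisfies $d_i\in G_{k_i,y_i}$ and $d_i\cdot(J(x_{i,0})\oplus J(x_{i,1}))=u_i$ except with negligible probability. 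A union bound over the $d+1=\poly(\lambda)$ rounds yields total failure probability $\negl(\lambda)$.

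The main obstacle I anticipate is the depth bookkeeping, specifically two points: (i) that the constant-depth preparation of \thm{lg21-completeness} composes across $d+1$ disjoint blocks without any depth growth, which follows immediately from the blocks sharing no qubits and the bounded fan-out gate assumption applying per block; and (ii) that classical post-processing between rounds leaves the idle coherent registers untouched, which is exactly what the $\A^i_c$ layers in \dfn{dqc} allow. Once these are settled, the probability analysis is essentially a per-round invocation of the standard NTCF completeness plus a union bound, with the $\negl(\lambda)$ perfect-encoding error absorbed into the final negligible function $\epsilon$.
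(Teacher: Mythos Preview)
Your proposal is correct and follows essentially the same approach as the paper: invoke \thm{lg21-completeness} for constant-depth state preparation, spend one quantum layer per challenge round (setting $d_0=d_0'+1$), and bound the failure probability by a union bound over the $d+1$ per-round NTCF completeness guarantees. One minor terminological point: in the $c_i=1$ case you attribute completeness to ``the adaptive hardcore bit property,'' but that is a hardness statement; the fact that the Hadamard outcome passes the equation test follows instead from the claw-state structure together with the density of $G_{k,b,x}$ in item 4(a) of the NTCF definition.
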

\begin{proof}
  By \thm{lg21-completeness}, the preparation of the state in \eq{claw} can be done in constant depth $d'$. Let $d_0=d'+1$.
  The prover performs standard basis measurement on the registers $Y_1,\ldots,Y_{d+1}$ to sample $y_1,\ldots,y_{d+1}$.
  For every $i\in[d+1]$, the prover measures the state in the $i^{th}$ coordinate in the standard basis if $c_i=0$ and in the Hadamard basis otherwise.
  There exists a negligible function $\mu$ such that if $c_i=0$, with probability at least $1-\negl(\lambda)$, performing a standard basis on $B_iX_i$ yields a preimage;
  if $c_i=1$, with probability at least $1-\mu(\lambda)$, performing a Hadamard basis measurement on $B_iX_i$ yields an outcome that passes the equation test $V(t_i,y_i,1,\cdot)$.
  By the union bound, the prover succeeds with probability 
  \begin{align}\nonumber
      \Pr[\text{success}]
      &\geq 1-\sum_i\Pr[\text{Prover fails the $i^{th}$ round}] \\
      &\geq 1-d(\lambda)\cdot\mu(\lambda)=1-\negl(\lambda)
  \end{align}
  for polynomially bounded function $d$.
\end{proof}

As an example, we present the quantum circuit for $d=2$ in \fig{cqd-completeness}.

  \begin{figure}[htbp]
\begin{center}
\begin{minipage}{0.8\linewidth}
\vspace{.2em}
    \Qcircuit @C=1em @R=.7em {
\lstick{\ket{\psi_1}} & \meter & \gatew{h_1(\cdot)}\cw
 & \control\cw & \cw \\
\lstick{\ket{\psi_2}} & \qw & \qw & \gate{H}\cwx & \qw & \meter & \gatew{h_2(\cdot)}\cw & \control\cw & \cw \\
\lstick{\ket{\psi_3}} & \qw & \qw & \qw & \qw & \qw & \qw & \gate{H}\cwx & \qw & \meter & \gatew{h_3(\cdot)}\cw & \cw
}
\vspace{.2em}
\end{minipage}
\end{center}
\caption{The quantum circuit for $d=2$. In each layer $i\in[3]$, the verifier's action can be viewed as a classical computation $h_i$, which takes the measurement outcome as input. If the measurement outcome is accepted, then it outputs a random string $c_i\gets_R\bit^{\ell(\lambda)}$; otherwise it outputs $\bot$ indicating rejection.
Each $\ket{\psi_i}$ is the post-measurement state of \eq{claw} after a standard basis measurement on $Y_1,Y_2,Y_3$ is performed.
By \thm{lg21-completeness}, the state $\ket{\psi_1}\otimes\ket{\psi_2}\otimes\ket{\psi_3}$ can be prepared in constant depth.}
\label{fig:cqd-completeness}
\end{figure}
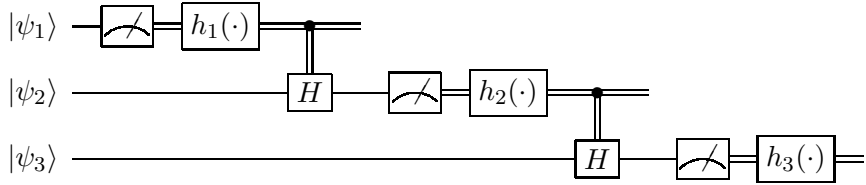

Next we show a lower bound on the quantum depth.
In each round $i$, let the prover's action on receiving challenge $c_i=c$ be an isometry $U_{i,c}$ acting on the quantum state $\ket{\psi_{i,T}}$, which depends on the previous transcript $T$, followed by a standard basis measurement.
We show that if there exists $i$ such that the quantum depth does not increase by 1 for round $i$, then there is a quantum adversary which breaks the adaptive hardcore bit property.

\begin{theorem}[Soundness]\label{thm:cqd-lwe-soundness}
  There exists a negligible function $\mu$ such that for sufficiently large $\lambda$, for any prover that is either a $d$-CQ or $d$-QC scheme succeeds with probability at most $\frac{3}{4}+\mu(\lambda)$.
\end{theorem}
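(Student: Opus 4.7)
The plan is to combine a depth-based ``collapse'' observation with a rewinding reduction to the adaptive hardcore bit property~\eq{ahb} of the underlying NTCF family, following the template of~\cite{Brakerski18}.

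\emph{Step 1 (locating a collapsed round).} First I would show that for every $d$-CQ or $d$-QC prover $P$ interacting in \prot{cqd-2}, there is a round $j^\star \in [d+1]$ (possibly randomized and dependent on the transcript through round $j^\star-1$) at which $P$'s response $w_{j^\star}$ can be written as a classical function of the challenge $c_{j^\star}$, a classical string $\sigma$ recording the transcript and $P$'s private classical state so far, and a quantum measurement outcome $M_{j^\star}$ obtained by a $c_{j^\star}$-independent measurement. The argument is a pigeonhole on the quantum depth budget: for $w_{j^\star}$ to depend on $c_{j^\star}$ through a genuine quantum operation, $P$ must insert at least one new quantum layer after $c_{j^\star}$ arrives, but only $d$ quantum layers are available across $d+1$ sequential challenges in the $d$-QC case, so some challenge $c_{j^\star}$ fails to trigger any reactive layer. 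In the $d$-CQ case, the $\Pi_{0/1}$ step at every invocation boundary already supplies a classical checkpoint, so a collapsed round exists trivially at the start of each fresh invocation. Either way, the two hypothetical responses $w_{j^\star}^{(0)}, w_{j^\star}^{(1)}$ associated with $c_{j^\star}\in\{0,1\}$ are deterministic functions of the same classical record $(\sigma, M_{j^\star})$, which is exactly what makes rewinding at round $j^\star$ classical.

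\emph{Step 2 (reduction and rewinding).} Suppose toward contradiction that $P$ succeeds with probability $\tfrac{3}{4} + \epsilon(\lambda)$ for a non-negligible function $\epsilon$. I would build a reduction $\mathcal{B}$ that violates \eq{ahb} for a random challenge key $k^\star \gets \Gen(1^\lambda)$. On input $k^\star$, $\mathcal{B}$ samples a guess $j^\star \gets_R [d+1]$, generates the remaining $(k_i, t_i)$ honestly via $\Gen$, plants $k^\star$ at position $j^\star$, sends $(k_1, \ldots, k_{d+1})$ to $P$, and simulates the verifier through round $j^\star - 1$ using the trapdoors $t_i$ for $i \neq j^\star$. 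At the start of round $j^\star$, $\mathcal{B}$ samples the classical record $(\sigma, M_{j^\star})$ once using the structure from Step~1, and computes both branches: $w_{j^\star}^{(0)} = (b, x_b)$ and $w_{j^\star}^{(1)} = (u, e)$. It then outputs the tuple $(b, x_b, e, u)$ as its guess for an element of $H_{k^\star}$; in the rare event of a catastrophic failure inside the simulation, it outputs a uniformly random sample from $H_{k^\star} \cup \bar H_{k^\star}$ so that the advantage on the failure branch cancels.

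\emph{Step 3 (advantage analysis and the main obstacle).} Let $p_0, p_1$ denote the conditional pass probabilities for $c_{j^\star} \in \{0,1\}$ given $(\sigma, M_{j^\star})$ and the transcript so far. Since both responses are deterministic in the classical record, $\Pr[\text{both branches pass} \mid (\sigma, M_{j^\star})] \geq \max(0, p_0 + p_1 - 1)$ by inclusion–exclusion. Because $P$'s overall success probability is upper bounded by its per-round pass probability, $\mathbb{E}[(p_0 + p_1)/2] \geq \tfrac{3}{4} + \epsilon$, which gives $\mathbb{E}[\max(0, p_0 + p_1 - 1)] \geq 2\epsilon$. Conditioning on $j^\star$ being the true collapsed round (probability at least $1/(d+1)$) and applying the standard packaging of~\cite{Brakerski18} to handle the fallback branch, $\mathcal{B}$ attains a bias $\bigl|\Pr[\mathcal{B}(k^\star) \in H_{k^\star}] - \Pr[\mathcal{B}(k^\star) \in \bar H_{k^\star}]\bigr| = \Omega(\epsilon/d)$, which is non-negligible for polynomially bounded $d$ and non-negligible $\epsilon$, contradicting \eq{ahb} and hence QLWE. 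The main obstacle is Step~1: rigorously pinning down what it means for a $d$-QC prover to admit a collapsed round with a classical rewindable record. I expect to handle this by transforming the prover into a canonical form in which each of its $d$ quantum layers is labeled with the latest challenge that genuinely influences it, applying pigeonhole to locate a challenge that influences zero layers, and arguing that the corresponding round can then be realized as classical post-processing of $c_{j^\star}$ together with a $c_{j^\star}$-independent measurement outcome.
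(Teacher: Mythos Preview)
Your proposal is correct and follows essentially the same approach as the paper: a pigeonhole on the $d$ available quantum layers against the $d{+}1$ sequential challenges locates a round whose response is computable from a challenge-independent classical record, and classical rewinding at that round yields both a valid preimage and a valid equation, contradicting the adaptive hardcore bit property. The paper organizes this into two separate reductions---one planting the challenge key at a zero-depth round (your reduction), and a second planting it at round $d{+}1$ and rewinding from an earlier classical checkpoint $\sigma_j$---whereas your single reduction at the guessed collapsed round $j^\star$ already handles both the $d$-QC and $d$-CQ cases once the pigeonhole is established; the extra $1/(d{+}1)$ loss from guessing $j^\star$ is harmless for polynomially bounded $d$.
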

\begin{proof}
  
  Suppose toward contradiction that there exists a prover $\prover$ which succeeds with probability $3/4+\epsilon$ for non-negligible $\epsilon$.
  First we show that the operation in each round must have non-zero quantum depth.
  If this is not the case for some round $i\in[d+1]$, then without loss of generality, $\prover$'s operation consists of
  \begin{enumerate}
      \item a standard basis measurement on some intermediate quantum state $\rho_{i}$ to yield an outcome $v_i$, followed by
      
      \item a classical algorithm $\A_i$ which on input $v_i$ and the challenge $c_i=c$, outputs the response $w_i\gets\A_i(c_i,v_i,T)$ for previous transcript $T$.
  \end{enumerate}
  Then there is a reduction $\A$ which uses $\prover$ to break the adaptive hardcore bit property.
  Since the probability of passing $d$ rounds is at least $3/4+\epsilon(\lambda)$, the probability of winning the first $i^{th}$ round is at least $3/4+\epsilon$.
  Thus $\A$, on input challenge key $k$, samples $k_1,\ldots,k_{i-1}, k_{i+1},\ldots,k_{d+1}$ and sets $k_i=k$ and  computes $v_i$.
  Using $v_i$, $\A$ runs $\A_i$ on $w_b\gets \A_i(b,v_i,T)$ for each $b\in\bit$.
  Let the probability that $w_b$ is a valid response be $p_b$.
  By the assumption, $p_0+p_1\geq 3/2+2\epsilon$.
  This implies that with probability $1-(1-p_0)-(1-p_1)=p_0+p_1-1\geq 1/2+2\epsilon$ \emph{both} $w_0$ and $w_1$ are valid.
  Thus with probability at least $1/2+2\epsilon$, $w_0$ is a valid preimage and $w_i$ is a valid equation, and thus the adaptive hardcore bit is broken.
  
  Since $\prover$ must have non-zero quantum depth in each round and it has total quantum depth $d$, 
  there must exist a round $j\in[d+1]$ such that $\prover$ must destroy all its coherence after receiving $c_j$, and continue answering the remaining rounds with an intermediate classical information $\sigma_j$.
  Now the reduction $\A'$ samples $c_1,\ldots,c_{d+1}\gets_R\bit$ and simulates the verifier in the protocol, and runs the following steps to break the adaptive hardcore bit property:
  \begin{enumerate}
      \item $\A'$, on receiving $k$, samples $k_1,\ldots,k_{d}$ and sets $k_{d+1}=k$, and runs $\prover$ to get $y_1,\ldots,y_{d+1}$ and some quantum information $\rho$.
  
      \item $\A'$ continues running $\prover$ on input $\rho$ and $c_1,\ldots,c_j$ to compute $\sigma_j$.
      
      \item $\A'$ continues running $\prover$ on input $\sigma_j,c_j,\ldots,c_{d},c_{d+1}=b$, to yield a response $w_b'$ in round $d$ for each $b\in\bit$.
      
      \item $\A'$ outputs $(y_{d+1},w_0',w_1')$ as the response.
  \end{enumerate}
  To analyze the performance of $\A'$, we apply the same idea as that for calculating the performance of $\A$.
  Let the probability that $w_b'$ be a valid response for round $d$ be $p_d'$.
  By the assumption, $p_0'+p_1'\geq 3/2+2\epsilon$.
  Thus both $w_0'$ and $w_1'$ are valid with probability at least $1-(1-p_0')-(1-p_1')\geq 1/2+2\epsilon$.
  This implies that the adaptive hardcore bit property is broken.
\end{proof}

\thm{cqd-lwe-completeness} and \thm{cqd-lwe-soundness} immediately imply the following theorem.
\begin{theorem}
  Assuming that LWE is hard for any $d$-CQ and $d$-QC schemes, \prot{cqd-2} satisfies the following conditions.
  \begin{itemize}
      \item \textup{\bf Completeness:} There exists a prover which is a  $(d_0+d)$-QC scheme and succeeds with probability at least $1-\negl(\lambda)$.
      
      \item \textup{\bf Soundness:}
      Every prover that are $d$-CQ and $d$-QC schemes succeed with probability at most $\frac{3}{4}+\negl(\lambda)$.
  \end{itemize}
\end{theorem}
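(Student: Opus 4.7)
The theorem is the direct combination of \thm{cqd-lwe-completeness} (completeness) and \thm{cqd-lwe-soundness} (soundness), both of which have been proven individually above. My plan is therefore to invoke each of these results separately and to verify that the hardness hypothesis --- namely that LWE is hard for $d$-CQ and $d$-QC schemes --- is exactly the assumption each of them requires.

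\textbf{Completeness.} The plan is to describe an honest prover explicitly and appeal to \thm{cqd-lwe-completeness}. First, I prepare the state in \eq{claw} in parallel across all $d+1$ copies using the randomized-encoding-based procedure of \thm{lg21-completeness}; this costs a fixed constant quantum depth, call it $d_0-1$. The $Y_i$-registers are then measured in the standard basis to yield $y_1,\ldots,y_{d+1}$, using a single additional layer of measurements. After that, in each round $i$ I have the prover apply a depth-1 gate on $B_iX_i$ (either the identity for $c_i=0$ or a Hadamard layer for $c_i=1$, which is a depth-1 Clifford) and measure. Since $c_i$ arrives only after the previous round's measurement outcome is sent, this interleaving realizes a $(d_0+d)$-QC scheme. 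Correctness in each round follows from the NTCF guarantees --- a preimage if $c_i=0$ and an equation in $G_{k_i,y_i}$ if $c_i=1$ --- up to negligible error. A union bound over the polynomially many rounds gives overall success probability $1-\negl(\lambda)$.

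\textbf{Soundness.} The plan is to directly invoke \thm{cqd-lwe-soundness}, whose proof has already been carried out by reduction to the adaptive hardcore bit property. The two cases are: (i) if some round has zero quantum depth, the prover's response is obtained by a classical algorithm $\A_i$ applied to the round's challenge $c_i$ and an intermediate classical string $v_i$; rerunning $\A_i$ once with $c_i=0$ and once with $c_i=1$ extracts both a preimage and an equation, and the assumed success probability $3/4+\epsilon$ implies $p_0+p_1 \ge 3/2+2\epsilon$, hence both responses are simultaneously valid with probability at least $1/2+2\epsilon$; (ii) otherwise the total quantum depth $d$ is strictly less than the number of rounds $d+1$, so by pigeonhole there is an intermediate round $j$ where the prover must fully decohere to a classical string $\sigma_j$, after which the reduction rewinds on $\sigma_j$ and runs the tail twice with $c_{d+1}\in\{0,1\}$ to extract both a preimage and an equation for $y_{d+1}$. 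Either case contradicts the adaptive hardcore bit property under QLWE, so the success probability is bounded by $3/4+\negl(\lambda)$.

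\textbf{Expected obstacle.} Since both component theorems are already in hand, the only thing that needs care is to ensure that the hypothesis ``LWE is hard for any $d$-CQ and $d$-QC schemes'' is sufficient: the reduction $\A'$ in the soundness argument performs the entire simulation (sampling keys, running the interaction, and rewinding the classical tail of $\prover$ from $\sigma_j$) using only the same interleaved classical-quantum resources as $\prover$ itself, augmented with polynomial-time classical computation. Hence the reduction is a $d$-CQ or $d$-QC scheme whenever $\prover$ is, and the hardness hypothesis applies directly. Beyond this bookkeeping, no further argument is needed.
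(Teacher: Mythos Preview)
Your proposal is correct and matches the paper's approach exactly: the paper's proof of this theorem is a single sentence stating that \thm{cqd-lwe-completeness} and \thm{cqd-lwe-soundness} immediately imply the result, and you do precisely this (with some additional, but harmless, recap of their content). Your ``Expected obstacle'' paragraph goes slightly beyond what the paper records, but it is a reasonable sanity check and does not alter the argument.
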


By sequential repetition, the completeness-soundness gap can be amplified to $1-\negl(\lambda)$.
We note that we do not know if the function evaluation can be done by using $(d_0+d)$-CQ schemes and leave it as an open question.

\bibliographystyle{plain}
\bibliography{../refs}

\appendix

\section{Proof of \thm{cvqd_2p_opt}}\label{app:cvqd-2p-opt}

\begin{lemma}
In-place $\dSSP$ can be solved by a ($d+3$)-depth quantum circuit with classical post-processing. 
\end{lemma}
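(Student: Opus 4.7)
The plan is to implement a ``no-uncomputation'' variant of Simon's algorithm that exploits both the in-place structure of $U_{f_1},\dots,U_{f_{d-1}}$ and the coset-indicator bit written by $U_{f_d}$. Concretely, I would use three registers: a length-$(d+2)n$ register $X$ whose first $n$ qubits will carry the Simon's preimage $x'$ (remaining qubits fixed to $\ket{0}$), a length-$(d+2)n$ work register $Y$, and a single-qubit register $C$ receiving the bit $b'=1_{x'\in H}$ (where $H\subset\bit^n$ is the index-$2$ subgroup from the proof of \clm{p-simon}, so exactly one element of each pair $\{x',x'\oplus s\}$ lies in $H$). The circuit is: (i) $H^{\otimes n}$ on the first $n$ qubits of $X$; (ii) $U_{f_0}$ on $(X,Y)$; (iii) $U_{f_1},\dots,U_{f_{d-1}}$ in sequence on $Y$; (iv) $U_{f_d}$ on $(Y,C)$; (v) in parallel, $H^{\otimes n}$ on the first $n$ qubits of $X$ and $H$ on $C$; then (vi) measure $X$ and $C$. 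Counting each oracle query as one layer, the total quantum depth is $1+1+(d-1)+1+1 = d+3$.

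To verify correctness I would track the state. Just after step (iv) the joint state is
\[
  \frac{1}{\sqrt{2^n}}\sum_{x'\in\bit^n}\ket{x'}_X\ket{f(x')}_Y\ket{1_{x'\in H}}_C,
\]
which, after pairing preimages, equals
\[
  \frac{1}{\sqrt{2^n}}\sum_{x'\in H}\ket{f(x')}_Y\otimes\bigl(\ket{x'}_X\ket{1}_C+\ket{x'\oplus s}_X\ket{0}_C\bigr).
\]
Applying $H^{\otimes n}\otimes H$ on $(X,C)$ and expanding, the coefficient of $\ket{y}_X\ket{c}_C$ is proportional to $1+(-1)^{y\cdot s}$ when $c=0$ and to $-(1-(-1)^{y\cdot s})$ when $c=1$; in both cases the amplitude vanishes unless $c=y\cdot s$. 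Hence the measurement in step (vi) produces $y\in\bit^n$ that is uniform, together with the deterministic bit $c=y\cdot s$.

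For the classical post-processing I would simply repeat the depth-$(d+3)$ circuit $O(n)$ times as independent runs (giving a $(d+3)$-CQ scheme), collect samples $\{(y_i,c_i)\}$ satisfying $y_i\cdot s=c_i$, and recover $s$ by Gaussian elimination over $\mathbb{F}_2$. The standard analysis of uniformly random linear equations then yields success probability $1-\negl(n)$, proving the lemma.

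The main subtlety --- and the step I expect to dominate the actual verification --- is the interference calculation in step (v): it is here that the single auxiliary bit $b'$ does the work of converting the 2-to-1 structure of $f$ into the Simon's linear constraint $y\cdot s=c$, obviating the $d$ layers of uncomputation required by the original $\dSSP$ algorithm. Everything else (depth accounting, oracle bookkeeping, and the linear-algebra post-processing) is routine.
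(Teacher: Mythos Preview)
Your proof is correct and follows essentially the same approach as the paper: the same $(d+3)$-layer circuit (initial Hadamards, the $d+1$ in-place oracle calls, final Hadamards) with the coset bit $b'$ doing the work of uncomputation. Your presentation is in fact a bit cleaner than the paper's, which inserts intermediate measurements on $Y$ and $C$ and only explicitly tracks the ``probability $1/2$'' branch; by deferring all measurements you directly obtain the constraint $c=y\cdot s$ with probability $1$, which is equivalent once both measurement outcomes on $C$ are accounted for.
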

\begin{proof}
The algorithm is as follows: 
\begin{align*}
    \sum_{x\in \mathbb{Z}_2^n}\ket{x}\ket{0}\ket{0} &\xrightarrow{U_{f_0}} \sum_{x\in \mathbb{Z}_2^n}\ket{x}\ket{f_0(x)}\ket{0}\nonumber\\
    &\xrightarrow{U_{f_1}} \sum_{x\in \mathbb{Z}_2^n}\ket{x}\ket{f_1(x)}\ket{0}\nonumber\\
    &\xrightarrow{U_{f_d}} \sum_{x\in \mathbb{Z}_2^n}\ket{x}\ket{f(x)}\ket{b(x)}\nonumber\\
    &\xrightarrow{\mbox{measure the second register}} \frac{1}{\sqrt{2}}\left(\ket{x}\ket{b(x)}+\ket{x\oplus s}\ket{b(x\oplus s)}\right)\\
    &\xrightarrow{\mbox{Apply H on the last qubit and measure}} \frac{1}{\sqrt{2}}\left(\ket{x}+\ket{x\oplus s}\right) \mbox{ with probability } 1/2. \\
    &\xrightarrow{H^{\otimes n}} \frac{1}{\sqrt{2^n}}\sum_{j\in \mathbb{Z}_2^n} ((-1)^{x\cdot j} + (-1)^{(x+s)j}) \ket{j}
\end{align*}
Then, the rest of the algorithm follows from Simon's algorithm. The additional two depths come from the Hadamard gates an the beginning and at the end of the above algorithm. 
\end{proof}

We sketch a proof for the lower bound in the following. 
\begin{lemma}
For any $d$-CQ or $d$-QC schemes $\A$, the probability that $\A$ solves in-place $\dSSP$ is negligible.
\end{lemma}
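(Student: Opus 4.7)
The plan is to adapt the lower bound argument of Chia, Chung and Lai (\thm{dssp_ccl19}) to the in-place oracle model. The central intuition that drove the original bound carries over unchanged: in order to extract any information about the underlying Simon's function $f$, an algorithm must eventually apply $U_{f_d}$ to a basis state supported on the exponentially small set $S_d=\{f_{d-1}\circ\cdots\circ f_0(x'):x'\in\bit^n\}$, and reaching $S_d$ coherently requires composing all $d+1$ oracles $U_{f_0},U_{f_1},\ldots,U_{f_d}$ in sequence. A $d$-CQ or $d$-QC scheme has only $d$ layers of quantum depth separating quantum computations from classical processing (with polynomially many parallel queries per layer), so no single branch of the computation can sequence $d+1$ oracle applications.

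The proof proceeds by a hybrid argument over the number of sequential in-place queries. Starting from the real experiment, I would replace $U_{f_d}$ by the trivial unitary that acts as the identity on the $Y$-register (equivalently, replace $f_d$ by the all-$\bot$ function on every input), and show that for every $d$-CQ or $d$-QC scheme $\A$, the statistical distance between $\A$'s transcript under the two oracles is negligible. The key technical claim is that after any sub-$(d+1)$-fold sequential composition of the $U_{f_i}$'s on an arbitrary input (held by a polynomially bounded algorithm with intermediate classical processing), the reduced state on the register that would be fed into $U_{f_d}$ has overlap at most $\negl(n)$ with the subspace spanned by computational basis vectors in $S_d$. This follows because $f_0,\ldots,f_{d-1}$ are independent uniformly random permutations on $\bit^{(d+2)n}$, while $S_d$ has size only $2^n$; a standard counting/compressed-permutation-oracle argument (analogous to the one used in CCL19, and also available in the qPRP setting via \clm{p-simon}) shows that the image $(f_{d-1}\circ\cdots\circ f_0)(\bit^n)$ is indistinguishable from a uniformly random set of size $2^n$ inside the exponentially larger codomain. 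Once $U_{f_d}$ is replaced by the identity, the entire execution is independent of the hidden shift $s$, so the output of $\A$ equals $s$ with probability at most $2^{-n+1}$.

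The two technical items that require some care, but ultimately pose no serious obstacle, are (i) ensuring that in-place oracles do not secretly encode more information than their standard counterparts, and (ii) tracking the depth budget across both CQ and QC schemes. For (i), note that each $U_{f_i}$ with $i<d$ is a permutation unitary on the computational basis, and its inverse is given by $U_{f_i^{-1}}$; thus in-place queries can be simulated up to a constant multiplicative overhead in depth by standard-oracle queries (and vice versa), so the impossibility transfers after rescaling the depth parameter. The only truly new unitary is $U_{f_d}$, which differs from the standard version only by the extra parity bit $b'=1_{x\in H}$; this bit is independent of $s$ and thus adds no leakage to the hybrid analysis. For (ii), one handles the $d$-CQ case by observing that each layer destroys all quantum coherence via a computational-basis measurement, so sequential composition across layers is effectively classical and still cannot compose $d+1$ quantum-coherent applications of the $U_{f_i}$'s; the $d$-QC case is handled by the standard light-cone argument on circuit depth.

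The main obstacle, and the step where the proof is not purely formulaic, is establishing the indistinguishability claim between the real and hybrid oracles under the in-place model in a way that uniformly covers both schemes. I expect this to be handled by invoking the compressed-permutation-oracle technique of Zhandry (already relied upon implicitly in CCL19) to show that any algorithm making fewer than $d+1$ sequential in-place queries produces the same output distribution on real and on hybrid oracles, up to negligible error. Given that, the conclusion that $\A$ outputs the hidden shift with at most negligible probability is immediate.
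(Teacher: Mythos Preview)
Your high-level picture matches the paper's, but there are two genuine gaps.

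First, your point (i) does not give what you claim. Simulating in-place queries by standard queries (or vice versa) with a constant-factor depth overhead does \emph{not} transfer the impossibility: if a $d$-depth in-place algorithm $\A$ is converted to a $cd$-depth standard-oracle algorithm, you would then need hardness of standard $\dSSP$ against $cd$-depth schemes, but \thm{dssp_ccl19} only gives hardness for depth $\leq d$ relative to a $d$-shuffling. Rescaling the algorithm's depth does not rescale the problem's depth parameter, so this reduction route is a dead end and should be dropped.

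Second, and more importantly, your hybrid argument skips the one technical point that actually needs work in the in-place model. The CCL19 lower bound is not a single ``replace $f_d$ by $\bot$'' hybrid: it proceeds via a sequence of hidden sets $\bar S^{(0)},\ldots,\bar S^{(d)}$ and a corresponding \emph{shadow} oracle at each layer, where the shadow of $\mathcal F$ outputs $\bot$ on $\bar S^{(k)}$ and agrees with $\mathcal F$ elsewhere; the O2H lemma (Lemma~5.7 in \cite{CCL19}) is applied layer by layer. Your outermost hybrid ``replace $U_{f_d}$ by the identity'' is unitary, but to prove the ``negligible overlap with $S_d$'' claim you still need the inductive O2H argument through the intermediate shadows, and for those the $\bot$-on-hidden-set definition is \emph{not unitary} in the in-place model. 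The paper's fix (\clm{ip_o2h}) is precisely to redefine each shadow $\mathcal G$ so that on the hidden set it is an \emph{independent random permutation} (for $f_1,\ldots,f_{d-1}$) or an independent 2-to-1 function with a different or no shift (for $f_d$); this keeps $U_{\mathcal G}$ unitary while preserving the property that $\mathcal G$ carries no information about $\mathcal F$ on $\bar S^{(k)}$. With this modified shadow, Lemma~5.7 of \cite{CCL19} goes through verbatim and the remainder of the CCL19 analysis applies unchanged. Your proposal never confronts this unitarity issue, and the appeal to a ``compressed-permutation-oracle technique'' is both a mischaracterization of CCL19 (which uses O2H, not compressed oracles) and not by itself a resolution of the shadow-unitarity problem.
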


To prove that in-place $\dSSP$ is hard for $d$-depth quantum circuit, we need to prove the oneway-to-hiding lemma (Lemma 5.7 in~\cite{CCL19}) for in-place shuffling oracle.

\begin{claim}[in-place oracle version of Lemma 5.7 in~\cite{CCL19}]
\label{clm:ip_o2h}
Let $\mathcal{F}$ be a $d$-shuffling of a random Simon's function $f$ and $\mathcal{U}:=\{U_{f_0},\dots,U_{f_d}\}$ be the corresponding in-place $d$-shuffling. Let $\mathbf{S} = \{\bar{S}^{(0)},\dots,\bar{S}^{(d)}\}$ be a sequence of hidden sets as defined in Definition 5.2 in~\cite{CCL19}. Then, for all $k = 0,\dots,d$, there exists a shadow $\mathcal{G}$ of $\mathcal{F}$ in $\bar{S}^{(k)}$ such that for any single-depth quantum circuit $U_c$, initial state $\rho$, and any binary string $t$, 
\begin{align*}
    |\Pr[\Pi_{0/1}\circ \mathcal{U}_{\mathcal{F}}U_c(\rho) = t] -  \Pr[\Pi_{0/1}\circ \mathcal{U}_{\mathcal{G}}U_c(\rho) = t]| &\leq B(\mathcal{U}_{\mathcal{F}}U_c(\rho),\mathcal{U}_{\mathcal{G}}U_c(\rho))\\ 
    &\leq \sqrt{2\Pr[\mbox{find }\bar{S}^{(k)}: U^{\mathcal{F}\setminus\bar{S}^{(k)}},\rho}].  
\end{align*}
Here, $B(\cdot,\cdot)$ is the Bures distance between quantum states, and $U^{\mathcal{F}\setminus\bar{S}^{(k)}}$ is defined in Definition 5.6 in~\cite{CCL19}. $\mathcal{U}_{\mathcal{G}}$ is the in-place oracle for $\mathcal{G}$. 
\end{claim}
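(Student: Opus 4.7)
The plan is to imitate the proof of Lemma 5.7 of \cite{CCL19}, which establishes the analogous one-way-to-hiding (O2H) statement for the standard additive oracle model, modifying each step to accommodate in-place unitaries. The first inequality of the claim — the trace distance between measurement outcomes bounded by the Bures distance of the pre-measurement states — is a general quantum-information fact ($\|\sigma-\tau\|_1 \le B(\sigma,\tau)$ up to a constant, and any measurement can only decrease distinguishability), so no modification is needed. It remains to bound the Bures distance itself by $\sqrt{2\Pr[\mathrm{find}]}$.

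First I would purify $\rho$ and reduce to a pure input $\ket{\rho}$; by joint concavity of fidelity and monotonicity of Bures distance under tracing out an ancilla, it suffices to treat the pure case. Setting $\ket{\phi} := U_c\ket{\rho}$, the elementary inequality $B(\ket{\psi_\F},\ket{\psi_\G})^2 = 2 - 2|\bkt{\psi_\F|\psi_\G}| \le 2 - 2\,\mathrm{Re}\bkt{\psi_\F|\psi_\G} = \|(\mathcal{U}_\F - \mathcal{U}_\G)\ket{\phi}\|^2$ reduces the problem to bounding this norm. Since the shadow $\G$ replaces only $f_k$ by some $g_k$ that agrees with $f_k$ off $\bar S^{(k)}$, we may factor
\[
  \mathcal{U}_\F - \mathcal{U}_\G \;=\; \Big(\bigotimes_{j \neq k} U_{f_j}\Big) \otimes_k (U_{f_k} - U_{g_k}),
\]
so the entire norm is controlled by the action of $(U_{f_k} - U_{g_k})$ on the $k$-th query register of $\ket{\phi}$.

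The core in-place-oracle observation is that $U_{f_k}\ket{x} = \ket{f_k(x)}$ and $U_{g_k}\ket{x} = \ket{g_k(x)}$ coincide on every computational basis state with $x \notin \bar S^{(k)}$, hence
$(U_{f_k} - U_{g_k}) = (U_{f_k} - U_{g_k})\,\Pi_{\bar S^{(k)}}$.
Because $U_{f_k}\Pi_{\bar S^{(k)}}$ and $U_{g_k}\Pi_{\bar S^{(k)}}$ are partial isometries of norm at most one, the triangle inequality gives $\|(U_{f_k} - U_{g_k})\ket{\phi}\| \le 2\|\Pi_{\bar S^{(k)}}\ket{\phi}\|$. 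By definition of the punctured oracle $U^{\F\setminus \bar S^{(k)}}$ (which measures the input register of the $k$-th oracle call), the quantity $\|\Pi_{\bar S^{(k)}}\ket{\phi}\|^2$ equals $\Pr[\text{find }\bar S^{(k)}]$, and tracking the standard O2H constant (using the tighter bound $\mathrm{Re}\bkt{\psi_\F|\psi_\G} \ge 1 - 2\|\Pi_{\bar S^{(k)}}\ket{\phi}\|^2$ from the decomposition $\ket{\phi} = \Pi_{\bar S^{(k)}}\ket{\phi} + \Pi_{\bar S^{(k)}}^\perp \ket{\phi}$ and the fact that $U_{f_k},U_{g_k}$ agree on the orthogonal part) yields precisely $B^2 \le 2\Pr[\text{find}]$, matching the additive-oracle constant of \cite{CCL19}.

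The main obstacle I anticipate is verifying that the algebraic identity $(U_{f_k} - U_{g_k})\,\Pi_{\bar S^{(k)}}^\perp = 0$ is compatible with the requirement that $U_{g_k}$ be unitary: the shadow $g_k$ must be a bijection of $\bar S^{(k)}$ onto $f_k(\bar S^{(k)})$, not merely any function that agrees with $f_k$ outside $\bar S^{(k)}$. For $k \in \{1,\dots,d-1\}$ this is automatic because both functions are permutations of a common domain. For $k = 0$, the extra target register is empty on the input so the same argument applies after pulling out the $\ket{0}$ ancilla. The boundary case $k = d$ is the most delicate: the parity bit $b'$ depending on membership in $H$ must also be preserved by the shadow so that $U_{g_d}$ remains unitary; this forces a mild strengthening of Definition 5.2 of \cite{CCL19} which, once made, preserves the hidden-set structure and lets the same single-line norm calculation go through. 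With the modified shadow in place, the remainder of the argument — in particular the reduction of $\|\Pi_{\bar S^{(k)}}\ket{\phi}\|^2$ to the success probability of the punctured oracle — is identical to the additive case and can be imported from \cite{CCL19} without change.
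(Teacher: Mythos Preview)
Your proposal is correct and takes essentially the same approach as the paper: both recognize that the only obstruction to importing Lemma~5.7 of \cite{CCL19} verbatim is that the original shadow (sending $x\in\bar S^{(k)}$ to $\bot$) is not unitary in the in-place model, and both fix this by redefining $\mathcal G$ so that $g_k$ remains a bijection on $\bar S^{(k)}$, after which the O2H calculation goes through unchanged. The paper's proof in fact gives strictly less detail than yours---it simply specifies the new shadow (a random permutation independent of $f_k$ on $\bar S^{(k)}$ for $k\leq d-1$, a different $2$-to-$1$ function for $k=d$) and then defers everything else to \cite{CCL19}---so your explicit norm computation and case analysis of $k=0,d$ are a welcome elaboration rather than a divergence.
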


\begin{proof}

We first define the shadow $\mathcal{G}$ of $\mathcal{F}$ in $\bar{S}^{(k)}$. The definition follows the same spirit of the shadow in~\cite{CCL19}. In the original definition, the shadow $\mathcal{G}$ maps $x\in \bar{S}^{(k)}$ to a special symbol $\bot$ and is consistent with $\mathcal{F}$ for $x\notin\bar{S}^{(k)}$. This definition of shadow does not work in the in-place oracle setting since the corresponding oracle is not a unitary. 

So, here, we define $\mathcal{G}$ as a random function satisfying the following: If $x\notin \bar{S}^{(k)}$, we let $\mathcal{G}(x) = \mathcal{F}(x)$; else if $x\in \bar{S}^{(k)}$, we let $\mathcal{G}(x)$ to be independent of $\mathcal{F}(x)$, and the in-place oracle of $\mathcal{G}$, $U_\mathcal{G}$, is still a unitary. In particular, for shadows corresponding to $f_1,\dots,f_{d-1}$ in $\bar{S}^{(k)}$, we pick another random permutation that is independent of $f_1,\dots,f_{d-1}$ in $\bar{S}^{(k)}$ and is consistent with $f_1,\dots,f_{d-1}$ for $x\notin \bar{S}^{(k)}$; for $f_d$, we can pick another 2-to-1 function that results in no hidden shift or a different hidden shift. 

Then, the rest of the proof directly follows from the proof for Lemma 5.7 in~\cite{CCL19}.  

\end{proof}

The rest of the analysis to show that in-place $\dSSP$ is hard for $d$-depth quantum circuit in the presence of classical computation follows the proof in~\cite{CCL19} by using the new shadow we construct in the proof for \clm{ip_o2h}. 
 
\end{document}